\RequirePackage[l2tabu,orthodox]{nag}
\documentclass[final,12pt,a4paper,abstracton]{scrartcl}

\usepackage{authblk}

\usepackage[utf8]{inputenc}
\usepackage[T1]{fontenc}
\usepackage{lmodern}
\usepackage{microtype}

\usepackage{amssymb}
\usepackage{amsmath}
\usepackage{bm}
\usepackage{mathtools}

\usepackage{enumitem}
\setlist[itemize]{label=$\circ$}
\setlist[description]{labelindent=\parindent,font=\normalfont\itshape}

\usepackage{tikz}
\usepackage{ifthen}
\usetikzlibrary{calc}

\usepackage[
  authordate,
  uniquelist=false,
  noibid,
  useprefix=true,
  backend=biber
]{biblatex-chicago}

\renewcommand\path[1]{{\sffamily\footnotesize\detokenize{#1}}}
\bibliography{references}

\usepackage[amsmath,hyperref,thmmarks]{ntheorem}

\theoremseparator{.}
\newtheorem{theorem}{Theorem}[section]
\newtheorem{lemma}[theorem]{Lemma}
\newtheorem{definition}[theorem]{Definition}
\newtheorem{proposition}[theorem]{Proposition}

\theoremstyle{nonumberplain}
\theoremheaderfont{\normalfont\itshape}
\theorembodyfont{\normalfont}
\theoremsymbol{\ensuremath{_\blacksquare}}
\newtheorem{proof}{Proof}

\usepackage{hyperref}
\usepackage{xcolor}
\hypersetup{
    colorlinks,
    linkcolor={blue!50!black},
    citecolor={blue!50!black},
    urlcolor={black}
}


\DeclarePairedDelimiter\paren{\lparen}{\rparen}
\DeclarePairedDelimiter\abs{\lvert}{\rvert}
\DeclarePairedDelimiter\set{\{}{\}}
\DeclarePairedDelimiterX\setc[2]{\{}{\}}{\,#1 \;\delimsize\vert\; #2\,}
\DeclarePairedDelimiterX\Prc[2]{\Pr\lparen}{\rparen}{\,#1 \;\delimsize\vert\; #2\,}

\newcommand{\cc}[1]{\ensuremath{\mathsf{#1}}}
\newcommand{\pp}[1]{\textsc{#1}}
\newcommand{\op}[1]{\ensuremath{\operatorname{#1}}}

\renewcommand{\P}{\cc{P}}
\newcommand{\NP}{\cc{NP}}
\newcommand{\W}{\cc{W}}
\newcommand{\FPT}{\cc{FPT}}

\newcommand{\poly}{\op{poly}}

\newcommand{\N}{\mathbb{N}}

\newcommand{\dotcup}{\mathbin{\dot\cup}}


\newcommand{\paths}{\op{Paths}}

\newcommand{\hypo}{\ensuremath{ \cc{coNP} \not\subseteq \cc{NP/poly}}}
\newcommand{\hypofails}{\ensuremath{\cc{coNP} \subseteq \cc{NP/poly}}}

\newcommand{\OR}{\op{OR}}

\newcommand{\setmatching}[1]{\pp{\mbox{$#1$-Set} Matching}}
\newcommand{\perfectsetmatching}[1]{\pp{Perfect \mbox{$#1$-Set} 
    Matching}}
\newcommand{\perfectdimensionalsetmatching}[1]{\pp{Perfect 
    \mbox{$#1$-Di}\-men\-sion\-al Matching}}
\newcommand{\factor}[1]{\pp{$#1$-Factor}}
\newcommand{\matching}[1]{\pp{$#1$-Matching}}

\let\epsilon\varepsilon

\begin{document}

\title{Kernelization of Packing Problems}
\author[1]{Holger Dell}%
\author[2]{Dániel Marx}%

\affil[1]{%
	Saarland Informatics Campus, Saarbrücken, Germany%
}

\affil[2]{%
	Institute for Computer Science and Control, Hungarian Academy of Sciences (MTA SZTAKI), Hungary%
}

\date{November 26, 2018\footnote{HD was partially supported by NSF grant 1017597 and by the
Alexander von Humboldt Foundation. DM received funding from the European Research Council (ERC) under the European Union's Horizon 2020 research and innovation programme
under grant agreements No.~280152, 725978. An extended abstract of this manuscript was presented at SODA 2012 (\href{https://dx.doi.org/10.1137/1.9781611973099.6}{doi:10.1137/1.9781611973099.6}).}}

\maketitle

\begin{abstract}
	Kernelization algorithms are polynomial-time reductions from a problem
	to itself that guarantee their output to have a size not exceeding
	some bound.  For example, \setmatching{d} for integers $d\geq 3$
	is the problem of finding a matching of size at least $k$ in a given
	$d$-uniform hypergraph and has kernels with $O(k^d)$ edges.
	Bodlaender et al.\ [JCSS 2009], Fortnow and Santhanam [JCSS 2011],
	Dell and Van Melkebeek [JACM 2014] developed a framework for proving
	lower bounds on the kernel size for certain problems, under the
	complexity-theoretic hypothesis that \cc{coNP} is not contained in \cc{NP/poly}.
	Under the same hypothesis, we show lower bounds for the kernelization
	of \setmatching{d} and other packing problems.

	Our bounds are tight for \setmatching{d}: It does not have
	kernels with $O(k^{d-\epsilon})$ edges for any $\epsilon>0$ unless the
	hypothesis fails.  By reduction, this transfers to a bound of
	$O(k^{d-1-\epsilon})$ for the problem of finding~$k$ vertex-disjoint
	cliques of size~$d$ in standard graphs.  Obtaining tight bounds for graph packing problems is challenging: We make first progress in this direction by showing non-trivial kernels with~$O(k^{2.5})$ edges for the problem of finding $k$ vertex-disjoint
	paths of three edges each. If the paths have~$d$ edges each, we improve the straightforward~$O(k^{d+1})$ kernel can be improved to a uniform polynomial kernel where the exponent of the kernel size is independent of~$k$.

	Most of our lower bound proofs follow a general scheme that we
	discover: To exclude kernels of size~$O(k^{d-\epsilon})$ for a problem
	in $d$-uniform hypergraphs, one should reduce from a carefully chosen
	$d$-partite problem that is still \NP-hard.  As an illustration, we
	apply this scheme to the vertex cover problem, which allows us to
	replace the number-theoretical construction by Dell and Van Melkebeek
		[JACM 2014] with shorter elementary arguments.
\end{abstract}

\section{Introduction}
Algorithms based on kernelization play a central role in
fixed-parameter tractability and perhaps this kind of parameterized
algorithms has the most relevance to practical computing.
Recall that a problem is {\em fixed-parameter tractable}
parameterized by some parameter~$k$ of the instance if it can be
solved in time \mbox{$f(k)\cdot n^{O(1)}$} for some computable function $f$
depending only on the parameter $k$ (see
\textcite{DBLP:books/sp/CyganFKLMPPS15,MR2001b:68042,FG06,MR2223196}).
A {\em kernelization algorithm}
for a problem $P$ is a polynomial-time algorithm that, given an instance
$x$ of the problem $P$ with parameter $k$, creates an equivalent instance~$x'$
of $P$ such that the size of~$x'$ and the parameter~$k'$ of the new instance are both bounded from above by a function
$f(k)$.
For example, the classical result of \textcite{NemhauserTrotter} can be
interpreted as a kernelization
algorithm that, given an instance of \pp{Vertex Cover}, produces an
equivalent instance on at most $2k$ vertices, which implies that it
has at most~$\binom{2k}{2}$ edges.  A kernelization algorithm can be
thought of as preprocessing that creates an equivalent instance whose
size has a mathematically provable upper bound that depends only on
the parameter of the original instance and not on the size of the
original instance.  Practical computing often consists of a heuristic
preprocessing phase to simplify the instance followed by an exhaustive
search for solutions (by whatever method available).  Clearly, it is
desirable that the preprocessing shrinks the size of the instance as
much as possible.  Kernelization is a framework in which the
efficiency of the preprocessing can be studied in a rigorous way.

One can find several examples in the parameterized complexity
literature for problems that admit a kernel with relatively small
sizes, i.e., for problems where $f(k)$ is polynomial in $k$.  There
are efficient techniques for obtaining such results for particular
problems (e.g.,
\textcite{ThomasseFVS,DBLP:journals/tcs/Guo09,DBLP:conf/wg/ChorFJ04,DBLP:journals/tcs/LokshtanovMS11,DBLP:journals/talg/Binkele-RaibleFFLSV12}).
Some
of these techniques go back to the early days of parameterized
complexity and have been refined for several years. In later developments,
general abstract techniques were developed that give us kernelization
results for several problems at once
\parencite{DBLP:conf/soda/FominLST10,DBLP:journals/jacm/BodlaenderFLPST16}.

\textcite{BDFH} developed a framework for
showing that certain parameterized problems are unlikely to have
kernels of polynomial size, and \textcite{FortnowSanthanam} proved the
connection with the complexity-theoretic hypothesis \hypo{}.
In particular, for several basic problems, such as finding a cycle of length $k$, a kernelization
with polynomial size would imply that the hypothesis is false.
The framework of \textcite{BDFH} has led to a long series of hardness
results showing that several concrete problems with various
parameterizations are unlikely to have kernels of polynomial size
\parencite{ChenFlumMuller07,DBLP:journals/tcs/BodlaenderTY11,DBLP:journals/talg/DomLS14,DBLP:journals/talg/Binkele-RaibleFFLSV12,KratschWahlstrom09a,DBLP:journals/disopt/KratschW13,DBLP:journals/toct/KratschMW16,DBLP:journals/siamdm/BodlaenderJK13,DBLP:journals/siamdm/BodlaenderJK14,DBLP:journals/disopt/MisraRS11}.

\textcite{DellVanMelkebeek} refined
the complexity results of~\textcite{FortnowSanthanam,BDFH} to prove
conditional lower bounds also for problems that do admit polynomial
kernels.
For example, they show that \pp{Vertex Cover} does not have kernels of
size $O(k^{2-\epsilon})$ unless the hypothesis \hypo{} fails.
Similar lower bounds are given for several other graph covering
problems where the goal is to delete the minimum number of vertices in
such a way that the remaining graph satisfies some prescribed
property.
Many of the lower bounds are tight as they match the upper bounds of
the best known kernelization algorithms up to an arbitrarily
small~$\epsilon$ term in the exponent.

In the present paper, we also obtain kernel lower bounds for problems
that have polynomial kernels, but the family of problems that we
investigate is very different: packing problems.
Covering and packing problems are dual to each other, but there are
significant differences in the way they behave with respect to
fixed-parameter tractability.
For example, techniques such as bounded search trees or iterative
compression are mostly specific to covering problems, while techniques
such as color coding are mostly specific to packing problems.
\pp{Feedback Vertex Set} is the problem of covering all cycles and has
kernels with $O(k^2)$ edges~\parencite{ThomasseFVS,DellVanMelkebeek}, while
its packing version is the problem of finding~$k$ vertex-disjoint
cycles and is unlikely to have polynomial
kernels~\parencite{DBLP:journals/tcs/BodlaenderTY11}.
Therefore, the techniques for understanding the kernelization
complexity of covering and packing problems are expected to differ
very much.
Indeed, the proofs of \textcite{DellVanMelkebeek} for the problem of
covering sets of size $d$ cannot be straightforwardly adapted to the
analogous problem of packing sets of size $d$.

Our contributions are twofold. First, we obtain lower bounds on the
kernel size for packing sets and packing disjoint copies of a
prescribed subgraph $H$. An example of the latter is the problem of
finding $k$ vertex-disjoint $d$-cliques in a given graph.
For packing sets, our
lower bound is tight, while determining the best possible kernel size
for graph packing problems with every fixed $H$ remains an interesting
open question.
Fully resolving this question would most certainly involve
significantly new techniques both on the complexity and the
algorithmic side.
To indicate what kind of difficulties we need to overcome for the
resolution of this question, we show kernels with~$O(k^{2.5})$ edges
for the problem of packing~$k$ vertex-disjoint paths on three edges.
Moreover, we show that the problem of packing~$k$ vertex-disjoint paths on $d$ edges admits a uniformly polynomial kernel, that is, a polynomial kernel where the exponent of the running time and the kernel size does not depend on $d$.

Secondly, the techniques used in our lower bounds are perhaps as
important as the concrete results themselves. We present a simple and
clean way of obtaining lower bounds of the form
$O(k^{d-\epsilon})$. Roughly speaking, the idea is to reduce from an
appropriate \mbox{$d$-partite} problem by observing that if we
increase the size of the universe by a factor of~$t^{1/d}$, then we
can conveniently pack together $t$ instances.
A similar effect was achieved by~\textcite{DellVanMelkebeek}, but it used a
combinatorial tool called the \emph{Packing Lemma}, whose proof uses nontrivial
number-theoretical arguments.
As a demonstration, we show that our scheme allows us to obtain the main
kernelization results of~\textcite{DellVanMelkebeek} with very simple elementary
techniques.
Furthermore, this scheme proves to be very useful for packing
problems, even though in one of our lower bounds it was easier to
invoke the Packing Lemma.
It seems that both techniques will be needed for a complete
understanding of graph packing problems.

\subsection{Results}

The matching problem in $d$-uniform hypergraphs,
\setmatching{d}, is to decide whether a given hypergraph has a
matching of size~$k$, i.e., a set of~$k$ pairwise disjoint hyperedges.
Correspondingly, the \perfectsetmatching{d} problem is to find
a \emph{perfect} matching, i.e., a matching with $k=n/d$ where~$n$ is
the number of vertices.
Finally, \perfectdimensionalsetmatching{d} is the problem
\perfectsetmatching{d} when restricted to $d$-partite hypergraphs,
i.e., hypergraphs partitioned into $d$ color classes so that each
hyperedge has at most one vertex per color class.
\textcite{FellowsEtal} give a kernelization algorithm for
\setmatching{d}, which is the most general of these three problems.
\begin{theorem}[\cite{FellowsEtal}]\label{thm:setmatching-kernels}
	The problem \setmatching{d} has kernels with $O(k^d)$ hyperedges.
\end{theorem}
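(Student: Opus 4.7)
The plan is to apply the classical Sunflower Lemma of Erdős and Rado: any family of more than $d!\cdot m^d$ sets each of size $d$ contains a \emph{sunflower} of size $m+1$, i.e., hyperedges $S_1,\dots,S_{m+1}$ whose pairwise intersections all equal a common \emph{core} $C$ and whose \emph{petals} $P_i := S_i\setminus C$ are pairwise disjoint. Combined with an appropriate reduction rule of the form ``if a sufficiently large sunflower is present, delete any one of its hyperedges,'' this immediately yields a kernel whose size matches the sunflower threshold.

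The key step is calibrating how many petals suffice to make the reduction rule safe. I would take $m+1 = (k-1)d+2$. To verify soundness, note that deleting a hyperedge never creates a matching, so only the forward direction needs an argument. Let $M$ be a matching of size $k$ in the original hypergraph and suppose it uses the deleted edge $S_{m+1}$. The other $k-1$ edges of $M$ cover a total of $(k-1)d$ vertices; since the petals $P_1,\dots,P_m$ are pairwise disjoint, each such vertex can lie in at most one petal, so these edges collectively meet at most $(k-1)d < m$ of them. Hence some petal $P_i$ is unhit by $M\setminus\{S_{m+1}\}$. Since the core $C$ is contained in $S_{m+1}$ and $M$ is a matching, $C$ is also disjoint from $M\setminus\{S_{m+1}\}$, so $S_i = C\cup P_i$ is disjoint from $M\setminus\{S_{m+1}\}$ and can replace $S_{m+1}$, witnessing a $k$-matching in the reduced instance.

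To finish, I would apply the reduction rule exhaustively. The Sunflower Lemma has a constructive proof running in polynomial time for fixed $d$, so finding a sunflower (or certifying none exists) is polynomial. When no sunflower of size $(k-1)d+2$ remains, the contrapositive of the Sunflower Lemma bounds the number of hyperedges by $d!\cdot\bigl((k-1)d+1\bigr)^d = O(k^d)$. A final clean-up step removes isolated vertices, so the total number of vertices is at most $d$ times the number of hyperedges and the kernel has $O(k^d)$ hyperedges as claimed. The main obstacle is purely the calibration described above; once $(k-1)d+2$ petals are seen to suffice for the replacement argument, the rest of the argument is mechanical.
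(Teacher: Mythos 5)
Your proof is correct and takes essentially the same route as the paper's (Appendix~\ref{appendix sunflower setmatching}): find a sunflower via the Erd\H{o}s--Rado lemma, delete one of its hyperedges, argue safety by the counting/replacement argument, and iterate until the sunflower lemma bounds the number of hyperedges by $O(k^d)$. The only difference is a slightly tighter calibration -- you use $(k-1)d+2$ petals and count only the vertices of the $k-1$ surviving edges, whereas the paper more crudely uses $dk+1$ petals and counts all $dk$ vertices of $M$; both give $O(k^d)$.
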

In Appendix~\ref{appendix sunflower setmatching}, we sketch a
straightforward but instructive proof of this fact using the sunflower
lemma of \textcite{ER60sunflower}.
Our main result is that the kernel size above is asymptotically
optimal under the hypothesis \hypo.
\begin{theorem} \label{thm:setpacking}
	Let $d \geq 3$ be an integer and let $\epsilon$ be a positive real.
	If~\hypo{} holds, then \perfectdimensionalsetmatching{d} does not have
	kernels with $O(k^{d-\epsilon})$ hyperedges.
\end{theorem}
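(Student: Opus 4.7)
The plan is to exhibit a \emph{weak cross-composition} in the sense of \textcite{DellVanMelkebeek,BDFH,FortnowSanthanam}: a polynomial-time reduction that takes $t$ instances of a suitable $d$-partite $\NP$-hard source problem $L$, each of size $s$, and produces a single \perfectdimensionalsetmatching{d} instance whose parameter $k'$ is bounded by $O(t^{1/d}\cdot \poly(s))$ and which is a YES-instance if and only if at least one source instance is YES. A hypothetical kernelization with $O(k^{d-\epsilon})$ hyperedges would then compose with this reduction to yield an OR-compression of an $\NP$-hard problem whose output has size $O(t^{1-\epsilon/d}\cdot \poly(s))$; for $t$ chosen sufficiently large compared to $s$, this contradicts $\hypo{}$ by the standard machinery.

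For the source $L$ a natural candidate is \perfectdimensionalsetmatching{d} itself, which is $\NP$-hard for $d\geq 3$. Let $n$ denote the common size of each source color class and let $r=\lceil t^{1/d}\rceil$, so that the $t$ inputs can be indexed by vectors $\vec{i}\in[r]^d$ after padding with trivial no-instances. The combined instance has $d$ color classes $V_1,\dots,V_d$ of size $rn=k'$, each partitioned into $r$ blocks $V_j^1,\dots,V_j^r$ of size~$n$; the input indexed by $\vec{i}$ is embedded into the \emph{slot} $V_1^{i_1}\times\cdots\times V_d^{i_d}$ by identifying its $j$-th color class with $V_j^{i_j}$ and copying its hyperedges into the combined hypergraph. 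I would then add \emph{filler} hyperedges, possibly together with auxiliary selector vertices, so that whenever some instance $\vec{i}^*$ is a YES its matching can be extended by fillers to a perfect matching on the whole combined universe, while no perfect matching of the combined instance exists unless at least one source instance is YES.

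The main difficulty is the design and correctness of the filler structure. Because each block $V_j^i$ appears in $r^{d-1}$ different slots, any perfect matching of the combined instance is characterized by a nonnegative integer $d$-tensor $(x_{\vec{i}})_{\vec{i}\in[r]^d}$ whose sum along every axis-parallel hyperplane equals~$n$; a naive choice of fillers admits many such tensors, allowing ``hybrid'' matchings to fake a solution by combining partial matchings in several slots with filler edges. Eliminating these hybrid solutions is the crux of the proof, and I expect to resolve it either by tailoring the $d$-partite source $L$ so that its partial matchings are combinatorially incompatible with any useful filler pattern, in the spirit of the general scheme announced in the introduction, or by invoking the number-theoretic Packing Lemma of \textcite{DellVanMelkebeek} to obtain fillers whose slot patterns force exactly one source instance to contribute a full matching. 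Once the composition is verified, Theorem~\ref{thm:setpacking} follows from the Dell--Van Melkebeek framework in the standard way.
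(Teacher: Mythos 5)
Your proposal correctly identifies the high-level shape of the argument: apply the Dell--Van Melkebeek kernel-proxy lemma, source $L=$ \perfectdimensionalsetmatching{d} itself, index the $t$ instances by vectors in $[t^{1/d}]^d$, and embed instance $\vec i$ into the slot $V_1^{i_1}\cup\dots\cup V_d^{i_d}$ of a universe with $d$ color classes each split into $t^{1/d}$ blocks of size~$n$. You also correctly put your finger on the crux: some structure must force every perfect matching of the composed instance to confine its ``non-filler'' edges to a single slot, and naive fillers leave room for hybrid matchings that cover each block a fractional amount (your tensor observation).

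However, that crux is exactly where your proposal stops, and it is the only genuinely nontrivial step. You list two possible ways out --- redesigning $L$ so hybrids cannot form, or invoking the Packing Lemma --- and leave both unexecuted. Neither sketch is on track. The paper does \emph{not} resolve the hybrid problem by choosing a cleverer source $L$ (the source remains \perfectdimensionalsetmatching{d}), and it does \emph{not} use the Packing Lemma for this theorem (that lemma appears only in the \factor{H} lower bound, Theorem~\ref{thm:Hfactor}). Instead, the paper's key technical device is a bespoke \emph{selector gadget} (Lemma~\ref{lem:setpacking-choice}): for each color class $a$, one attaches a $d$-partite hypergraph $S_a$ on $O(d\,t^{1/d}n)$ private vertices whose perfect matchings, restricted to $\bigcup_b V_{a,b}$, can cover exactly the complement of one block $V_{a,b_a}$ and nothing else. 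This $S_a$ is built by chaining $m-1$ ``switch'' gadgets, each a $d$-uniform cycle of length $2s$ with alternating intersection sizes $1$ and $d-2$, which has exactly two perfect matchings (the odd-indexed and even-indexed edges). The all-or-nothing behaviour of the chain, proved by a small induction on the soundness and completeness of the individual switches, is what rules out the hybrid matchings you are worried about: after removing the gadget edges, exactly one block survives per color class, and those $d$ surviving blocks must be perfectly matched by the edges of a single $G_{\vec b}$.

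So the gap is concrete: you have described the reduction scaffold but supplied no mechanism, and no correctness argument for a mechanism, that eliminates hybrid solutions. To complete the proof along your outline you would need to (i) construct a $d$-partite selector hypergraph with the exact all-or-nothing property above, keeping its vertex count $O(t^{1/d}\cdot\poly(s))$ so that $k'$ stays within budget, and (ii) prove both that a YES source instance extends to a perfect matching and that any perfect matching of the composed hypergraph projects onto a perfect matching of one source instance. Both directions depend essentially on the two properties (unique completion and impossibility of partial blocking) of the gadget, which your proposal does not establish.
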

Since \perfectdimensionalsetmatching{d} is the special case of
\setmatching{d} where $k=n/d$ and the input graphs are $d$-partite,
the lower bound applies to that problem as well and it
shows that the upper bound in Theorem~\ref{thm:setmatching-kernels} is
asymptotically tight.

A particularly well-studied special case of set matching is when the
sets are certain fixed subgraphs (e.g., triangles, cliques, stars,
etc.) of a given graph.  We use the terminology of
\textcite{Yuster}, who surveys graph theoretical properties of such
graph packing problems.  Formally, an {\em $H$-matching} of size $k$
in a graph $G$ is a collection of $k$ vertex-disjoint subgraphs of $G$
that are isomorphic to $H$.  The problem \pp{$H$-Matching} is to find
an $H$-matching of a given size in a given graph. Both problems are
\NP-complete whenever~$H$ contains a connected component with more
than two vertices~\parencite{KirkpatrickHell} and is in $\P$ otherwise.

The kernelization properties of graph packing problems received a lot
of attention in the literature (e.g.,~\cite{%
	moser_packing,
	FellowsTriangles,
	PietroSloper,FernauRaible,WangEtal,
	MathiesonEtal
}).
The problem \matching{H} can be expressed as a \setmatching{d}
instance with $O(k^{d})$ edges (where $d\coloneqq|V(H)|$) and therefore
Theorem~\ref{thm:setmatching-kernels} implies a kernel of size~$O(k^{d})$.
In the particularly interesting special case when $H$ is
a clique $K_d$, we use a simple reduction to transfer the above
theorem to obtain a lower bound for \pp{$K_d$-Matching}.
\begin{theorem} \label{thm:cliquepacking}
  Let $d\geq 3$ be an integer and let $\epsilon$ be a positive real.
  If \hypo{} holds, then \matching{K_d} does not have kernels with $O(k^{d-1-\epsilon})$ edges.
\end{theorem}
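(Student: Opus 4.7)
The plan is to derive Theorem~\ref{thm:cliquepacking} from Theorem~\ref{thm:setpacking} via a parameter-preserving, linear-size reduction from \perfectdimensionalsetmatching{d-1} to \matching{K_d}. The construction attaches a fresh ``tag'' vertex to each hyperedge so that the $K_d$'s in the output graph biject with the hyperedges of the input hypergraph, with the $(d-1)$-partiteness of the source used to rule out spurious cliques.

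Given an instance $H$ of \perfectdimensionalsetmatching{d-1} with color classes $V_1, \ldots, V_{d-1}$ and $m$ hyperedges, I construct a graph $G$ on vertex set $V(H) \cup \{y_e : e \in E(H)\}$, and for each hyperedge $e = \{v_1, \ldots, v_{d-1}\}$ I make $\{y_e\} \cup e$ into a clique of size $d$. The resulting graph satisfies $|V(G)| + |E(G)| = O(m)$, and the target parameter is $k_G = k_H$.

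For correctness, the forward direction sends a matching $e_1, \ldots, e_k$ of $H$ to the vertex-disjoint cliques $\{y_{e_i}\} \cup e_i$. For the converse, two observations suffice. First, tag vertices form an independent set, so any $K_d$ in $G$ contains at most one $y_e$; and if it contains $y_e$ it must equal $\{y_e\} \cup e$, since $y_e$ has exactly the $d-1$ neighbors in $e$. Second, a tag-free $K_d$ would lie inside $V(H)$, which splits into only $d-1$ color classes, so by pigeonhole two of its $d$ vertices share a class, but same-color vertices never share a hyperedge of $H$ and are therefore non-adjacent in $G$---a contradiction. Hence $K_d$-matchings of size $k$ in $G$ biject with hyperedge matchings of size $k$ in $H$.

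Composing the reduction with a hypothetical kernelization of \matching{K_d} producing $O(k^{d-1-\epsilon})$ edges then yields a polynomial compression of \perfectdimensionalsetmatching{d-1} of bit-size $\tilde O(k^{(d-1)-\epsilon})$, contradicting Theorem~\ref{thm:setpacking} at index $d-1$ under \hypo{}. This settles the case $d \geq 4$. The main obstacle is the triangle case $d = 3$: the natural source \perfectdimensionalsetmatching{2} degenerates to bipartite perfect matching, so Theorem~\ref{thm:setpacking} provides no lower bound to propagate. A separate argument is required---either a direct OR-cross-composition tailored to triangle packing, or a reduction from \perfectdimensionalsetmatching{3} using a gadget that realises each hyperedge as a unique triangle while suppressing the phantom triangles produced by three hyperedges that pairwise share a vertex. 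Keeping the gadget's overhead to $O(1)$ per hyperedge, so as to preserve both the parameter and the total instance size, is the critical technical step that lets the target exponent $2-\epsilon$ survive the transfer.
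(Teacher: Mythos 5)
Your tag-vertex reduction from \perfectdimensionalsetmatching{(d-1)} to \matching{K_d} is exactly the paper's Lemma~\ref{lem:cliquepacking}, and your correctness argument (tag vertices form an independent set, a tag-free $K_d$ is killed by $(d-1)$-partiteness via pigeonhole) matches the paper's reasoning. There are, however, two places where the proposal does not close.

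First, the final step as written does not quite follow. Composing a parameter-preserving reduction from $A$ to $B$ with a hypothetical kernelization of $B$ produces a \emph{compression} of $A$ into $B$-instances, not a kernel of $A$. Theorem~\ref{thm:setpacking} as stated excludes only \emph{kernels} of \perfectdimensionalsetmatching{(d-1)}, so invoking it ``at index $d-1$'' leaves a gap. The clean way to finish --- and what the paper actually does --- is to compose the two \emph{reductions}: take Lemma~\ref{lemma:setpacking-reduction}, which maps $\OR(\perfectdimensionalsetmatching{(d-1)})$ to a single \perfectdimensionalsetmatching{(d-1)} instance on $t^{1/(d-1)}\poly(s)$ vertices, then apply your tag-vertex reduction to land in \matching{K_d} with the parameter still bounded by $t^{1/(d-1)}\poly(s)$, and then invoke Lemma~\ref{lemma:kernel-proxy} directly with exponent $d-1$.

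Second, and more substantively, the case $d=3$ is a genuine gap in your proposal. You correctly diagnose that your source problem degenerates (\perfectdimensionalsetmatching{2} is in $\P$), and you propose two unrealized directions. But no new gadget is needed: for $d=3$ the target bound is $O(k^{2-\epsilon})$, and that is precisely what Theorem~\ref{thm:Hfactor} --- which covers every connected $H$ on at least three vertices, in particular $H=K_3$ --- already gives for \factor{K_3}, a special case of \matching{K_3}. The paper simply cites that theorem for $d=3$.
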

An upper bound of size $O(k^{d})$ follows for \matching{K_d} from
Theorem~\ref{thm:setmatching-kernels}.  This does not quite match our
conditional lower bounds of $O(k^{d-1-\epsilon})$, and it is an
interesting open problem to make the bounds tight.

The \factor{H} problem is the restriction of \matching{H} to
the case $k=n/d$, that is, the task is to find an $H$-matching that
involves all vertices.
For every fixed graph~$H$, the problem \factor{H} has kernels with $O(k^2)$ edges, for the trivial reason that an $n$-vertex instance has size $O(n^2)$ and we have $k=\Theta(n)$ by the definition of \factor{H}.
We show that this bound is tight for every \NP-hard \factor{H} problem.
\begin{theorem} \label{thm:Hfactor}
	Let $H$ be a connected graph with $d \geq 3$ vertices and $\epsilon$
  a positive real.
  If \hypo{} holds, then \factor{H} does not have kernels with $O(k^{2-\epsilon})$ edges.
\end{theorem}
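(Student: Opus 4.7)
The plan is to prove the lower bound via a weak $2$-cross-composition from an \NP-hard source problem to \factor{H}, which by the Dell--Van Melkebeek framework rules out $O(k^{2-\epsilon})$ edge kernels under the hypothesis $\hypo$. Since \factor{H} has parameter $k = n/d$, the target is to combine $t = q^2$ source instances of size $s$ each into a single graph $G^*$ on at most $n^* = O(q \cdot \poly(s))$ vertices such that $G^*$ has an $H$-factor if and only if at least one source instance is a \textsc{Yes}-instance.

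The natural source is \factor{H} itself, which is \NP-hard by the Kirkpatrick--Hell theorem since $H$ is connected with $d \geq 3$ vertices. Given instances $I_{a,b}$ for $(a,b) \in [q]^2$, each on $n_0$ vertices, I would build $G^*$ by introducing $q$ row-selector gadgets $R_1, \ldots, R_q$ and $q$ column-selector gadgets $C_1, \ldots, C_q$, each of size $\poly(n_0)$, together with one instance gadget per pair $(a,b)$ attached to $R_a$ and $C_b$. The selectors are to be designed so that exactly one row index $a^*$ and one column index $b^*$ can be \emph{active}: the vertices of every inactive row or column are absorbed by local $H$-copies inside the corresponding selector, while on the active row and column the only way to complete the $H$-factor is to use the gadget for $I_{a^*,b^*}$ together with an actual $H$-factor of $I_{a^*,b^*}$. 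Completeness (a \textsc{Yes}-instance yields an $H$-factor of $G^*$) is immediate from the construction, and soundness reduces to checking that the selectors rigidly enforce the ``choose one row, one column'' logic.

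The main obstacle is designing these selector and instance gadgets so that they work uniformly for every connected $H$ with at least three vertices while keeping the vertex count at $O(q \cdot \poly(n_0))$ rather than $\Theta(q^2 \cdot \poly(n_0))$: a naive per-cell gadget would blow up to $\Omega(q^2)$ vertices, and only a careful sharing of structure between rows and columns yields the subquadratic count required for a $2$-cross-composition. A case analysis on the structure of $H$---for instance, whether $H$ contains a triangle, a long induced path, or a vertex of degree $d-1$---is likely needed to implement the gadgets in a uniform manner. Alternatively, since the authors mention that the Packing Lemma of Dell--Van Melkebeek was easier to invoke in one of their lower bounds, the direct selector construction could be replaced by a Packing Lemma--based composition that uses a family of almost-disjoint subsets of row and column indices to merge the instances so that any $H$-factor implicitly selects at most one. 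Once either construction is worked out, the output parameter $k^* = O(q \cdot \poly(n_0)) = t^{1/2} \cdot \poly(s)$ is precisely what a weak $2$-cross-composition requires, and the theorem follows.
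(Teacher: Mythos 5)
You correctly identify the overall shape of the argument — an OR-composition from \factor{H} to itself producing instances on $\sqrt{t}\cdot\poly(s)$ vertices, fed into Lemma~\ref{lemma:kernel-proxy} with $d=2$ — and you correctly locate the hard part: a naive per-cell instance gadget costs $\Theta(q^2)$ vertices, so row and column structure must be shared. But the proposal stops exactly at the point where the real work begins and never resolves either of the two genuine obstacles.

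The first gap is the one you name but don't close: you never specify how to share structure so that the $t$ instances live on only $O(\sqrt{t}\cdot\poly(s))$ vertices. Your pivot to ``a family of almost-disjoint subsets of row and column indices'' gestures at the Packing Lemma, but misses the parameter that actually makes the construction go through: the dimension of the Packing Lemma graph should be $p=\chi(H)$, not $2$. The paper invokes Kirkpatrick--Hell to assume all source instances are $\chi(H)$-partite, applies the Packing Lemma (Lemma~\ref{lem:packing}) with $d=2$ to get a $p$-partite graph $P$ on $O(p\cdot\sqrt{t}^{1+o(1)})$ vertices whose edge set decomposes into exactly $t$ cliques $K_1,\dots,K_t$ on $p$ vertices with no other $p$-cliques, and embeds $G_i$ on $V(K_i)\times[n]$. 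When $\chi(H)=2$ this degenerates to your two-dimensional row/column picture (a complete bipartite graph on $\sqrt{t}+\sqrt{t}$ vertices whose $t$ edges index the instances); but for $\chi(H)\ge 3$ a two-dimensional grid has no natural way to host a $p$-chromatic $H$-factor instance, and only the $p$-partite Packing Lemma graph provides the right combinatorial scaffold.

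The second and more dangerous gap is soundness, which your sketch dismisses with ``reduces to checking that the selectors rigidly enforce the choose-one-row-one-column logic.'' Even granting good selectors, a copy of $H$ inside the composed graph can a priori straddle several embedded instances once their vertex sets overlap; the selectors alone do not prevent this. The paper rules it out with a chromatic-number argument: any $H$-copy outside the gadgets must have a vertex in each of the $p$ parts and an edge between every pair of parts, so its projection onto $P$ is a $p$-clique; property (ii) of the Packing Lemma then forces this clique to be one of the $K_i$, confining the $H$-copy to a single instance $G_i$. There is no analogue of this argument in your sketch, and without it the ``soundness'' direction is unsupported. Finally, the actual gadgets that the proof needs are the selector gadget of Lemma~\ref{lem:setpacking-choice} (a $p$-partite $p$-uniform hypergraph switching device) and the hyperedge gadget of Lemma~\ref{lem:Hpacking-hyperedge} that converts each $p$-hyperedge to an $H$-factor local gadget; your suggested case analysis on whether $H$ has a triangle, long path, or a high-degree vertex is not the mechanism the paper uses and is not obviously sufficient for a uniform construction.
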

Thus, it is unlikely that instances of \factor{H} can be reduced in polynomial time to sparse instances.
The proof of this result is based on the Packing Lemma of
\textcite{DellVanMelkebeek}.

Obtaining tight bounds for \matching{H} in general seems to be a challenging problem.
As Theorem~\ref{thm:cliquepacking} shows in the case of cliques, the lower bound of $O(k^{2-\epsilon})$ for \matching{H} implied by Theorem~\ref{thm:Hfactor} is not always tight.
On the other hand, the upper bound of~$O(k^{|V(H)|})$ is not always tight either:
A simple high-degree reduction rule shows that if~$H$ is a star of arbitrary size, then kernel with $O(k^2)$ edges are possible, which is tight by Theorem~\ref{thm:Hfactor}.
Furthermore, if $H=P_3$ is a path on 3 edges, then a surprisingly
nontrivial extremal argument gives us the following.
\begin{theorem}\label{th:3path}
	\matching{P_3} has kernels with $O(k^{2.5})$ edges.
\end{theorem}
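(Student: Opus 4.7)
The plan is to combine polynomial-time reduction rules with an extremal counting argument. Since the naive sunflower-based approach (as sketched for Theorem~\ref{thm:setmatching-kernels}) gives only $O(k^{4})$ edges when applied to $P_3$-matching, a more delicate analysis is required.

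First, compute a maximal $P_3$-matching $M$ greedily. If $|M|\geq k$, output a trivial \textsc{Yes}-instance. Otherwise $|V(M)| < 4k$, and, letting $V_R := V(G)\setminus V(M)$, the induced subgraph $G[V_R]$ contains no $P_3$ (as a subgraph) and must therefore decompose into components that are stars or triangles; in particular $|E(G[V_R])|\leq |V_R|$. The edges of $G$ split into three parts: the $O(k^2)$ edges inside $V(M)$, the edges inside $V_R$, and the bipartite edges between $V(M)$ and $V_R$~--- the last being the bulk to be bounded.

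Next, apply reduction rules to remove redundancy in the bipartite part. A prototypical rule is a \emph{pendant rule}: if $u\in V(M)$ has more than $k+1$ neighbors of degree $1$ in $V_R$, retain only $k+1$ of them, since at most one such pendant can be used per $P_3$ through $u$. A \emph{twin rule} groups vertices of $V_R$ having identical neighborhoods into $V(M)$ together with identical attachment patterns inside their star/triangle component, keeping only $O(k)$ representatives of each equivalence class. These rules ensure that distinct ``shapes'' of $V_R$-vertices and their multiplicities are tightly controlled.

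The extremal core is then a degree-threshold argument at $\sqrt{k}$. Classify each $u \in V(M)$ as \emph{heavy} if $\deg_G(u)\geq \sqrt{k}$ and \emph{light} otherwise. The key claim is that in a reduced instance, either the number of heavy vertices exceeds $c_1\sqrt{k}$ or $|V_R|$ exceeds $c_2 k^2$, in which case the algorithm exhibits $k$ vertex-disjoint $P_3$'s explicitly using the abundance of heavy vertices and available pendants; or both bounds hold, in which case summing contributions yields $O(k^2) + O(|V_R|) + O(\sqrt{k})\cdot |V_R| + |V(M)|\cdot \sqrt{k} = O(k^{2.5})$.

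The main obstacle is establishing this extremal dichotomy. Unlike in the covering setting of Dell--Van Melkebeek, one cannot simply delete a high-degree vertex, since it may be essential to some $P_3$ in every solution. Instead, the argument must explicitly construct $k$ vertex-disjoint $P_3$'s from the heavy vertices and the reduced pendant/twin structure, coordinating the maximality of $M$, the star/triangle structure of $G[V_R]$, and the bounds imposed by the reductions to avoid vertex conflicts. Carrying out this combinatorial balancing~--- and showing that $\sqrt{k}$ is exactly the right threshold between the two regimes~--- is the delicate part of the theorem.
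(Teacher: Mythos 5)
Your opening steps match the paper's proof: greedily compute a maximal $P_3$-matching, observe that the residual graph $G[V_R]$ consists of stars and triangles, and note that the final bound will come from summing edges inside $V(M)$, inside $V_R$, and between the two. The paper likewise observes that once the maximum degree $\Delta$ is bounded, a kernel with $O(\Delta\cdot k)$ edges follows almost immediately: the $\leq 4k$ vertices of $V(M)$ are incident to $O(\Delta k)$ edges, and after deleting components of $G[V_R]$ not adjacent to $V(M)$ and pruning redundant degree-one leaves, the residual graph has $O(\Delta k)$ vertices and average degree at most two.

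The genuine gap is in the degree reduction itself, which is the entire technical content of this theorem, and which your proposal openly declines to carry out. Your heavy/light threshold at $\sqrt{k}$ is the wrong target: the paper reduces the \emph{maximum degree} to $\Delta = O(k^{1.5})$, not $\sqrt{k}$, and it is $\Delta = O(k^{1.5})$ combined with the $O(\Delta k)$ edge bound that yields $O(k^{2.5})$. Your proposed pendant and twin rules control multiplicities of indistinguishable low-degree vertices but do not by themselves bound the number of \emph{distinct} neighbors a vertex $v$ of $V(M)$ can have; so they give no degree bound at all, and the extremal dichotomy you invoke to finish the count is unproved. What is actually needed, and what the paper supplies, is a sequence of safe edge-deletion rules applied to any vertex $v$ of degree above $Ck^{1.5}$: (a) if $G\setminus v$ has a matching of size $4k+2$ into $N(v)$, any edge at $v$ can be deleted; (b) otherwise, setting $M$ to be the $O(k)$ matched vertices and $X = N(v)\setminus M$, the set $X$ is independent with $N(X)\subseteq M\cup\{v\}$; (c) one defines a technical ``good vertex'' notion on $M$ and shows that $vx$ can be safely removed whenever all of $x$'s neighbors in $M$ are good; and (d), crucially, a polynomial-time procedure that, by repeatedly computing bounded-weight bipartite matchings and then double-counting length-two paths in an auxiliary graph $H$ (yielding both $O(k^2)$ paths and $\Omega(|X|^2/k)$ paths), either exhibits a suitable $x$ or certifies $|X| = O(k^{1.5})$ directly. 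None of that machinery appears in your proposal, and nothing in the $\sqrt{k}$-threshold plan substitutes for it; so the attempt stops short of the core of the argument.
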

We don't know currently if the exponent 2.5 is tight or the kernel size can be perhaps improved to $O(k^2)$.  Theorem~\ref{th:3path} raises the obvious question of how the exponent behaves for \matching{P_d} if $d$ increases (where $P_d$ is the path on $d$ edges): does it go to infinity or is there a universal constant bound on the exponent of the kernel size for every \matching{P_d} problem? We show that there is a ``uniformly polynomial'' kernel for~\matching{P_d} for any $d$:
\begin{theorem}\label{th:uniformpath}
	For \matching{P_d} (with $d$ being part of the input),
	\begin{enumerate}
		\item we can compute in time $n^{O(1)}$ a kernel of size $f(d)\cdot k^{O(1)}$, and
		\item we can compute in time $f(d)\cdot n^{O(1)}$ a kernel of size $k^{O(1)}$,
	\end{enumerate}
	where $f(d)$ is some computable function.
\end{theorem}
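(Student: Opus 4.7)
The plan is to prove part~(1) directly with a greedy preprocessing followed by a compression step, and to obtain part~(2) as a standard consequence via an FPT-versus-kernel dichotomy.

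For part~(1), I would begin with a greedy phase: compute in polynomial time a maximal $P_d$-matching $M$ in $G$. If $|M|\ge k$, output a trivial yes-instance, so assume $|M|<k$ and set $X\coloneqq V(M)$. By maximality, $|X|<(d+1)k$ and $X$ meets every copy of $P_d$ in $G$; in particular, writing $Y\coloneqq V(G)\setminus X$, the induced subgraph $G[Y]$ is $P_d$-subgraph-free, so the Erdős–Gallai theorem gives $|E(G[Y])|\le\tfrac{d-1}{2}|Y|$. The task thus reduces to bounding $|Y|$ and the bipartite edge count $|E(X,Y)|$ by $f(d)\cdot k^{O(1)}$ with an absolute exponent on~$k$.

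The heart of the compression step is a crown/expansion-style reduction: classify each $y\in Y$ by its ``type'' with respect to $X$---informally, its neighbourhood into~$X$ together with the roles it can play as endpoint or interior vertex of a $P_d$ whose remaining vertices lie in~$X$. Two vertices of the same type are interchangeable in any $P_d$-matching, so it suffices to keep a polynomial-in-$k$ number of representatives per type. The main obstacle is to bound the number of types by $f(d)\cdot k^{O(1)}$ with the exponent on~$k$ independent of~$d$: a single $y$ can occupy any of $d+1$ positions in a $P_d$ and the rest of the path can alternate between $X$ and $Y$ in intricate ways, so a naive bookkeeping multiplies the exponent by~$d$; one must argue that after repeatedly applying high-degree and equivalent-vertex reductions, only a short signature of each type survives and its length is absorbed into the $f(d)$ factor rather than into the exponent of~$k$.

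Once part~(1) yields an instance of size $f(d)\cdot k^c$ for some absolute constant~$c$, part~(2) follows by a simple dichotomy: if $f(d)\le k$, the kernel already has size $k^{c+1}=k^{O(1)}$, as required; otherwise $k$ is bounded by a function of~$d$ alone, so the kernel has size bounded by $f(d)^{c+1}$, and within the allotted budget $f'(d)\cdot n^{O(1)}$ we can solve the instance outright (e.g., by brute force on the kernel or by any FPT algorithm parameterized by $k$) and emit a constant-size yes- or no-instance.
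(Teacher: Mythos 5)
Your reduction of part~(2) to part~(1) is correct and is essentially the paper's Lemma~\ref{lem:B2toA2B1}: if $f(d)\le k$ the part~(1) kernel is already of size $k^{O(1)}$, and otherwise $k$ is bounded by a function of $d$ alone, so an FPT algorithm can decide the instance within the time budget $f'(d)\poly(n)$ and output a trivial kernel. Your opening move for part~(1)---greedily compute a maximal $P_d$-matching, set $X=V(M)$ with $|X|<(d+1)k$, and observe that $G[Y]$ is $P_d$-free so Erd\H{o}s--Gallai bounds $|E(G[Y])|$ in terms of $|Y|$---also matches the paper's step~(A1).

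However, the core of part~(1) is not actually proved: you describe the compression step (``classify each $y\in Y$ by its type ... and keep a polynomial-in-$k$ number of representatives per type'') and then you yourself flag the decisive obstacle---that the number of types naively depends on $d$ in the exponent of $k$---and write ``one must argue that ... only a short signature of each type survives.'' That is precisely the theorem's content, and the proposal does not supply the argument. Worse, the sketched type notion (``neighbourhood into $X$ together with the roles $y$ can play as endpoint or interior vertex of a $P_d$ whose remaining vertices lie in $X$'') is unlikely to be workable as stated: the neighborhood of $y$ in $X$ already ranges over $2^{\Theta(dk)}$ possibilities, and solution paths can alternate arbitrarily between $X$ and $Y$, so the ``remaining vertices'' of a path segment through $y$ need not lie in~$X$. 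A crown/representative-set reduction keyed on such types would give at best $(dk)^{O(d)}$ classes, which is exactly the $k^{\Theta(d)}$ bound the theorem must beat. The paper circumvents this with a very different machinery: it passes to a weighted/annotated generalization with dangling edges (so that long path segments can be contracted into a single weighted edge), computes a DFS forest of bounded depth on $G\setminus X$, defines ``requests'' $(f,i)$ and tracks which subtrees satisfy them, resolves requests that have $>dk$ independent witnesses by bumping edge weights, and finally uses a branching tree of depth $d^2$ and fan-out $d$ to mark $O(d^{d^2})$ ``useful'' vertices per component per request. The $d^{d^2}$ factor, which is the $f(d)$ in the theorem, comes from this branching step, not from a bounded signature length; none of these ingredients appears in your sketch, and without some replacement of comparable power the compression step remains unproved.
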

We cannot expect to improve Theorem~\ref{th:uniformpath} to kernels that can be computed in time~$n^{O(1)}$ and achieve size~$k^{O(1)}$ or even $(kd)^{O(1)}$, because setting $k=1$ would then yield a polynomial kernel for the $d$-Path problem, which is known to imply \hypofails. The significance of these subtle differences in the running
times/kernel sizes shows that one has to be very careful when talking about ``uniformly polynomial'' kernelization for a family of problems, as small changes in the definitions can matter a lot. In Section~\ref{sec:mult-kern}, we examine how \pp{Path Matching}, \pp{Clique Matching}, and \pp{Set Matching} behave under subtly different definitions. Uniform kernelization was studied also by \textcite{DBLP:journals/talg/GiannopoulouJLS17}, in the context of hitting forbidden minors.

The examples of cliques, stars, and paths show that the exact bound on
the kernel size of \matching{H} for a particular $H$ could be
very far from the $O(k^{|V(H)|})$ upper bound provided by the sunflower kernel
or the $O(k^{2-\epsilon})$ lower bound of Theorem~\ref{thm:Hfactor}.
It seems to be very challenging to obtain tight upper and lower bounds on the kernel sizes for all~$H$.
Our proofs of Theorem~\ref{th:3path} and Theorem~\ref{th:uniformpath} indicate what kind of
combinatorial problems we have to understand for a full solution. However, we want to point out that the algorithms in these proofs are very ``fragile'' in that they heavily rely on extremal properties of paths in graphs, and we were unable to extend them even in minor ways, such as to the problem of matching paths of length $d$ starting in the set $X$, or going from set $X$ to set $Y$.

After obtaining our results, we learnt that \textcite{HermelinWu}
independently achieved kernel lower bounds for
packing problems using the paradigm of Lemma~\ref{lemma:kernel-proxy}.
In particular, their bounds for \setmatching{d} and
\matching{K_d} are $O(k^{d-3-\epsilon})$ and $O(k^{d-4-\epsilon})$,
respectively.

\section{Techniques}

The \emph{\OR{} of a language $L$} is the language $\OR(L)$ that
consists of all tuples $(x_1,\dots,x_t)$ for which there is an
$i\in[t]$ with $x_i\in L$.
Instances $\overline{x}=(x_1,\dots,x_t)$ for $\OR(L)$ have two
natural parameters:
the length~$t$ of the tuple and the maximum bitlength $s=\max_i|x_i|$
of the individual instances for~$L$.
The following lemma captures the method that was used by
\textcite{DellVanMelkebeek} to prove conditional kernel lower bounds.
\begin{lemma}
	\label{lemma:kernel-proxy}
	Let $\Pi$ be a problem parameterized by~$k$
	and let~$L$ be an \NP-hard problem.
	Assume that there is a polynomial-time mapping reduction~$f$ from
	$\OR(L)$ to $\Pi$ and a number $d>0$ with the following property:
	given an instance $\overline x=(x_1,\dots,x_t)$ for $\OR(L)$ in
	which each $x_i$ has size at most $s$, the reduction produces an
	instance $f(\overline x)$ for $\Pi$ whose parameter $k$ is at most
	$t^{1/d+o(1)}\cdot\poly(s)$.

	Then $\Pi$ does not have kernels of size $O(k^{d-\epsilon})$ for any
	$\epsilon>0$ unless \hypofails.
\end{lemma}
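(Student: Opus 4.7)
The plan is to assume for contradiction that $\Pi$ has a kernelization $K$ producing kernels of size at most $c\cdot k^{d-\epsilon}$ for some constants $c,\epsilon>0$, and then compose $K$ with the given reduction~$f$ to obtain a polynomial-time compression of $\OR(L)$ whose output is strictly sublinear in~$t$. Since $L$ is \NP-hard, the existence of such an OR-compression is precisely what is ruled out (conditional on~\hypo{}) by the compression theorem of \textcite{DellVanMelkebeek}, which generalizes \textcite{FortnowSanthanam}. All that remains is to show that the composition $K\circ f$ meets the size bound required to apply that theorem.

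Concretely, given an $\OR(L)$-instance $\overline x=(x_1,\dots,x_t)$ with $|x_i|\le s$, running $f$ produces a $\Pi$-instance whose parameter is $k\le t^{1/d+o(1)}\cdot\poly(s)$ by hypothesis, and then running $K$ on it produces an equivalent $\Pi$-instance $y'$ of size at most
\[
c\cdot k^{d-\epsilon} \;\le\; c\cdot\bigl(t^{1/d+o(1)}\cdot\poly(s)\bigr)^{d-\epsilon} \;=\; t^{1-\epsilon/d+o(1)}\cdot\poly(s).
\]
Because $f$ is a many-one reduction and $K$ preserves equivalence, $y'$ is a YES-instance of $\Pi$ if and only if some $x_i$ lies in $L$. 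In particular, $K\circ f$ is a polynomial-time mapping reduction from $\OR(L)$ to $\Pi$.

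To conclude, I would fix any $\delta$ with $0<\delta<\epsilon/d$; then for all sufficiently large $t$ the $o(1)$ term in the exponent drops below $\epsilon/d-\delta$, so $|y'|=O(t^{1-\delta}\cdot\poly(s))$, with the finitely many small $t$ absorbed into the polynomial factor by trivial padding. This is exactly the kind of polynomial-time OR-compression of an \NP-hard language forbidden under~\hypo{} by \textcite{DellVanMelkebeek}, giving the desired contradiction. The only substantive step is the size calculation; the sole point requiring care is the bookkeeping around the $o(1)$ term in the parameter bound and the corresponding choice of~$\delta$, which amounts to observing that the inequality $(d-\epsilon)/d<1$ is preserved strictly even after absorbing sub-polynomial error. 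No new technology is needed beyond the Dell-Van Melkebeek compression theorem, which is invoked as a black box.
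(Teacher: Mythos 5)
The paper treats this lemma as a restatement of \textcite{DellVanMelkebeek} (who adapted \textcite{FortnowSanthanam} and \textcite{BDFH}) and supplies no proof of its own. Your argument correctly reconstructs the intended proof behind that citation: composing the hypothetical kernelization with~$f$ yields an output of size $t^{1-\epsilon/d+o(1)}\cdot\poly(s)$, which is $O(t^{1-\delta}\cdot\poly(s))$ for any fixed $0<\delta<\epsilon/d$ once $t$ is large, and this is exactly the OR-compression of an \NP-hard language forbidden by the Dell--Van Melkebeek compression theorem under \hypo{}; the exponent bookkeeping and the handling of the $o(1)$ term are both sound.
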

\textcite{BDFH} formulated this method without the
dependency on~$t$.
This suffices to prove polynomial kernel lower bounds since~$d$ can be
chosen as an arbitrarily large constant.
\textcite{DellVanMelkebeek} adapted the proofs
of~\textcite{FortnowSanthanam,BDFH} to obtain the
formulation above, and generalized it to an oracle
communication setting.

We now informally explain a simple scheme for proving kernel lower
bounds of the form~$O(k^{d-\epsilon})$ for a parameterized problem $\Pi$.
Lemma~\ref{lemma:kernel-proxy} requires us to devise a reduction from
$\OR(L)$ (for some \NP-hard language $L$) to $\Pi$ whose output
instances have parameter~$k$ at most $t^{1/d}\cdot\poly(s)$. We
carefully select a problem $L$ whose definition is $d$-partite in
a certain sense, and we design the reduction from $\OR(L)$ to~$\Pi$
using the general scheme described.
Most problem parameters can be bounded from above by the number of
vertices; therefore, what we need to ensure is that the number of
vertices increases roughly by at most a factor of~$t^{1/d}$.

For simplicity of notation, we informally describe the case $d=2$
first. We assume that~$L$ is a bipartite problem, meaning that each
instance is defined on two sets $U$ and~$W$, and ``nothing interesting
is happening inside $U$ or inside $W$.''  We construct the instance of
$\Pi$ by taking $\sqrt{t}$ copies of~$U$ and~$\sqrt{t}$ copies
of~$W$. For each of the $t$ instances of $L$ appearing in the $\OR(L)$ instance, we select a copy $U'$ of $U$ and a copy $W'$ of $W$, and we embed the instance of $L$ into the union $U'\cup W'$. This way, we can embed all
$\sqrt{t}\cdot \sqrt{t}=t$ instances of $L$ such that each pair $(U',W')$ is selected for exactly one instance of $L$. The fact that $L$
is a bipartite problem helps ensuring that two instances of $L$
sharing the same copy of $U$ or the same copy of $W$ do not
interfere. A crucial part of the reduction is to ensure that every
solution of the constructed instance can use at most one copy of $U$
and at most one copy of $W$. If we can maintain this property (using
additional arguments or introducing gadgets), then it is usually easy
to show that the constructed instance has a solution if and only if at
least one of the $\sqrt{t}\cdot \sqrt{t}$ instances appearing in its
construction has a solution.

For $d>2$, the scheme is similar. We start with a $d$-partite problem
$L$ and make $t^{1/d}$ copies of each partition class. Then there are
$(t^{(1/d)})^d=t$ different ways of selecting one copy from each
class, and therefore we can compose together $t$ instances following
the same scheme.

As a specific example, let us consider $\Pi=\pp{Vertex Cover}$ in graphs, where we have $d=2$. The known kernel lower bound for this problem can be reproved more elegantly by choosing~$L$ in a not completely obvious way. In particular, we let $L$ be \pp{Multicolored Biclique}:
\begin{description}
	\item[Input:]
	      A bipartite graph $B$ on the vertex set $U\dotcup W$,
	      an integer~$k$,
	      and partitions $U=(U_1,\dots,U_k)$ and $W=(W_1,\dots,W_k)$.
	\item[Decide:]
	      Does $B$ contain a biclique $K_{k,k}$ that has a
	      vertex from each of the sets $U_1,\dots,U_k$ and $W_1,\dots,W_k$?
\end{description}
This is a problem on bipartite graphs and \NP-complete as we prove in
Appendix~\ref{appendix:multicolor-biclique}.
\begin{theorem}[\cite{DellVanMelkebeek}]
  \label{theorem:vertexcover}%
  If \hypo{}, then \pp{Vertex Cover} does not have kernels with $O(k^{2-\epsilon})$ edges.%
\end{theorem}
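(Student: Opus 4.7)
The plan is to apply Lemma~\ref{lemma:kernel-proxy} with $d=2$ and $L=\pp{Multicolored Biclique}$, which is \NP-complete by Appendix~\ref{appendix:multicolor-biclique}. It suffices to give a polynomial reduction from $\OR(L)$ to \pp{Vertex Cover} producing an instance whose parameter $k'$ is at most $t^{1/2}\cdot\poly(s)$.

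Given $t$ instances of \pp{Multicolored Biclique}, I would first pad them so that they all share the same bipartition sizes $|U|=|W|=n_0$, the same color-class structure, and the same target $k_0$, and so that $t=q^2$ is a perfect square; then reindex them as $I_{(a,b)}$ for $(a,b)\in[q]\times[q]$. Let the vertex set $V$ be the disjoint union of $q$ fresh copies $U^{(1)},\dots,U^{(q)}$ of $U$ and $q$ fresh copies $W^{(1)},\dots,W^{(q)}$ of $W$, inheriting the color partitions $U_i^{(a)}$ and $W_j^{(b)}$. Add three kinds of edges:
(i) for every $U$-color $i$, turn $\bigcup_a U_i^{(a)}$ into a clique, and symmetrically on the $W$-side;
(ii) for every pair of distinct $U$-colors $i\ne i'$ and every pair of distinct copy indices $a\ne a'$, insert a complete bipartite graph between $U_i^{(a)}$ and $U_{i'}^{(a')}$, and symmetrically on the $W$-side;
(iii) for every $(a,b)\in[q]^2$, insert the bipartite complement of $I_{(a,b)}$ between $U^{(a)}$ and $W^{(b)}$.
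Set the target parameter to $k' \coloneqq 2qn_0-2k_0$, so $k'\le 2qn_0\le t^{1/2}\cdot\poly(s)$; the construction clearly runs in time polynomial in $t+s$.

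For correctness, I would pass to complementary independent sets of size $2k_0$. The cliques in (i) force any such set to pick at most one vertex from each of the $2k_0$ color classes $\bigcup_a U_i^{(a)}$ and $\bigcup_b W_j^{(b)}$, hence by counting exactly one from each. The cross-color cross-copy edges in (ii) then force all $U$-side choices to lie in a common copy $a^*$ and all $W$-side choices in a common copy $b^*$, because any two $U$-side vertices from different colors and different copies would be adjacent. With the choices confined this way, (iii) says precisely that the $2k_0$ chosen vertices form a multicolored $K_{k_0,k_0}$ in $I_{(a^*,b^*)}$; the converse direction is immediate.

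The main obstacle is the one handled by (ii): ensuring that an independent set cannot \emph{cheat} by mixing vertices from many different copies. The mega-cliques in (i) alone are insufficient, because without (ii) one could pick vertices from distinct $U^{(a)}$'s and distinct $W^{(b)}$'s, and then the bipartite complement edges in (iii) would be read against many different input instances at once. The asymmetry between same-color pairs (handled by (i)) and different-color pairs (handled by (ii)) is the elegant feature that replaces the number-theoretical Packing Lemma of~\textcite{DellVanMelkebeek} with an elementary gadget.
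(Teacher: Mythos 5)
Your proposal is correct and follows essentially the same route as the paper: you reduce $\OR(\pp{Multicolored Biclique})$ to \pp{Vertex Cover} by taking $q=\sqrt t$ copies of each side and embedding instance $I_{(a,b)}$ between copies $a$ and $b$, with the complete $k$-partite trick ensuring the within-side edges are instance-independent. The only cosmetic difference is that you build the vertex cover graph directly (your edge families (i)--(iii) are exactly the complement of the paper's clique-searching graph $G$, which the paper constructs first and then complements at the very end), so the two write-ups are the same construction described from opposite sides.
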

\begin{proof}
	We apply Lemma~\ref{lemma:kernel-proxy}
	with $L=\pp{Multicolored Biclique}$.
  To this end, we devise a reduction that is given an instance $(B_1,\dots,B_t)$ for $\OR(L)$ and outputs one instance of \pp{Vertex Cover} with at most $\sqrt{t} \cdot\poly(s)$ vertices, where~$s$ is the largest size among the~$B_1,\dots,B_t$.
  We can assume without loss of generality that every instance~$B_i$ has the same number $k$ of blocks in the partitions and every block in every instance~$B_i$
	has the same size $n$: we achieve this property by adding at most $\poly(s)$ isolated vertices to each instance, which will not affect the correctness or the asymptotic size bound.
	Similarly, we can assume that $\sqrt{t}$ is an integer. In the following, we refer to the $t$ instances of~$L$ instance as $B_{(i,j)}$ for $i,j\in\set{1,\dots,\sqrt{t}}$, and we let $U_{(i,j)}$ and $W_{(i,j)}$ be the two parts of the bipartite graph $B_{(i,j)}$.

  After this preprocessing, the reduction modifies each instance $B_{(i,j)}$ so that $U_{(i,j)}$ and $W_{(i,j)}$ become complete $k$-partite graphs; more precisely, if 	two vertices in $U_{(i,j)}$ or two vertices in $W_{(i,j)}$ are in different blocks (that is, they have different colors), then we make them adjacent.
  The new graph $B'_{(i,j)}$ has a~$2k$-clique if and only if there is a correctly partitioned $K_{k,k}$ in $B_{(i,j)}$.

	Next, the reduction constructs a graph $G$ by introducing $2\sqrt{t}$ sets $U^{1}$,
	$\dots$, $U^{\sqrt{t}}$ and $W^{1}$, $\dots$, $W^{\sqrt{t}}$ of $kn$
	vertices each. For every $i,j\in\set{1,\dots,\sqrt{k}}$, we copy the
	graph $B'_{(i,j)}$ to the vertex set $U^i\cup W^j$ by injectively mapping
	$U_{(i,j)}$ to $U^i$ and $W_{(i,j)}$ to $W^j$. Note that $U_{(i,j)}$ and $W_{(i,j)}$
	induces the same complete $k$-partite graph in $B'_{(i,j)}$ for
	every $i$ and $j$, thus this copying can be done in such a way that
	$G[U^i]$ receives the same set of edges when copying $B'_{(i,j)}$
	for any $j$ (and similarly for $G[W^j]$). Therefore, $G[U^i\cup
				W^j]$ is isomorphic to $B'_{(i,j)}$ for every $i,j$.

	We claim that $G$ has a $2k$-clique if and only if at least one
	$B'_{(i,j)}$ has a $2k$-clique (and therefore at least one $B_{(i,j)}$ has a
	correctly partitioned $K_{k,k}$).  The reverse direction is clear,
	as $B'_{(i,j)}$ is a subgraph of $G$ by construction and thus every clique in~$B'_{(i,j)}$ is also a clique in~$G$. For the
	forward direction, observe that $G$ has no edge between $U^{i}$ and~$U^{i'}$ if $i\ne i'$, and between $W^j$ and $W^{j'}$ if
	$j\neq j'$. Therefore, the $2k$-clique of $G$ is fully contained in $G[U^i\cup W^j]$ for some $i,j$. As $G[U^i\cup W^j]$ is isomorphic to~$B'_{(i,j)}$, this means that $B'_{(i,j)}$ also has a $2k$-clique.

  Let $N$ be the number of vertices in $G$ and note that $N=2\sqrt{t}\cdot kn\le t^{1/2}\cdot \poly(s)$ holds.
  The graph $G$ has a $2k$-clique if and only if its complement $\overline G$ has a vertex cover of size $N-2k$. So, if at the end our reduction outputs~$\overline G$, the algorithm we constructed is indeed a polynomial-time reduction from $\OR(L)$ to \pp{Vertex Cover} that achieves the size bound required for Lemma~\ref{lemma:kernel-proxy}.
\end{proof}
In Appendix~\ref{appendix:dVC}, we extend the elegant version of this proof to the vertex cover problem for $d$-uniform hypergraphs.

\section{Kernelization of the Set Matching Problem}\label{sec:packingproblems}
The \setmatching{d} problem is given a \mbox{$d$-uniform} hypergraph to
find at least~$k$ disjoint hyperedges.
For $d=2$, this is the maximum matching problem in graphs,
which is polynomial-time solvable.
The restriction of the problem to $d$-partite hypergraphs and $k=n/d$ is the
\perfectdimensionalsetmatching{d} and \NP-hard~\parencite{Karp72} for $d\geq 3$.

We use Lemma~\ref{lemma:kernel-proxy} to prove that the  kernel
size in Theorem~\ref{thm:setmatching-kernels} is asymptotically optimal under the hypothesis \hypo.
For the reduction, we use gadgets with few vertices that coordinate
the availability of groups of vertices.
For example, we may have two sets $U_1,U_2$ of $s$~vertices each and
our gadget makes sure that, in every perfect matching of the graph, one set is
fully covered by the gadget while all vertices of the other group have to
be covered by hyperedges external to the gadget (see Figure~\ref{fig:evencycle}).
Ultimately, we design gadgets that enable us to select exactly one of the
$m=t^{1/d}$ blocks in each of the $d$ color classes, which will in turn
activate one of the $t=m^d$ instances of the \OR-problem.
The precise formulation of the gadget is as follows.
\begin{lemma}[Selector gadget]\label{lem:setpacking-choice}%
	Let $d\geq 3$, $m\geq 1$, and $s\geq 1$ be integers.
	There is a $d$-partite $d$-uniform hypergraph~$S$ with $O(dms)$
	vertices and disjoint blocks $U_1,\dots,U_m \subseteq V(S)$, each of
	size $s$ and in the same color class of $S$, such that the following
	conditions hold.
	\begin{enumerate}[label=(\roman*)]
		\item (Completeness)
		      $S-U_i$ has a unique perfect matching for all $i\in[m]$.
		\item (Soundness)
		      For all sets $B\subseteq U_1\cup\dots\cup U_m$ so that $B\neq
			      U_i$ for any $i\in[m]$, the graph $S-B$ does not have a perfect
		      matching.
	\end{enumerate}
	Moreover, the graph can be computed in polynomial time.
\end{lemma}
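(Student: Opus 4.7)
The plan is to build $S$ out of two kinds of pieces: one chain sub-gadget per column $j \in [s]$, and a synchronization sub-gadget between consecutive columns. The chain sub-gadget handles the ``choice'' within a column, and the synchronization enforces that the same choice is made in every column.

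For each column $j \in [s]$ I would put the $m$ block-vertices $u_1^j, \dots, u_m^j$ (color~$1$) along a chain of $m-1$ bridge tuples in colors $2, \dots, d$. For each $i \in [m-1]$ I add two hyperedges: one containing $u_i^j$ together with the $i$-th bridge, and one containing $u_{i+1}^j$ together with the $i$-th bridge. When $d > 3$, each bridge is padded with fresh vertices in the higher color classes so that each ``bridge triple'' extends uniquely to a $d$-hyperedge; this only adds $O((d-3)m)$ vertices per column. This is a direct hypergraph analogue of the classical $P_{2k}$-path forcing gadget: the endpoints $u_1^j$ and $u_m^j$ have degree one, so iterated forcing from both ends yields that removing any single $u_i^j$ leaves a unique perfect matching of the column sub-gadget, while removing zero or at least two $u$-vertices from the column leaves none.

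The main obstacle is that the $s$ column gadgets are otherwise independent, so any $B$ that meets each column in exactly one vertex would give rise to a perfect matching, even when the chosen indices differ across columns. To enforce block-consistency, I would insert, between consecutive columns $j$ and $j+1$, a small synchronization sub-gadget on $O(dm)$ new vertices. The idea is to arrange, for each index $i \in [m]$, an auxiliary chain of bridges that ties $u_i^j$ and $u_i^{j+1}$ together and admits a perfect matching only if both block-vertices are simultaneously released to the outside, or both are kept inside; any inconsistency leaves an uncovered bridge. Chaining the $s-1$ synchronizations along $j = 1, \dots, s-1$ then propagates agreement across all columns. Designing the synchronization so that it interacts cleanly with the column chains without introducing spurious perfect matchings is the delicate part of the argument.

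Granted this construction, completeness is straightforward: for any fixed $i$, I exhibit the perfect matching of $S - U_i$ by orienting each column chain away from position $i$ and extending uniquely through the synchronizations; the forcing argument confirms uniqueness. For soundness, a color-class count forces $|B| = s$; the per-column analysis forces a single removed index $i_j$ in each column; and the synchronizations force $i_1 = \dots = i_s$. Hence $B = U_{i_1}$. Each column contributes $O(dm)$ vertices and each synchronization contributes $O(dm)$ vertices, giving the claimed total of $O(dms)$, and every step of the construction is explicit and polynomial-time.
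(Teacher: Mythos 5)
Your plan takes a genuinely different route from the paper, and it has a real gap at its centre. The paper's $m=2$ ``switch'' gadget is a single $d$\nobreakdash-uniform even cycle of $2s$ hyperedges $e_1,\dots,e_{2s}$ in which $|e_i\cap e_{i+1}|$ alternates between $1$ and $d-2$; each odd edge contains one private vertex of $U_1$ and each even edge one of $U_2$. The two families (odd edges / even edges) are the only two perfect matchings, so removing $U_1$ forces the even family and removing $U_2$ forces the odd family, while removing any other subset of $U_1\cup U_2$ kills parity. Crucially, all $s$ vertices of $U_1$ and all $s$ vertices of $U_2$ are already threaded onto a single cycle, so block-level all-or-nothing behaviour is built in and no inter-column coordination is ever needed. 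The $m>2$ case is then a chain of $m-1$ independent copies of this cycle, one between each consecutive pair $U_i,U_{i+1}$.

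Your construction, by contrast, slices each block into $s$ columns and needs an explicit synchronization sub-gadget to force agreement of the removed index across columns. That gadget is exactly where the proof lives, and you do not construct it --- you only state the property it should have. Worse, as described the synchronization edges are incident to the block-vertices $u_i^j$ themselves. Once $u_i^j$ lies in extra hyperedges beyond the two column edges, the clean forcing argument for the column chain no longer follows: a perfect matching could cover $u_i^j$ via a synchronization edge, so ``removing zero $u$-vertices from the column'' no longer certifiably kills the column chain. You would need the synchronization gadget to detect whether $u_i^j$ is externally covered without itself offering a way to cover $u_i^j$ internally, and simultaneously to couple that bit consistently along all $s-1$ column gaps and all $m$ rows. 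That coupling is precisely the hard part; the paper's even-cycle structure resolves it at once by refusing to decompose $U_1,U_2$ into columns in the first place. As it stands, the proposal does not constitute a proof, and the safest fix is to abandon the column decomposition and instead wire $U_1$ and $U_2$ directly into one parity-constrained structure, as the paper does.
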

The fact that the selector gadget is $d$-partite gives us the result for
\perfectdimensionalsetmatching{d} rather than just for \setmatching{d}, and it
is also useful in the reduction to \matching{K_d}.
We defer the proof of Lemma~\ref{lem:setpacking-choice} to
\S\ref{sec:selector-proof}, and use it now to prove the following reduction.
\begin{lemma}\label{lemma:setpacking-reduction}
	Let $d\geq 3$ be an integer.
	There is a polynomial-time mapping reduction from $\OR($\perfectdimensionalsetmatching{d}$)$ to
	\perfectdimensionalsetmatching{d} that maps
	$t$-tuples of instances of bitlength $s$ each to instances on
	$t^{1/d}\cdot\poly(s)$ vertices.
\end{lemma}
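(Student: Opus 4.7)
The plan is to implement the scheme sketched in Section~2, using the selector gadget from Lemma~\ref{lem:setpacking-choice} as the mechanism for selecting exactly one copy per color class. First I would preprocess the input $\OR(\perfectdimensionalsetmatching{d})$ instance $(B_1,\dots,B_t)$ so that all $B_i$ are $d$-partite $d$-uniform hypergraphs with the same per-color-class size $n=\poly(s)$, obtained by padding each instance with isolated matched dummy hyperedges; similarly, I round $t$ up to the nearest $d$-th power (appending trivial no-instances as needed) so that $m\coloneqq t^{1/d}$ is an integer. I would then index the $t$ preprocessed instances by tuples $(i_1,\dots,i_d)\in[m]^d$ and write $V^c_{(i_1,\dots,i_d)}$ for the $c$-th color class of instance $B_{(i_1,\dots,i_d)}$.

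Next I would build the output hypergraph $H$ as follows. For each color class $c\in[d]$, introduce $m$ pairwise disjoint blocks $V^c_1,\dots,V^c_m$ of $n$ vertices each, all placed in the $c$-th color class of~$H$. For every instance $B_{(i_1,\dots,i_d)}$, embed its hyperedges into $H$ by identifying $V^c_{(i_1,\dots,i_d)}$ with $V^c_{i_c}$ via a fixed bijection, for every $c\in[d]$. Finally, for each color class $c$ attach a fresh copy $S^c$ of the selector gadget of Lemma~\ref{lem:setpacking-choice} whose $m$ distinguished blocks are identified with $V^c_1,\dots,V^c_m$; the auxiliary vertices of $S^c$ are placed in color classes according to the $d$-partiteness guaranteed by the gadget. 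Counting: $H$ has $dmn$ block vertices and $O(dms)$ further vertices per selector, giving $O(d^2ms)=t^{1/d}\cdot\poly(s)$ vertices in total. Since the construction is clearly polynomial-time, this meets the bound required by Lemma~\ref{lemma:kernel-proxy}.

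For correctness I would argue both directions. If some instance $B_{(i^*_1,\dots,i^*_d)}$ has a perfect matching $M^*$, I combine $M^*$ with the unique perfect matching of $S^c-V^c_{i^*_c}$ supplied by the completeness clause of Lemma~\ref{lem:setpacking-choice}, one per color class; the union is a perfect matching of $H$. Conversely, suppose $H$ has a perfect matching $M$. For each $c\in[d]$, let $B^c\subseteq V^c_1\cup\dots\cup V^c_m$ be the set of block vertices not covered by any hyperedge of $M$ inside $S^c$; by the soundness clause, $B^c=V^c_{i^*_c}$ for a unique $i^*_c\in[m]$. The remaining hyperedges of $M$ (those not lying in any $S^c$) must exactly cover $\bigcup_c V^c_{i^*_c}$. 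Because a hyperedge inherited from $B_{(j_1,\dots,j_d)}$ uses a vertex of $V^c_{j_c}$ for each~$c$, any such hyperedge with $j_c\neq i^*_c$ for some $c$ would conflict with the internal matching of $S^c$; hence every external hyperedge in $M$ comes from the single instance $B_{(i^*_1,\dots,i^*_d)}$, and these hyperedges form a perfect matching of that instance.

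The main obstacle is the soundness direction, specifically the step showing that ``mixed'' external hyperedges (coming from instances with differing index tuples) cannot coexist in a perfect matching of~$H$. This is precisely where the $d$-partite structure of the selector gadget is crucial: the per-color-class selector forces all but one block of color $c$ to be internally matched, and the fact that every embedded hyperedge is itself $d$-partite then pins down a single consistent choice $(i^*_1,\dots,i^*_d)$ across all colors. The rest is routine bookkeeping on sizes and padding to ensure the vertex count comes out to $t^{1/d}\cdot\poly(s)$.
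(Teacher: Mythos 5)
Your proof is correct and follows essentially the same construction as the paper: index the $t$ instances by tuples in $[t^{1/d}]^d$, embed each into the corresponding combination of blocks $V^c_{i_c}$, and attach one selector gadget per color class to force any perfect matching to activate a single block per class. The soundness argument is phrased slightly differently (you characterize the uncovered block set $B^c$ directly, whereas the paper reasons about the gadget-internal edges $M_a$), but these are the same idea, and your additional bookkeeping (rounding $t$ to a $d$-th power, padding instances to equal size) only spells out steps the paper leaves implicit.
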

\begin{proof}
	Let $G_1,\dots,G_t$ be instances of
	\perfectdimensionalsetmatching{d}, that is, $d$-partite $d$-uniform
	hypergraphs of size~$s$ each.
	We can assume w.l.o.g.\ that each color class of each $G_i$ contains
	exactly $n/d$ vertices where $n\leq s$.
	The goal is to find out whether some $G_i$ contains a perfect
	matching.
	We reduce this question to a single instance $G$ on few vertices.

	The vertex set of $G$ consists of $d\cdot t^{1/d}$ groups of $n/d$
	vertices each, i.e., $V(G)=\bigcup_{a,b} V_{a,b}$ for $a\in[d]$ and
	$b\in[t^{1/d}]$.
	Then we can write the input graphs as $G_{b}$ using an
	index vector $b=(b_1,\dots,b_d)\in[t^{1/d}]^d$.
	For each graph $G_{b}$ we add edges to $G$ in the
	following way:
	We identify the vertex set of $G_{b}$ with
	$V_{1,b_1}\dotcup\dots\dotcup V_{d,b_d}$,
	and we let $G$ contain all the edges of~$G_{b}$.
	Since each $G_b$ is $d$-partite, the same is true for $G$ at this
	stage of the construction.
	Now we modify $G$ such that each perfect matching of $G$ only ever
	uses edges originating from at most one graph $G_b$.
	For this it suffices to add a gadget for every $a\in [d]$ that blocks all
	but exactly one group $V_{a,b}$ in every perfect matching.
	For each $a\in [d]$, we add an independent copy~$S_a$ of the
	gadget~$S$ from Lemma~\ref{lem:setpacking-choice} to~$G$, where we
	identify the sets $U_1=V_{a,1}$ up to $U_m=V_{a,m}$.
	Clearly, $|V(G)|\leq O(st^{1/d})$, and the graph $G$ is $d$-partite
	as this is true for the input graphs and the gadgets.

	Now we verify the correctness of the reduction.
	If some $G_b$ has a perfect matching, then the completeness property
	of $S_a$ ensures that $S_a - V_{a,b_a}$ has a unique perfect
	matching for all $a\in [d]$.
	Together with the perfect matching of $G_b$, this gives a perfect
	matching of~$G$.

	For the soundness, assume $M$ is a perfect matching of $G$.
	Consider now the copy of $S_a$ introduced into the construction. The vertex set of this gadget is $\bigcup_{i=1}^m V_{a,i}$ and a set $Z_a$ of vertices private to this gadget. Let $M_a\subseteq M$ be the set of those edges that contain at least one vertex of $Z_a$. Then $M_a$ covers every vertex of $Z_a$ and a subset of $\bigcup_{i=1}^m V_{a,i}$. Now Lemma~\ref{lem:setpacking-choice}(ii) implies that there is a $b_a$ such that $M_a$ covers every vertex of  $\bigcup_{i=1}^m V_{a,i}$ except $V_{a,b_a}$. It follows that $M^*=M\setminus \bigcup_{a\in[d]}M_a$ covers every $V_{a,b_a}$.
	Since $V_{a,b_a}$ is an independent set in $S_a$, we hav e that $M^*$ uses only
	edges of $G_b$ to cover the $V_{a,b_a}$.
	In particular, $G_b$ has a perfect matching.
\end{proof}
Theorem~\ref{thm:setpacking}, our kernel lower bound for
\setmatching{d}, now follows immediately by combining the above with
Lemma~\ref{lemma:kernel-proxy}.

\subsection{Construction of the Selector Gadget}
\label{sec:selector-proof}
\begin{proof}[of Lemma~\ref{lem:setpacking-choice}]
	We first implement the selector gadget~$S_m$ for $m=2$ and then use this as a
	building block for $m>2$.
	\begin{figure}[tpb]
		\begin{center}
			\begin{tikzpicture}
				\tikzstyle{node}=[fill,circle,inner sep=1.5pt]

				\def\numpts{6}
				\def\radius{1}

				\foreach\x in {1,...,\numpts}
					{
						\ifthenelse{\isodd{\x}}{
							\path[fill=orange!70!black]
							(\x*360/\numpts+90: \radius)
							to[bend right=80] (\x*360/\numpts+360/\numpts+90:
							\radius+.1)
							to (\x*360/\numpts+360/\numpts+90: \radius-.1)
							to[bend right=20] (\x*360/\numpts+90: \radius);
						}{
							\path[fill=orange!70!white]
							(\x*360/\numpts+90: \radius-.1)
							to (\x*360/\numpts+90: \radius+.1)
							to[bend right=80] (\x*360/\numpts+360/\numpts+90: \radius)
							to[bend right=20] (\x*360/\numpts+90: \radius-.1);
						}
					}

				\foreach\x in {1,...,\numpts}
					{
						\ifthenelse{\isodd{\x}}{
							\node[node,black!65!white] at (\x*360/\numpts-360/\numpts+90:
							\radius-.1) { } ;
							\node[node,black!65!white] at (\x*360/\numpts-360/\numpts+90:
							\radius+.1) { } ;
						}{
							\node[node,black!65!white] at (\x*360/\numpts-360/\numpts+90: \radius) { } ;
						}

						\node[node] at (\x*360/\numpts-360/\numpts/2+90: \radius) {};
						\node[anchor=center] at (\x*360/\numpts-360/\numpts/2+90:
						\radius+.4) {$\x$};
					}%
			\end{tikzpicture}
			\hspace{.3cm}\vrule\hspace{.3cm}
			\begin{tikzpicture}[label distance=.2cm]
				\tikzstyle{node}=[fill,circle,inner sep=1.5pt]

				\draw[rounded corners=2]
				(-2.5,-.2) rectangle (-.5,.2);
				\draw[rounded corners=2]
				(.5,-.2) rectangle (+2.5,.2);
				\draw
				(-.5,0)--(.5,0) node[midway,fill=white,inner sep=0]
				{$\oplus$};
				\foreach\x in {1,3,5}
				{
				\node[node,label=above:{$\x$}] at (-2.25+.25*\x, 0) { } ;
				}
				\foreach\x in {2,4,6}
				{
				\node[node,label=above:{$\x$}] at (.5+.25*\x, 0) { } ;
				}
				\begin{scope}[yshift=-2cm,xshift=-3.5cm]
					\draw (.4,1) -- (6.6,1);
					\draw[rounded corners=2]
					(0,-.2) rectangle (1.5,.2) node[midway] {$U_1$};
					\draw[rounded corners=2]
					(2,-.2) rectangle (3.5,.2) node[midway] {$U_2$};
					\draw[rounded corners=2]
					(5.5,-.2) rectangle (7,.2) node[midway] {$U_m$};
					\draw (1.5,0)--(2,0) node[midway,fill=white,inner sep=0]
					{$\oplus$};
					\draw (3.5,0)--(4,0) node[midway,fill=white,inner sep=0]
					{$\oplus$};
					\node[inner sep=0] at (4.5,0) {$\dots$};
					\draw (5,0)--(5.5,0) node[midway,fill=white,inner sep=0]
					{$\oplus$};
				\end{scope}
			\end{tikzpicture}
		\end{center}
		\caption{\label{fig:evencycle}%
    \emph{Left:}
			A \emph{switch gadget} with $d=4$, $s=3$, $U_1=\{1,3,5\}$, and
			$U_2=\{2,4,6\}$.
			Black bullets represent external vertices, and gray bullets
			represent vertices private to the gadget.
			The gadget has two states, that is, two ways in which it can be part of a
			perfect matching: Either all hyperedges with the darker shading or all
			hyperedges with the lighter shading are contained in the perfect matching.
			In the former case, $U_1$ must be covered by hyperedges of the
			outside graph, and in the latter case $U_2$ must be.
			\emph{Top Right:}
			Pictorial abbreviation of the graph on the left.
			By Lemma~\ref{lem:setpacking-choice}, any perfect matching
			blocks exactly the vertices in one of the halves using edges of
			the gadget.
			\emph{Bottom Right:}
			A \emph{selector gadget} is a composition of $m$ switch gadgets.
			By Lemma~\ref{lem:setpacking-choice}, such a composition has
			exactly $m$ possible states: exactly one block of external
			vertices~$U_i$ for $i\in[m]$ is left free and all other blocks~$U_j$ for
			$j\neq i$ are fully covered by the hyperedges of the switch gadgets.}
	\end{figure}
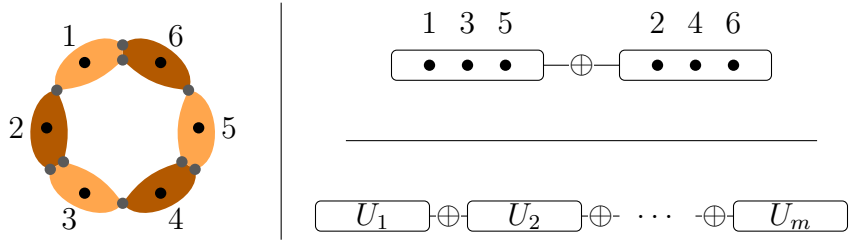%
	The \emph{switch} gadget $S=S_2$ is a $d$-uniform cycle $e_1,\dots,e_{2s}$ as
	depicted on the left in Figure~\ref{fig:evencycle}.
	Formally, we impose the following structure (where indices are
	understood modulo $2s$):
	\begin{itemize}
		\item
		      $|e_{i}\cap e_{i+1}| = 1$ if $i$ is odd.
		\item
		      $|e_{i}\cap e_{i+1}| = d-2$ if $i$ is even.
		\item
		      $|e_{i}\cap e_{j}| = 0$ if $j \not\in \{i-1,i,i+1\}$.
	\end{itemize}
	The graph contains $O(ds)$ vertices.
	We set the two sets of size $s$ that we want to switch between as
	$U_1=\bigcup_{\text{odd } i} e_{i} \setminus \bigcup_{\text{even }i} e_{i}$
	and $U_2=\bigcup_{\text{even }i} e_{i} \setminus \bigcup_{\text{odd
			}i} e_{i}$.
	To show the required properties, let us first note that the hypergraph graph is
	$d$-partite:
	The vertices of $U_1\cup U_2$ form color class $1$, the vertices appearing in
	the intersections~$e_i\cap e_{i+1}$ for odd $i$ form color class $2$, and the
	remaining vertices form color classes $3$ through $d$.
	For the soundness, note that removing $U_1$ from the graph also
	removes the edges~$e_i$ with odd indices, and the edges that remain
	form a perfect matching of $S-U_1$.
	The case of removing $U_2$ is analogous.
	For the completeness, note that removing any set $B\subseteq U_1\cup
		U_2$ other than $U_1$ or $U_2$ from $S$ leaves a graph that does not
	have a perfect matching.

	Now let $m>2$ and let $U_1,\dots,U_m$ be disjoint sets of size~$s$
	each.
	We construct~$S=S_m$ as on the bottom right of
	Figure~\ref{fig:evencycle} by adding independent copies of the graph
	$S_2$ between~$U_1$ and $U_2$, between $U_2$ and $U_3$, and in general
	between $U_i$ and $U_{i+1}$ for all $i\in[m-1]$.
	The graph is trivially $d$-partite since this was true for $S_2$ and
	the sets $U_1\cup\dots\cup U_m$ are all in the same color class, say
	$1$; the number of vertices of $S$ is $O(dms)$.
	For the completeness, consider the graph $S-U_i$ for some $i\in[m]$.
	By the completeness of $S_2$, the two instances of $S_2$ that are
	incident to $U_i$ can be completed to a perfect matching that covers
	$U_{i-1}$ and $U_{i+1}$.
	We can continue applying the completeness property of $S_2$
	inductively to find the unique perfect matching of $S-U_i$.

  The soundness property of $S_2$ implies that
  ${B\cap (U_i \cup U_{i+1})}$ must be equal to $U_i$ or~$U_{i+1}$
  for all $i$ or else there cannot be a perfect matching.
	Moreover, assume that $B$ contains more than one set $U_i$.
	By the soundness of $S_2$, these sets cannot be adjacent or else there
	is no perfect matching.
	Let us for simplicity assume that $B$ contains exactly two non-adjacent sets,
	say $B=\{U_1,U_i\}$, and we assume for contradiction that $S-B$ does not contain
	a perfect matching.
	Then the completeness property of~$S_2$ implies that there is exactly one
	perfect matching between $U_{i-1}$ and $U_{i}$ that does not use vertices from
	$U_i$ but covers all vertices of $U_{i-1}$.
	Thus $S-(U_1\cup U_{i-1})$ also contains a perfect matching.
	By induction, we can see that $S-(U_1\cup U_2)$ has a perfect matching, which we
	already ruled out.
	Overall, we get $B=U_i$ for some $i$.
\end{proof}

\section{Kernel Lower Bounds for Graph Matching Problems}
For a graph~$H$, the \matching{H} problem is given a graph $G$ to find a
maximal number of vertex-disjoint copies of~$H$.
This problem is \NP-complete whenever~$H$ contains a connected
component with more than two vertices~\parencite{KirkpatrickHell} and is in~$\P$ otherwise.

\subsection{Clique Packing}
We prove Theorem~\ref{thm:cliquepacking}, that \matching{K_d} for
$d\geq 3$ does not have kernels with at most~$O(k^{d-1-\epsilon})$ edges unless
\hypofails{}. For this, we devise a parameter-preserving reduction from the problem
of finding a perfect matching in a $(d-1)$-partite $(d-1)$-uniform
hypergraph.
\begin{lemma}
	\label{lem:cliquepacking}
	For all $d\geq 4$, there is a polynomial-time mapping reduction from
	\perfectdimensionalsetmatching{(d-1)} to \matching{K_d} that does
	not change the parameter~$k$.
\end{lemma}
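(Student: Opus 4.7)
The plan is to construct, given a $(d-1)$-partite $(d-1)$-uniform hypergraph $H$ with color classes $V_1,\dots,V_{d-1}$ and hyperedge set $E$, a graph $G$ whose $K_d$-matchings of size $k$ are in bijection with the matchings of size $k$ in $H$. Since a perfect matching in $H$ has size $k=n/(d-1)$, such a reduction will preserve the parameter automatically.

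Construction: let $V(G) = V_1 \cup \dots \cup V_{d-1} \cup \{u_e : e \in E\}$, where the $u_e$ are fresh vertices indexed by the hyperedges. Within each $V_i$ and within $\{u_e : e \in E\}$ I put no edges. Between any two distinct color classes $V_i$ and $V_j$ I put all possible edges, so that $V_1\cup\dots\cup V_{d-1}$ induces a complete $(d-1)$-partite graph. Finally, for every hyperedge $e\in E$ and every color class $V_i$, I connect $u_e$ to the unique vertex of $V_i$ that lies in $e$.

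The correctness claim is that every $K_d$ in $G$ corresponds to exactly one hyperedge of $H$. Since each $V_i$ and the set $\{u_e : e\in E\}$ is an independent set in $G$, any $K_d$ must contain exactly one vertex from each $V_i$ together with exactly one edge-vertex $u_e$. The requirement that $u_e$ be adjacent to each chosen $v_i \in V_i$ then forces $e = \{v_1,\dots,v_{d-1}\}$, so the $K_d$ is precisely the clique on $u_e$ and the $d-1$ vertices spanned by $e$. Consequently, $k$ vertex-disjoint $K_d$'s in $G$ correspond bijectively to $k$ pairwise disjoint hyperedges in $H$, and picking $k=n/(d-1)$ aligns with the perfect-matching question.

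The reduction clearly runs in polynomial time and the parameter $k$ is preserved. I do not foresee any genuine obstacle: the only point requiring care is ruling out ``spurious'' $K_d$'s that would use two edge-vertices or two vertices from the same color class, and the independence of those sets handles this immediately. Chaining this reduction with Theorem~\ref{thm:setpacking} via Lemma~\ref{lemma:kernel-proxy} will then yield Theorem~\ref{thm:cliquepacking}.
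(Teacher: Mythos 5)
Your proof is correct and follows essentially the same approach as the paper's: introduce a fresh vertex $u_e$ per hyperedge $e$, form $d$-cliques $e\cup\{u_e\}$, and argue that these are the only $K_d$'s because the $(d-1)$ color classes and the set $\{u_e : e\in E\}$ are each independent. The only inessential difference is that you make the old vertices a complete $(d-1)$-partite graph, while the paper only adds the edges arising from hyperedges (so its output is a subgraph of yours); both variants rule out spurious $K_d$'s for the same reason and preserve the parameter identically.
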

\begin{proof}
	Let $G$ be a $(d-1)$-partite $(d-1)$-uniform hypergraph on~$n$
	vertices.
	For each edge~$e$ of~$G$, we add a new vertex~$v_e$ and transform
	$e\cup\{v_e\}$ into a $d$-clique in~$G'$.
	We claim that~$G$ has a matching of size $k\coloneqq n/(d-1)$ if and only
	if~$G'$ has a \mbox{$K_d$-matching} of size~$k$.
	The completeness is clear since any given matching of~$G$ can be
	turned into a \mbox{$K_d$-matching} of~$G'$ by taking the respective
	$d$-clique for every $(d-1)$-hyperedge.
	For the soundness, let $G'$ contain a \mbox{$K_d$-matching} of
	size~$k$.
	Note that any $d$-clique of~$G'$ uses exactly one vertex~$v_e$ since
	the underlying graph of~$G$ does not contain any $d$-cliques and
	since no two $v_e$'s are adjacent.
	Thus every $d$-clique of~$G'$ is of the form $e\cup\{v_e\}$, which
	gives rise to a matching of~$G$ of size~$k$.
\end{proof}
This combined with Lemma~\ref{lemma:kernel-proxy} and
Lemma~\ref{lemma:setpacking-reduction} implies
Theorem~\ref{thm:cliquepacking} for $d\geq 4$.
The case $d=3$ follows from Theorem~\ref{thm:Hfactor}, which we
establish independently in the following.

\subsection{General Graph Matching Problems}

We prove Theorem~\ref{thm:Hfactor}, that \factor{H} does not
have kernels of size $O(k^{2-\epsilon})$ unless \hypofails, whenever
$H$ is a connected graph with at least three vertices.

We use the coordination gadget of Lemma~\ref{lem:setpacking-choice} in
a reduction from a suitable \OR-problem to \mbox{\matching{H}}.
To do so, we translate the coordination gadget for
\perfectdimensionalsetmatching{d} to \mbox{\factor{H}},
which we achieve by replacing hyperedges with the following
hyperedge-gadgets of~\textcite{KirkpatrickHell}.

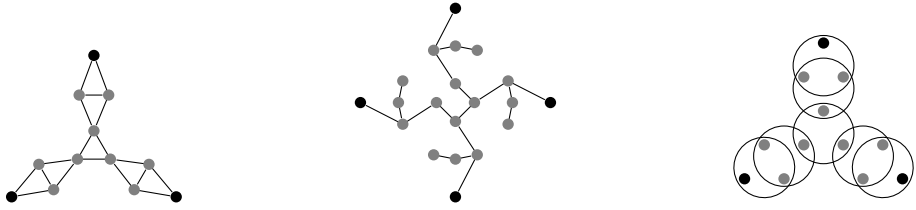
\begin{figure}[tpb]
	\begin{center}
		\begin{tikzpicture}
			\tikzstyle{node}=[fill,circle,inner sep=1.5pt]

			\def\numpts{3}
			\def\radius{.25}

			\foreach\x in {1,...,\numpts}
				{
					\draw
					(\x*360/\numpts-360/\numpts+90: \radius)
					--
					(\x*360/\numpts+90: \radius)
					;
				}
			\foreach\x in {1,...,\numpts}
				{
					\node[node,gray] (inside)
					at (\x*360/\numpts-360/\numpts+90: \radius) { } ;
					\node[node] (outside)
					at (\x*360/\numpts-360/\numpts+90: \radius+1) { } ;
					\draw
					(inside)
					--
					(\x*360/\numpts-360/\numpts+45/\numpts+90: \radius + .5)
					node[node,gray] (help1) { }
					--
					(outside)
					;
					\draw
					(inside)
					--
					(\x*360/\numpts-360/\numpts-45/\numpts+90: \radius + .5)
					node[node,gray] (help2) { }
					--
					(outside)
					;
					\draw (help1) -- (help2);
				}
		\end{tikzpicture}
		\hspace{2cm}
		\begin{tikzpicture}
			\tikzstyle{node}=[fill,circle,inner sep=1.5pt]

			\def\numpts{4}
			\def\radius{.25}

			\foreach\x in {2,...,\numpts}
				{
					\draw
					(\x*360/\numpts-360/\numpts+90: \radius)
					--
					(\x*360/\numpts+90: \radius)
					;
				}
			\foreach\x in {1,...,\numpts}
				{
					\node[node,gray] (inside)
					at (\x*360/\numpts-360/\numpts+90: \radius) { } ;
					\node[node] (outside)
					at (\x*360/\numpts-360/\numpts+90: \radius+1) { } ;
					\draw
					(inside)
					--
					(\x*360/\numpts-360/\numpts+90/\numpts+90: \radius + .5)
					node[node,gray] (help1) { }
					--
					(outside)
					;
					\draw
					(\x*360/\numpts-360/\numpts-90/\numpts+90: \radius + .5)
					node[node,gray] { }
					-- (\x*360/\numpts-360/\numpts+90: \radius + .5)
					node[node,gray] { }
					-- (help1);
				}
		\end{tikzpicture}
		\hspace{2cm}
		\begin{tikzpicture}
			\tikzstyle{node}=[fill,circle,inner sep=1.5pt]

			\def\numpts{3}
			\def\radius{.3}

			\foreach\outerH in {1,...,\numpts}
				{
					\draw (\outerH*360/\numpts-360/\numpts+90:.6) circle[radius=.4];

					\begin{scope}[rotate=\outerH*360/\numpts-360/\numpts,yshift=.9cm]
						\draw (0,0) circle[radius=.4];
						\foreach\x in {1}
							{
								\node[node] (inside)
								at (\x*360/\numpts-360/\numpts+90: \radius) { } ;
							}
						\foreach\x in {2,...,\numpts}
							{
								\node[node,gray] (inside)
								at (\x*360/\numpts-360/\numpts+90: \radius) { } ;
							}
					\end{scope}
				}

			\draw (0,0) circle[radius=.4];
			\foreach\x in {1,...,\numpts}
				{
					\node[node,gray] (inside)
					at (\x*360/\numpts-360/\numpts+90: \radius) { } ;
				}
		\end{tikzpicture}
	\end{center}
	\caption{\label{fig:Hpacking-hyperedge}%
		Hyperedge gadgets for different $H$-matching problems.
		The outermost, black vertices are the vertices of the simulated
		hyperedge and the gray vertices are not supposed to be adjacent to
		any other vertex of the graph.
		\emph{Left:}
		Triangle matching.
		\emph{Middle:}
		$3$-Path matching.
		\emph{Right:}
		The general case; each circle represents a copy of~$H$.}
\end{figure}

\begin{lemma}\label{lem:Hpacking-hyperedge}%
	Let $H$ be a connected graph on $d\geq 3$ vertices.
	There is a graph $e=e(v_1,\dots,v_d)$ that contains
	$\{v_1,\dots,v_d\}$ as an independent set such that, for all
	$S\subseteq \{v_1,\dots,v_d\}$,
	the graph $e-S$ has an $H$-factor
	if and only if
	$|S|=0$ or $|S|=d$.
\end{lemma}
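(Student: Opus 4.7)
The plan is to construct $e$ from one ``central'' copy of $H$ plus $2d$ ``spoke'' copies paired over shared helper sets, and to rule out all intermediate removals by a single divisibility observation. Concretely, I would fix an arbitrary vertex $x \in V(H)$ and introduce fresh internal vertices $u_1,\dots,u_d$ together with, for each $i \in [d]$, helper vertices $z_{i,1},\dots,z_{i,d-1}$. The edges of $e$ are imposed by placing three types of $H$-copies: a central copy $H_0$ on $\{u_1,\dots,u_d\}$; for each $i$, an inner copy $H_i^-$ on $\{u_i,z_{i,1},\dots,z_{i,d-1}\}$ with $u_i$ playing the role of $x$; and an outer copy $H_i^+$ on $\{v_i,z_{i,1},\dots,z_{i,d-1}\}$ with $v_i$ playing the role of $x$ under the same bijection $V(H) \setminus \{x\} \to \{z_{i,1},\dots,z_{i,d-1}\}$ used for $H_i^-$. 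Using the same bijection ensures that the helper--helper edges imposed by $H_i^+$ and $H_i^-$ coincide, so $e$ is well-defined as a simple graph with $|V(e)| = 2d + d(d-1) = d(d+1)$.

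Next, I would verify the three required properties. Since each $v_i$ belongs only to $H_i^+$, whose remaining vertices are the helpers $z_{i,1},\dots,z_{i,d-1}$, no edge has both endpoints in $\{v_1,\dots,v_d\}$, so this set is independent. For $S = \emptyset$, the collection $\{H_0, H_1^+, \dots, H_d^+\}$ consists of pairwise vertex-disjoint copies of $H$ and together covers all $d(d+1)$ vertices of $e$, giving an $H$-factor. For $S = \{v_1,\dots,v_d\}$, the collection $\{H_1^-, \dots, H_d^-\}$ is pairwise vertex-disjoint and covers exactly $V(e) \setminus S$, again yielding an $H$-factor.

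Finally, for $0 < |S| < d$, I would close with a divisibility argument: since $|V(e)| = d(d+1)$ is a multiple of $d$, we have $|V(e) \setminus S| \equiv -|S| \pmod{d}$, which is nonzero for $0 < |S| < d$, so $e-S$ cannot contain an $H$-factor. The one step I expect to require the most care is setting up the spoke structure so that the ``all $v_i$ in'' and ``all $v_i$ out'' factors both exist simultaneously: the pairing of $H_i^+$ and $H_i^-$ over a common helper set, with $v_i$ and $u_i$ both playing the role of $x$, is precisely what achieves this; once this is in place, the divisibility observation handles every intermediate $S$ uniformly.
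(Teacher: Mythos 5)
Your construction is the same graph as the paper's: a central copy $H_0$ plus, for each $i\in[d]$, two $H$-copies $H_i^-$ and $H_i^+$ glued along the $d-1$ helper vertices, with $u_i$ and $v_i$ respectively playing the distinguished vertex role (the paper phrases this as "add edges between $u$ and $H_u-v_u$ mirroring the edges at $v_u$," which realizes exactly your $H_i^-$). Your three verifications — independence of $\{v_1,\dots,v_d\}$, the two explicit $H$-factors for $|S|\in\{0,d\}$, and the divisibility-mod-$d$ argument for $0<|S|<d$ — are also exactly the paper's, so this is a correct proof by essentially the identical route.
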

\begin{proof}
	Let $v$ be a vertex of $H$%
	.
	We construct~$e$ as in Figure~\ref{fig:Hpacking-hyperedge}.
	We start with one central copy of $H$.
	For each vertex $u\in[d]=V(H)$,
	we create a new copy $H_u$ of $H$ and denote its copy of $v$ by
	$v_u$.
	Finally, we add an edge between $u\in H$ and $w\in (H_u-v_u)$ if
	$v_u w$ is an edge of $H_u$.

	For the claim, assume that $0<|S|<d$.
	Then $|V(e-S)|$ is not an integer multiple of $d=|V(H)|$ and there
	can be no $H$-factor in $e-S$.
	For the other direction, assume that~${|S|=0}$.
	Then the subgraphs $H_u$ for $u\in[d]$ and $H$ are $d+1$ pairwise
	disjoint copies of $H$ in $e$ and form an $H$-factor of $e$.
  In the case $|S|=d$, we observe that the $d$ subgraphs~${(H_u-v_u)\cup\{u\}}$
  form an $H$-factor of	$e-S=e-\{v_1,\dots,v_d\}$.
\end{proof}

For the proof of the \matching{H} kernel lower bounds, we use the
Packing Lemma. Recall that a set $E$ of edges in a $d$-uniform hypergraph is a clique on $p$ vertices if there is a set $S$ of $p$ vertices such that $E$ contains exactly the $\binom{p}{d}$ size-$d$ subsets of $S$.
\begin{lemma}[\cite{DellVanMelkebeek}]\label{lem:packing}%
	For all integers $p \geq d \geq 2$ and $t>0$
	there is a $p$-partite $d$-uniform hypergraph $P$ on
	$O\big(p\cdot\max(p,t^{1/d+o(1)})\big)$ vertices such that
	\begin{enumerate}[label=(\roman*)]
		\item the hyperedges of $P$ are partitioned into~$t$ cliques
		      $K_1,\dots,K_t$ on~$p$ vertices each, and
		\item $P$ contains no cliques on $p$ vertices other than the~$K_i$'s.
	\end{enumerate}
	Furthermore, for any fixed $d$, the hypergraph $P$ and the $K_i$'s can be
	constructed in time polynomial in~$p$ and~$t$.
\end{lemma}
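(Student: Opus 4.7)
My plan is to construct $P$ via an algebraic approach over a finite field. I will choose a prime power $q$ of order $\max(p,\ t^{1/d+o(1)})$ and take each of the $p$ color classes $V_i$ to be a copy of $\mathbb{F}_q$, giving $|V(P)| = pq$ and matching the claimed vertex bound. After fixing $p$ distinct evaluation points $x_1,\ldots,x_p \in \mathbb{F}_q$ (possible because $q \geq p$), I associate with each polynomial $f \in \mathbb{F}_q[x]$ of degree less than $d$ the $p$-clique $K_f$ on the vertices $(f(x_1),\ldots,f(x_p))$, which contributes $\binom{p}{d}$ hyperedges. Since there are $q^d \geq t$ such polynomials, I can select a family $F = \{f_1,\ldots,f_t\}$ and declare the edge set of $P$ to be $K_{f_1}\cup\cdots\cup K_{f_t}$; the construction will be computable in time polynomial in $p$ and $t$ as required.

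Verifying property (i) is routine: a $d$-subset in distinct color classes determines a unique interpolating polynomial of degree $<d$, so the cliques $K_{f_j}$ are pairwise edge-disjoint and account for exactly the hyperedges I have placed in $P$. The heart of the argument is property (ii). A hypothetical spurious $p$-clique $(y_1,\ldots,y_p) \in \mathbb{F}_q^p$ would force that, for every $d$-subset $I \subseteq [p]$, the unique interpolant through $\{(x_i, y_i) : i \in I\}$ belongs to $F$, yet these $\binom{p}{d}$ interpolants are not all the same polynomial. Since two distinct polynomials of degree $<d$ coincide at at most $d-1$ evaluation points, such a configuration induces a highly rigid partition of $\binom{[p]}{d}$ into agreement classes, and that rigidity is the combinatorial structure I will exploit to bound the number of bad configurations.

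The main obstacle, and where I expect the nontrivial number-theoretic content to enter, is choosing $F$ so that no spurious configuration survives while keeping $t$ close to $q^d$. My first attempt is probabilistic: sample $F$ uniformly at random among size-$t$ subsets of the $q^d$ polynomials, and bound the expected number of spurious $p$-cliques by summing, over all tuples $(y_1,\ldots,y_p)$ and all partitions of $\binom{[p]}{d}$ encoding the agreement pattern, the probability that the required polynomials all lie in $F$; the $(d{-}1)$-point rigidity should force this expectation to be $o(1)$ for $t \leq q^{d-o(1)}$. Derandomization via the method of conditional expectations would then yield a polynomial-time construction. Should the union bound prove too loose in the regime where $t^{1/d}$ is close to $q$, I would fall back to an explicit Behrend-style construction, taking $F$ from evaluations of polynomials whose coefficient vectors form a progression-free subset of $\mathbb{F}_q^d$ chosen so as to rule out the relevant linear equations directly; this is the route most faithful to the original Dell--Van Melkebeek argument and is where the promised $t^{1/d+o(1)}$ (rather than $t^{1/d}$) loss would come from.
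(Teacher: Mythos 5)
The paper imports this Packing Lemma from Dell and Van Melkebeek without reproving it, so there is no in-paper proof to compare against; I'll assess your proposal on its own terms. Your interpolation framework is sound for property~(i): a $d$-subset of a putative $p$-clique determines a unique degree-$<d$ interpolant, so the cliques $K_{f_j}$ are automatically edge-disjoint, and for $d=2$ the polynomial viewpoint is essentially equivalent to the arithmetic-progression encoding that Dell and Van Melkebeek actually use.

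The gap is in your primary plan for property~(ii): a uniformly random $F$ simply cannot achieve $q = t^{1/d+o(1)}$, and this is not a loose union bound that conditional expectations could tighten --- the expected number of spurious cliques genuinely diverges in the relevant regime. Take $d=2$, $p=3$, which is exactly the case invoked in the proof of Lemma~\ref{lem:Hfactor}. A spurious triangle $(y_1,y_2,y_3)$ has three pairwise-distinct lines $L_{12},L_{13},L_{23}$, there are $\Theta(q^3)$ such triples, and each survives with probability $\Theta((t/q^2)^3)$, so the expectation is $\Theta(t^3/q^3)$; this is $o(1)$ only when $q \gg t$, a full factor $t^{1/2}$ worse than the claimed $q \approx t^{1/2+o(1)}$. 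So the Behrend-type set is not a fallback but the indispensable ingredient, and it is exactly what the $o(1)$ in the exponent pays for. Your one-line description of it (``coefficient vectors form a progression-free subset of $\mathbb{F}_q^d$'') is also too vague to be checkable: for $d=2$ the precise statement is to take $f(x)=a+bx$ with $a\in\mathbb{Z}_m$ arbitrary and $b$ in a $3$-AP-free set $B\subseteq\mathbb{Z}_m$ of size $m^{1-o(1)}$, and a spurious triangle then forces $(y_2-y_1)+(y_3-y_2)=2\cdot(y_3-y_1)/2$ with all three differences in $B$, a nontrivial $3$-AP. General $d$ requires the corresponding higher-order generalization, which is the nontrivial number-theoretic content you correctly anticipated but did not supply. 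As written, your proposal has the right scaffold but is missing precisely the step that the lemma is about.
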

The chromatic number $\chi(H)$ is the minimum number of colors
required in a proper vertex-coloring of~$H$.
The proof of \textcite{KirkpatrickHell} shows that \factor{H} is
\NP-complete even in case we are looking for an $H$-factor in
$\chi(H)$-partite graphs.
We are going to make use of that in the following reduction.
\begin{lemma}\label{lem:Hfactor}%
	There is a polynomial-time mapping reduction from $\OR(\factor{H})$ to
	\factor{H} that maps $t$-tuples of instances
	of size $s$ each to instances that have at most
	$\sqrt{t}^{1+o(1)}\cdot \poly(s)$ vertices.
\end{lemma}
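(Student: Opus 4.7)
The strategy is to devise a reduction in the style of the \pp{Vertex Cover} proof (Theorem~\ref{theorem:vertexcover}), while invoking the Packing Lemma (Lemma~\ref{lem:packing}) to cope with the $\chi(H)$-partite structure intrinsic to \factor{H}.  The reduction, combined with Lemma~\ref{lemma:kernel-proxy} at $d=2$, will yield Theorem~\ref{thm:Hfactor}.

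First, by the \NP-hardness of \factor{H} on $\chi(H)$-partite graphs (\textcite{KirkpatrickHell}) together with trivial polynomial-time padding, I would normalize the input so that every instance $G_i$ is a $\chi(H)$-partite graph on exactly $n\le\poly(s)$ vertices with color classes of equal size $n/\chi(H)$.  The reduction is then free to take advantage of this extra structure.

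Second, I would apply Lemma~\ref{lem:packing} with uniformity parameter $2$ and a suitably chosen $p$ depending on $H$ (for instance $p=\chi(H)$), producing a $p$-partite graph $P$ on $O\bigl(p\cdot\max(p,t^{1/2+o(1)})\bigr)$ vertices whose edge set partitions into $t$ cliques $K_1,\dots,K_t$ on $p$ vertices each, with no other $p$-cliques in~$P$.  Each vertex of~$P$ would be blown up into a block of $n/p$ fresh vertices respecting the color classes, producing a graph $P'$ on $\sqrt{t}^{1+o(1)}\cdot\poly(s)$ vertices, and for each $i\in[t]$ the instance $G_i$ would be overlaid onto the $p$ blocks corresponding to the vertices of $K_i$ via a bijection between color classes.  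Where necessary to bridge the gap between the graph and hypergraph semantics, I would attach hyperedge gadgets from Lemma~\ref{lem:Hpacking-hyperedge} on groups of $|V(H)|$ boundary vertices to enforce the all-or-nothing covering behavior that the $H$-factor problem needs in order to imitate hypergraph matching.

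The main obstacle is proving correctness, in particular the soundness direction: any $H$-factor of $P'$ must induce an $H$-factor of at least one~$G_i$.  Because distinct cliques $K_i$ and $K_j$ share vertices in~$P$, a careless construction would allow an $H$-factor of~$P'$ to stitch together copies of~$H$ drawn from edges of several different~$G_i$'s.  Here the Packing Lemma's guarantee that the only $p$-cliques of~$P$ are the~$K_i$'s, combined with the all-or-nothing covering property of the hyperedge gadget, has to be calibrated so that every $H$-copy appearing in the $H$-factor is pinned to the blocks of a single~$K_i$.  Once such simultaneous localization is forced, the $H$-factor of $P'$ projects onto an $H$-factor of exactly one~$G_i$, giving the required equivalence and thus the bound on the number of vertices in the compressed instance.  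Orchestrating this localization in a manner robust to arbitrary connected~$H$ with $|V(H)|\ge 3$, and choosing $p$, the blow-up factor, and the placement of the hyperedge gadgets accordingly, is the technical crux of the argument.
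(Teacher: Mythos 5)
Your proposal correctly identifies two of the three tools the paper uses — the Packing Lemma applied with uniformity $d=2$ and $p=\chi(H)$, and the hyperedge gadget of Lemma~\ref{lem:Hpacking-hyperedge} — and you correctly describe the blow-up of $P$ by a factor of $n/p$ and the overlay of $G_i$ onto the blocks of $K_i$. You also correctly pinpoint the technical crux: forcing every copy of $H$ in a hypothetical $H$-factor to localize to the blocks of a single $K_i$. However, you leave precisely this crux unresolved, and the mechanism that resolves it in the paper — the selector gadget of Lemma~\ref{lem:setpacking-choice} — never appears in your proposal.

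The selector gadget is essential in both directions, not just in soundness. For completeness, if some $G_i$ has an $H$-factor $M$, then $M$ covers only the $p$ blocks corresponding to $V(K_i)$; all the other $\Theta(n \cdot t^{1/2+o(1)})$ vertices of the blown-up construction must be covered by something, and the hyperedge gadget alone offers no way to do this. The paper attaches one coordination gadget per color class of $G'$, with the $m$ groups $U_1,\dots,U_m$ being the blocks that project to the $m$ vertices of that color class in $P$; the gadget's completeness property lets you cover all blocks except the ones used by $M$. For soundness, the coordination gadget's soundness property guarantees that in every $H$-factor exactly one block per color class is left uncovered by the gadgets; hence every non-gadget $H$-copy lives in the $p$ free blocks, and since $\chi(H)=p$ it has an edge between every pair of parts, so its projection to $P$ is a $p$-clique, which by the Packing Lemma is some $K_i$ — and crucially the same $K_i$ for all non-gadget $H$-copies, since they all inhabit the same $p$ free blocks. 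Without the coordination gadget, nothing stops two $H$-copies from projecting to different $K_i$'s.

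Also note that the role of the hyperedge gadget is not, as you suggest, to enforce localization by itself: its all-or-nothing behavior applies only to the $p$ boundary vertices of a single simulated hyperedge. Its role is purely to translate the $p$-uniform hyperedges of the selector gadget into subgraphs suitable for $H$-matching, so that the selector gadget's perfect-matching semantics become $H$-factor semantics. So the chain is: selector gadget (a $p$-uniform hypergraph) $\Rightarrow$ replace each hyperedge by the hyperedge gadget $\Rightarrow$ an $H$-factor gadget that selects exactly one block per color class. Your proposal collapses these into a vague "calibrate the hyperedge gadgets," which is where the argument actually breaks. You would need to explicitly incorporate Lemma~\ref{lem:setpacking-choice} (with $d=p$ and $m=\sqrt{t}^{1+o(1)}$) to make the reduction go through.
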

\begin{proof}
	Let $p=\chi(H)$ be the chromatic number of $H$.
	For an instance $G_1,\dots,G_t$ of $\OR(\factor{H})$,
	we can assume w.l.o.g.\ that the $G_i$ are $p$-partite graphs
	with~$n$ vertices in each part.
	We construct a graph~$G$ that has an $H$-factor if and only if some
	$G_{i}$ has an $H$-factor.
	For this, we invoke the Packing Lemma, Lemma~\ref{lem:packing}, with
	$d=2$, and we obtain a $p$-partite graph~$P$ that
	contains~$t$ cliques $K_1,\dots,K_t$ on~$p$ vertices each.
	We identify the vertex set of $G_i$ with $V(K_i)\times [n]$
	injectively in such a way that vertices in the same color class
	have the same first coordinate.
	We define an intermediate $p$-partite graph $G'$ on the vertex set
	$V(P)\times [n]$ as $G'=G_1\cup\dots\cup G_t$.
	To obtain $G$ from $G'$, we add $p$ coordination gadgets of
	Lemma~\ref{lem:setpacking-choice} with $m=\sqrt t^{1+o(1)}$ and $d=p$.
	For each color class $C\subset V(G')$, we add a coordination gadget
	where the $U_i\subset C$ are those vertices that project to the same
	vertex in $P$.
	Finally, we replace each $p$-hyperedge by the gadget in
	Lemma~\ref{lem:Hpacking-hyperedge}, which finishes the construction
	of $G$.

	For the completeness of the reduction, assume $G_i$ has an
	$H$-factor~$M$.
	To construct an $H$-factor of $G$, we start by using $M$ to cover
	the vertices $V(G_i)$ in $G$.
	The completeness of the coordination gadgets guarantees that we find
	a perfect matching in the $d$-uniform hypergraph $G'-V(G_i)$ that
	uses only hyperedges of the coordination gadgets.
	By Lemma~\ref{lem:Hpacking-hyperedge}, this gives rise to an
	$H$-factor of $G$.

	For the soundness, assume we have an $H$-factor $M$ of $G$.
	Lemma~\ref{lem:Hpacking-hyperedge} guarantees that the edge gadgets
	can be seen as $p$-hyperedges in the intermediate graph~$G'$.
	Soundness of the coordination gadgets guarantees that $M$ leaves
	exactly one group free per part.
	Now let $H'$ be a copy of $H$ that is contained in $G$ but not in
	any of the gadgets.
	Since~$H'$ has chromatic number~$p$, $H'$ intersects all $p$ parts
	and has an edge between any two distinct parts.
	By construction of $G$, this implies that the projection of $H$ onto
	$P$ is a clique.
	By the packing lemma, this clique is one of the $K_i$'s.
	Therefore, each $H'$ of the $H$-factor~$M$ that is not in one of the
	gadgets is contained in~$G_i$, which implies that~$G_i$ has an
	$H$-factor.

	The claim follows since $G$ is a graph on
	$\sqrt{t}^{1+o(1)}\poly(s)$ vertices that has an $H$-factor
	if and only if some $G_i$ has an $H$-factor.
\end{proof}
Now Lemma~\ref{lemma:kernel-proxy} immediately implies
Theorem~\ref{thm:Hfactor}, our kernel lower bounds for
\factor{H}.

\section{Multiparameter Kernelization}
\label{sec:mult-kern}

The kernelization complexities of \pp{$d$-Set Matching} and \pp{$d$-Clique
	Matching} appear to be significantly different from that of
\pp{$P_d$-Matching}.
To make this observation formal, we consider multiparameter problems:
Let $\Pi$ be a problem that has a \emph{primary} parameter~$s$ and a
\emph{secondary} parameter~$d$.
Here, $s$ should be thought of as the solution size -- for packing problems,
$s=kd$ is the number of vertices in the solution.

We want to classify the kernelization complexity of $\Pi$ further, so we assume
that $\Pi$ is fixed-parameter tractable with respect to the solution size~$s$,
that is, it has an algorithm running in time~$f(s)\poly(n)$.
Here, $\poly(n)$ means $C \cdot n^C$ for some fixed constant~$C$, which is
independent from~$d$.
We classify the running time of a kernelization algorithm~$A$ for~$\Pi$ into the
following three classes:
\begin{enumerate}[label=(\Alph*)]
	\item
	      $A$ runs in time $\poly(n)$.
	\item
	      $A$ runs in time $f(d) \poly(n)$ for some computable non-decreasing
	      function~$f$.
	\item
	      $A$ runs in time $O(n^{f(d)})$ for some computable non-decreasing
	      function~$f$.
\end{enumerate}
Moreover, we distinguish the following asymptotic kernel sizes that~$A$ may
produce:
\begin{enumerate}[label=(\arabic*)]
	\item $\poly(s)$
	\item $g(d) \poly(s)$ for some computable non-decreasing function~$g$
	\item $O(s^{g(d)})$ for some computable non-decreasing function~$g$
\end{enumerate}
A priori, this gives us nine different possible kernelization complexities for a
multiparameter problem, where A1 is the best type and C3 is the worst.

For example, \pp{$d$-Set Matching} has kernels of size $O(d! k^d)$ that can be
computed in time $\poly(n)$, so it is of type~A3.
The running time is so fast because we can find sunflowers in $d$-uniform
hypergraphs in linear time.
Moreover, our lower bound from Theorem~\ref{thm:setpacking} shows that the problem is
unlikely to have type C2 or better.
This completely classifies the multiparameter kernelization complexity of
\pp{$d$-Set Matching} (see Figure~\ref{fig:results}).

As another example, \pp{$d$-Clique Matching} has kernels of size $O(d! k^d)$
that can be computed in time $O(n^{O(d)})$, which is a kernel of type $C3$.
For this problem, we cannot find sunflowers quickly since finding even one
$d$-clique is hard:
Assume there was a B3-kernelization for \pp{$d$-Clique Matching}.
Then this algorithm works in particular for $k=1$, in which case it runs in time
$f(d)\poly(n)$ and produces and output of size $O(s^{g(d)}) = O(d^{g(d)})$.
On this kernel, we can run any brute force algorithm for finding a clique;
overall, this gives an \FPT-algorithm for the $d$-Clique problem, which implies
$\FPT=\W[1]$.

In Section~\ref{sec:packing-paths-length}, we explore the multiparameter kernelization complexity of the \pp{$P_d$-Matching} problem.
We will find it useful to apply the following standard trick, which can turn any B2-kernelization into one of
type A2 and one of type B1.

\begin{lemma}\label{lem:B2toA2B1}
	Any kernelization algorithm of type B2 gives rise to kernelization algorithms
	of type A2 and B1.
\end{lemma}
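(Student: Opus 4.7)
The plan is to combine two standard tricks: \emph{time-bounded simulation} to weaken the running-time requirement, and \emph{solve-if-small} to weaken the output-size requirement. Fix a B2 kernelization $A$ that runs in time $f(d)\cdot n^{c}$ and produces an equivalent instance of size at most $g(d)\cdot s^{c'}$, for constants $c,c'$ and computable non-decreasing $f,g$.

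To obtain an A2 algorithm, I would simulate $A$ on the input for exactly $n^{c+1}$ steps. If the simulation terminates, I output its kernel. Otherwise $f(d)\,n^{c}>n^{c+1}$, so $n<f(d)$, and the entire input has size at most $f(d)$, which I then output unchanged. The total running time is $O(n^{c+1})=\poly(n)$, and in both branches the output size is bounded by $\max\{g(d),f(d)\}\cdot\poly(s)$, giving exactly the A2 guarantee.

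To obtain a B1 algorithm, I would first run $A$ to get a kernel $K$ of size at most $g(d)\,s^{c'}$, then branch on whether $g(d)\le s$. If so, $|K|\le s^{c'+1}=\poly(s)$ and I output $K$. Otherwise $s<g(d)$, so the solution size $s$ is bounded by a function of $d$ alone; by the standing hypothesis that $\Pi$ is \FPT in $s$ I can invoke that algorithm in time $f_{\Pi}(s)\,\poly(n)\le f_{\Pi}(g(d))\,\poly(n)$, decide the instance, and output a fixed trivial yes- or no-instance of constant size. The total running time is $f_B(d)\,\poly(n)$ and the output size is $\poly(s)$ in both branches.

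The only problem-specific ingredient is the existence of constant-size trivial yes- and no-instances, which is immediate for the packing problems considered (an empty graph serves as a no-instance for $k\ge 1$, and $k$ vertex-disjoint copies of the target pattern as a yes-instance). I do not foresee any further conceptual obstacle; the rest is a routine case analysis of the two branches.
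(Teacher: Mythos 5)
Your proposal is correct and follows essentially the same two-case analysis as the paper's proof: compare $f(d)$ against $n$ (to get A2) and $g(d)$ against the primary parameter (to get B1), running $A$ in one branch and falling back to a trivial step in the other. The only cosmetic difference is that you make the A2 time check explicit via a step-counted simulation of $A$ rather than testing $f(d)<n$ directly, which is a cleaner implementation detail but not a different argument.
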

In particular, a problem has kernels of type~A2 if and only if it has
kernels of type~B1.
\begin{proof}
	Let $A$ be a B2-type kernelization algorithm.
	We construct a B1-type kernelization~$B$ as follows:
	If $g(d) < k$, then we run $A$, which outputs an instance of size $\poly(k)$.
	Otherwise, $g(d) \ge k$.
	Since $\Pi$ is FPT with respect to $k$, we can solve it in time $h(k) \poly(n)
		\le h(g(d)) \poly(n)$ which reduces the kernel size to a constant.
	Since $h$ and $g$ are computable, $B$ is of type B1.
	We construct an A2-type kernelization~$C$ as follows:
	If $f(d) < n$, then we run $A$, which takes time $\poly(n)$.
	Otherwise, $f(d) \ge n$, which means that the instance is already sufficiently
	kernelized.
\end{proof}

\definecolor{nocolor}{rgb}{0.4,0,0}
\definecolor{yescolor}{rgb}{0,0.4,0}
\tikzstyle{no}=[line width=0.7mm,nocolor,dashed]
\tikzstyle{yes}=[line width=0.7mm,yescolor]
\def\setuptimesize{%
\node (TimeA) at (0,1.2)   {A};
\node (TimeB) at (0,.6)    {B};
\node (TimeC) at (0,0)     {C};
\node (Size1) at (1.5,1.2) {1};
\node (Size2) at (1.5,.6)  {2};
\node (Size3) at (1.5,0)   {3};
\node (Time)  at (0.1,1.7) {Time};
\node (Size)  at (1.4,1.7) {Size};
}%
\begin{figure}%
  \centering%
  \begin{tikzpicture}[baseline=(Label.base)]
    \setuptimesize
    \draw[no]  (TimeA) -- (Size1);
    \draw[yes] (TimeA) -- (Size2);
    \draw[yes] (TimeB) -- (Size1);
    \node (Label) at (0.5,-0.7) {\pp{Path Matching}};
    \node at (-0.25,1.2) [anchor=east] {\footnotesize$\poly(n)$};
    \node at (-0.25,0.6) [anchor=east] {\footnotesize$f(d)\poly(n)$};
    \node at (-0.25,0.0) [anchor=east] {\footnotesize$O(n^{f(d)})$};
	\end{tikzpicture}%
	\hspace{0.5cm}
	\begin{tikzpicture}[baseline=(Label.base)]
    \setuptimesize
		\draw[no]  (TimeC) -- (Size2);
		\draw[yes] (TimeA) -- (Size3);
    \node (Label) at (0.5,-0.7) {\pp{Set Matching}};
	\end{tikzpicture}%
	\hspace{0.5cm}
	\begin{tikzpicture}[baseline=(Label.base)]
    \setuptimesize
		\draw[no]  (TimeC) -- (Size2);
		\draw[no]  (TimeB) -- (Size3);
		\draw[yes] (TimeC) -- (Size3);
		\node (Label) at (0.5,-0.7) {\pp{Clique Matching}};
    \node at (1.75,1.2) [anchor=west] {\footnotesize$\poly(s)$};
    \node at (1.75,0.6) [anchor=west] {\footnotesize$g(d)\poly(s)$};
    \node at (1.75,0.0) [anchor=west] {\footnotesize$O(s^{g(d)})$};
  \end{tikzpicture}%
	\caption{\label{fig:results}%
		Multiparameter kernelization complexities of \pp{Path}, \pp{Set}, and \pp{Clique Matching}.
		The \emph{solid lines} indicate the existence of such a kernelization,
		whereas the \emph{dashed lines} indicate that such a kernel can only exist
		if $\hypofails$ or $\FPT=\W[1]$.
		The status of all combinations not drawn is implied by the ones that are.}
\end{figure}

\section{Kernels for Graph Packing Problems}

The sunflower kernelization in Theorem~\ref{thm:setmatching-kernels}
immediately transfers to \matching{H} for any fixed graph~$H$ and
yields kernels with $O(k^{d})$ edges.  For every graph $H$,
\textcite{moser_packing} shows that \matching{H} has kernels
with $O(k^{d-1})$ {\em vertices} where~$d=|V(H)|$, but this gives only
the weaker bound $O(k^{2d-2})$ on the number of edges.
Here we show that for some specific $H$, we can obtain kernels that
are better than the $O(k^{d})$ bound implied by
Theorem~\ref{thm:setmatching-kernels}.
As a very simple example, we show this first for
\matching{K_{1,d}}, the problem of packing vertex-disjoint stars
with $d$ leaves.
\begin{proposition}\label{prop:star}
	\matching{K_{1,d}} has kernels with $O(k^2)$ edges.
\end{proposition}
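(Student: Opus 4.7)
The plan is to apply two reduction rules and then bound the edge count of the reduced instance via a greedy-maximal-packing analysis.

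\textbf{Rule 1} (high-degree centre): if some vertex $v$ has $\deg_G(v) \ge (d+1)k$, pick any $d$ neighbours $\ell_1,\ldots,\ell_d$ of $v$, commit the star $v\ell_1\cdots\ell_d$ to the output matching, and recurse on $(G-\{v,\ell_1,\ldots,\ell_d\},k-1)$. \textbf{Rule 2} (useless vertex): delete any vertex $v$ that is not contained in any $K_{1,d}$-subgraph of $G$, i.e., any $v$ with $\deg(v)<d$ all of whose neighbours also have degree less than $d$. Rule 2 is clearly sound; the soundness of Rule 1 is the main difficulty and I would establish it by an exchange argument: starting from any $k$-packing $P$ of $G$, the $d+1$ vertices $v,\ell_1,\ldots,\ell_d$ meet at most $d+1$ stars of $P$; removing these leaves a $(k-d-1)$-packing $Q$, and the remaining $d$ stars are rebuilt inside $G-\{v,\ell_1,\ldots,\ell_d\}$ by using the freed centres of the removed stars and substituting their lost leaves with vertices unused by $Q$. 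The slack $\deg_G(v)\ge (d+1)k$ ensures that enough suitable substitute leaves are always available, regardless of which specific neighbours $\ell_1,\ldots,\ell_d$ were chosen.

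Once neither rule applies, the maximum degree of $G$ is at most $(d+1)k-1=O(kd)$ and every vertex belongs to some $K_{1,d}$-subgraph. I greedily compute a maximal $K_{1,d}$-packing $M$. If $|M|\ge k$, output a trivial YES-instance. Otherwise $|V(M)|\le (k-1)(d+1)=O(kd)$ and, by maximality of $M$, every vertex $u\notin V(M)$ has strictly fewer than $d$ neighbours outside $V(M)$. A short case analysis using Rule 2 shows that every outside vertex lies within distance~$2$ of $V(M)$ in $G$ (if $u$ has no neighbour in $V(M)$, then Rule 2 forces $u$ to have an outside neighbour $c$ of degree $\ge d$, which in turn has an inside neighbour). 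Counting: distance-$1$ vertices number at most $\sum_{v\in V(M)}\deg(v)=O(k^2 d^2)$, and each of these spawns at most $d-1$ distance-$2$ vertices, giving $O(k^2 d^3)$ outside vertices in total. The edges incident to $V(M)$ total at most $|V(M)|\cdot(d+1)k=O(k^2 d^2)$, and the edges strictly inside $V(G)\setminus V(M)$ total at most $(d-1)\cdot|V(G)\setminus V(M)|/2=O(k^2 d^4)$. Treating $d$ as a fixed constant, the overall bound is $|E(G)|=O(k^2)$, as required.

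The main obstacle is the exchange argument underpinning Rule 1 — one must verify that an arbitrary choice of leaves $\ell_1,\ldots,\ell_d$ is compatible with some $k$-packing, by exhibiting the cascade of swaps that re-fills the $d$ removed stars. Once this is in place, the rest of the proof is a straightforward degree computation.
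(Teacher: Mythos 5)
Your Rule~1 is unsound, and no threshold on $\deg(v)$ rescues it: committing the star $v\ell_1\cdots\ell_d$ and deleting all $d+1$ of its vertices is too destructive, because the one asset you have---the large degree of $v$---is destroyed the moment $v$ is deleted. Concretely, take $d=2$, $k=2$, $V(G)=\{v,a_1,\dots,a_6,b,c\}$, with edges $va_1,\dots,va_6$, $ba_1$, $ba_2$, $bc$. Then $\deg(v)=6=(d+1)k$, Rule~2 deletes nothing, and the two $K_{1,2}$'s with centres $v,b$ and leaf sets $\{a_3,a_4\}$ and $\{a_1,c\}$ form a $2$-packing, so $(G,2)$ is a YES-instance. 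But if Rule~1 happens to choose $\ell_1=a_1$ and $\ell_2=a_2$, then $G-\{v,a_1,a_2\}$ retains only the edge $bc$ and contains no $K_{1,2}$ at all, so $(G-\{v,a_1,a_2\},1)$ is a NO-instance. Padding $v$ with extra pendant leaves makes $\deg(v)$ arbitrarily large while preserving the failure, so no stronger degree threshold helps, and the exchange cascade you describe cannot be carried out: the centres of the removed stars may themselves lie inside $\{v,\ell_1,\dots,\ell_d\}$, and once $v$ is gone there is no high-degree vertex left to supply substitute leaves.

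The fix is the paper's gentler rule, which keeps $v$ and the parameter $k$ intact: if $\deg(v)$ exceeds the threshold, delete a \emph{single edge} $e$ incident to $v$ rather than a whole star. Safety is then a one-line exchange: if a $k$-packing $M$ uses $e$, drop the unique star through $e$ to get $M'$; this frees $v$, and $M'$ covers only $O(dk)$ vertices, so $v$ still has at least $d$ uncovered neighbours even after discarding $e$, and a fresh star centred at $v$ can be added to $M'$ inside $G-e$. Iterating drives the maximum degree down to $O(dk)$, after which your greedy-maximal-packing count combined with Rule~2 yields $O(k^2)$ edges for fixed $d$, essentially as you wrote; the paper's version is a bit shorter because it counts outside edges directly instead of passing through distance-$2$ vertices, but reaches the same bound.
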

\begin{proof}
	Let $(G,k)$ be an instance of \matching{K_{1,d}}
	If $G$ has a vertex~$v$ of degree at least $dk+1$, let $e$ be an
	edge incident to~$v$.
	We claim that we can safely remove~$e$.
	If~$G-e$ has a $K_{1,d}$-matching of size $k$, then this also holds
	for $G$.
	For the other direction, let~$M$ be a \mbox{$K_{1,d}$-matching} of
	size $k$ in $G$.
	If $M$ does not contain $e$, it is also a matching of~$G-e$.
	Otherwise $M$ contains $e$.
	Let $M'$ be obtained from $M$ by removing the star that contains~$e$.
	Now $v$ is not contained in $M'$.
	Since $M'$ covers at most $d(k-1)$ vertices, at least $d+1$
	neighbors of $v$ are not contained in $M'$.
	Even if we remove $e$, we can therefore augment $M'$ with a
	vertex-disjoint star that is centered at $v$ and has $d$ leaves.
	This yields a star matching of size $k$ in $G-e$.

	For the kernelization, we repeatedly delete edges incident
	to high-degree vertices.
	Then every vertex has degree at most $dk$. We further remove any vertex or edge that not part of a $K_{1,d}$.
	Now we greedily compute a maximal star matching $M$ and answer 'yes'
	if $M$ has size~$k$.
	Otherwise, we claim that the graph has most $O(k^2)$ edges.
	Since $M$ covers at most $dk$ vertices, the degree bound implies
	that at most $(dk)^2$ edges are incident to~$M$. To count the edges outside $M$, observe that for every such edge $e$, at least one endpoint is adjacent to $M$: otherwise, we know that $e$ is part of a~$K_{1,d}$ (since we haven't removed~$e$) and this $K_{1,d}$ would be disjoint from $M$.
	Each vertex of $G$ outside of~$M$ have at most~$d-1$ neighbors
	outside of $M$ because they would otherwise have been added to~$M$.
	Thus there are at most $(d-1)\cdot (dk)^2$ edges not incident
	to~$M$ and so~$G$ has at most~$d^3\cdot k^2$ edges.
\end{proof}
By Theorem~\ref{thm:Hfactor}, it is unlikely that star matching
problems have kernels with $O(k^{2-\epsilon})$ edges, so the above
kernels are likely to be asymptotically optimal.

\subsection{Packing Paths of Length 3}
Let $P_d$ be the simple path with~$d$ edges. As $P_2$ is the
same as $K_{1,2}$, the problem \matching{P_2} is already covered by
Proposition~\ref{prop:star}, thus we have a $O(k^2)$ upper bound and
a matching $O(k^{2-\epsilon})$ lower bound for this problem. For
\matching{P_3}, the situation is less clear.
Using a similar strategy as in the proof of
Proposition~\ref{prop:star}, it is easy to reduce the maximum degree to
$O(k^2)$ and then argue that the kernel has $O(k^3)$ edges.
Surprisingly, the maximum degree can be further reduced to~$O(k^{1.5})$
using much more complicated combinatorial arguments.
This gives rise to kernels of size~$O(k^{2.5})$ without a tight lower
bound.
\begin{theorem}\label{th:kernel}
	\matching{P_3} has kernels with $O(k^{2.5})$ edges.
\end{theorem}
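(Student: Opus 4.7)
The plan is to follow the standard scheme of finding a maximal $P_3$-matching, reducing the maximum degree of the graph to $O(k^{1.5})$, and then counting the remaining edges. First I would apply basic cleaning by iteratively removing any vertex or edge that does not belong to any $P_3$ of~$G$, which is polynomial-time checkable. Next I would greedily compute a maximal $P_3$-matching~$M$; if $|M|\geq k$, output a trivial YES-instance. Otherwise $|V(M)|\leq 4(k-1)$, and by maximality of~$M$ the induced graph $G-V(M)$ contains no $P_3$ subgraph. A short longest-path argument shows that each connected component of $G-V(M)$ is an isolated vertex, a $K_2$, a triangle, or a star $K_{1,t}$.

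The technical heart of the proof is a reduction that brings the maximum degree of~$G$ down to $O(k^{1.5})$. A straightforward analog of Proposition~\ref{prop:star} already yields $\Delta=O(k^2)$: any size-$k$ matching uses at most two edges at~$v$ and occupies only $4k$ vertices, so the $P_3$ through any surplus edge~$vu$ can be rerouted through fresh neighbors of~$v$, making $vu$ removable. To push the threshold down to $O(k^{1.5})$ I would instead count length-two walks leaving~$v$: there are roughly $\deg(v)\cdot(\Delta-1)$ of them, whereas at most $O(k\Delta + k^2)$ can be blocked once we forbid the $4k$ matching vertices together with the edges they induce. A degree exceeding $Ck^{1.5}$ therefore leaves plenty of unblocked walks to substitute into any hypothetical matching that uses a designated edge~$vu$. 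Turning this counting into a concrete deletion rule requires a careful case distinction over the role~$v$ plays in its $P_3$ (one of four positions, with differing substitution options), and in each case exhibiting a replacement $P_3$ through~$v$ that is vertex-disjoint from the other $k-1$ elements of the matching.

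With $\Delta(G)=O(k^{1.5})$ in hand, the edge count splits into three groups. Edges with both endpoints in $V(M)$ contribute at most $\binom{4k}{2}=O(k^2)$. Crossing edges between $V(M)$ and $V(G)\setminus V(M)$ contribute at most $|V(M)|\cdot\Delta=O(k)\cdot O(k^{1.5})=O(k^{2.5})$. For the edges strictly inside $G-V(M)$, the star/triangle component structure shows that each component has at most as many edges as vertices; combined with a pruning rule that caps the number of leaves in each star (any leaf beyond those potentially usable in a size-$k$ matching is removable by the same rerouting principle as in Proposition~\ref{prop:star}), this keeps the number of vertices outside $V(M)$ at $O(k^{2.5})$ and hence bounds the internal edges by $O(k^{2.5})$ as well. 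Summing over the three groups gives the claimed bound.

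The main obstacle is the sharpened degree reduction. The naive $O(k^2)$ bound on~$\Delta$ and the resulting $O(k^3)$ kernel fall out quickly, but reaching the exponent~$2.5$ requires an extremal argument specific to length-three paths that exploits the independent substitutability of a $P_3$'s internal edge versus its endpoints. This fragility is consistent with the remark in the introduction that the $P_3$-kernel algorithm does not extend to minor variants such as $P_3$'s starting in a prescribed set.
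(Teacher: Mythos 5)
Your overall plan matches the paper's: reduce the maximum degree to $\Delta=O(k^{1.5})$, then greedily take a maximal $P_3$-matching on a vertex set $S$ of size at most $4k$ and count edges inside $S$, across, and in the star/triangle components of $G-S$. That counting phase is fine and gives $O(\Delta\cdot k)=O(k^{2.5})$ edges once the degree bound is in place.

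The gap is in the degree reduction, which is the technical heart of the theorem, and which your walk-counting heuristic does not actually establish. First, the estimate ``roughly $\deg(v)\cdot(\Delta-1)$ length-two walks from $v$'' is wrong in precisely the regime that matters. After the easy reduction rule (a matching of size $\geq 4k+2$ in $N(v)$ lets you delete any edge at $v$), the remaining neighbors $X=N(v)\setminus M$ form an independent set whose only neighbors outside $v$ lie in a set $M$ of size $O(k)$. In that configuration the number of length-two walks from $v$ is only $O(\deg(v)\cdot k)$, not $\deg(v)\cdot\Delta$, so the counting inequality you want to set up does not give $\Delta=O(k^{1.5})$. Second, and more fundamentally, the hard case is an edge $vx$ used as the \emph{middle} edge of a path $avxu$: to replace $x$ locally you need another vertex that is a common neighbor of both $v$ and the specific $u$, not merely a vertex at distance two from $v$. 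When all common neighbors of $v$ and $u$ are occupied, one must trace the paths through those common neighbors to their ``mates'' in $M$ and show some mate has a free neighbor in $X$ to reroute into; this is the content of the paper's ``good vertex'' definition and Lemma~\ref{lem:move}. Finding such a vertex $x$ with only good $M$-neighbors is itself nontrivial and is where the $k^{1.5}$ threshold actually comes from: the paper runs a weighted-matching subroutine to build an auxiliary graph $H\subseteq X\times M$, upper-bounds the number of length-two paths in $H$ by $O(k^2)$ via the matching weights, lower-bounds it by $\Omega(|X|^2/k)$ via convexity of $\binom{\cdot}{2}$, and concludes $|X|=O(k^{1.5})$. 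Your proposal names the right target but supplies neither the rerouting mechanism nor the second-moment counting that makes the exponent $1.5$ rather than $2$ attainable, so as written it only yields the easy $\Delta=O(k^2)$ bound and hence an $O(k^3)$ kernel.
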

We prove Theorem~\ref{th:kernel} by showing that the maximum degree~$\Delta$
of the graph can be reduced to~$O(k^{1.5})$.
Once we have an instance~$G$ with maximum degree $\Delta$, we can
obtain a kernel of size~$O(\Delta \cdot k)$ with fairly standard
arguments as follows.
First, we greedily compute a maximal $P_3$-matching.
If we find at least~$k$ paths, then we are done.
Otherwise let~$S$ be the set of at most~$4k$ vertices in the paths.
As every vertex has degree at most~$\Delta$, there are at most~$4 k
	\Delta$ edges incident to~$S$.
Now let us count the number of edges in~$G\setminus S$.
The graph $G\setminus S$ does not contain paths of length $3$, so
every connected component of~$G\setminus S$ is either a triangle or a
star.
Therefore, the average degree is at most $2$ in $G\setminus S$.
If a component of $G\setminus S$ is not adjacent to~$S$, it can be
safely removed without changing the solution.
If a component of $G\setminus S$ has a vertex $v$ with at least two
neighbors in $G\setminus S$ that have degree one in $G$, then we keep
only one of them.
Since every solution uses at most one of them, they are
interchangeable.
After doing this, every component of $G\setminus S$ has at most two
vertices not adjacent to~$S$ in~$G$.
This means that a constant fraction of the vertices in $G\setminus S$
is adjacent to~$S$.
As there are at most $4\Delta k$ edges incident to $S$, this means
that there are at most $O(\Delta k)$ vertices in~$G\setminus S$.
Taking into account that the average degree is at most two
in~$G\setminus S$, we see that there are at most~$O(\Delta\cdot k)$ edges
in~$G\setminus S$.
This yields kernels with $O(k^{2.5})$ edges.
It remains to argue how to reduce the maximum degree to~$\Delta$.

\textbf{Degree reduction.}
Let $G$ be a graph that contains a vertex $v$ with more than $\Delta$
neighbors.
In the following, we call any $P_3$-matching of size $k$ a
\emph{solution}.
Our kernelization procedure will find an edge~$e$ incident to~$v$ that
can be \emph{safely removed}, so that $G$ has a solution if and only
$G\setminus e$ has a solution.
The most basic such reduction is as follows.
\begin{lemma}\label{lem:many2path}
	If there there is a matching $a_1b_1$, $\dots$, $a_nb_n$ of size
	$n\ge 4k+2$ in $G\setminus v$ such that every~$a_i$ is a neighbor
	of $v$, then any single edge $e$ incident to~$v$ can be safely
	removed.
\end{lemma}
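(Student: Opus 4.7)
The plan is to show that any $P_3$-matching $M$ of size $k$ in $G$ can be transformed into one that does not use $e$. Write $e = vu$. If $e \notin M$, then $M$ itself lives in $G \setminus e$ and we are done. Otherwise let $P$ be the unique $P_3$ of $M$ containing $e$ and set $M' = M \setminus \{P\}$, so $M'$ is a $P_3$-matching of size $k-1$ covering exactly $4k-4$ vertices, and moreover $v, u \notin V(M')$ since both lie in $P$. It will be enough to exhibit a single new $P_3$ through $v$ that avoids $V(M')$ and does not use the edge $e$; adding it to $M'$ then yields a $P_3$-matching of size $k$ in $G \setminus e$.

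I will build this new $P_3$ by stitching together two of the supplied matching edges through $v$. Call an index $i \in [n]$ \emph{good} if $\{a_i, b_i\} \cap V(M') = \emptyset$ and, in addition, $a_i \ne u$. Since the matching edges $a_i b_i$ are pairwise vertex-disjoint, each vertex of $V(M')$ spoils at most one matching edge, so at most $|V(M')| = 4k-4$ indices fail the first condition; the requirement $a_i \ne u$ rules out at most one further index. Hence the number of good indices is at least $n - (4k-4) - 1 \ge 5$, which lets me choose two distinct good indices $i \ne j$.

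For such $i$ and $j$, I take the new path to be the $P_3$ with edges $b_i a_i$, $a_i v$, $v a_j$. None of these equals $e$: the first is an edge of $G \setminus v$, while the other two are ruled out by $a_i \ne u$ and $a_j \ne u$. The four vertices $b_i, a_i, v, a_j$ are distinct, because $v$ is not a vertex of $G \setminus v$ and the matching edges $a_i b_i$, $a_j b_j$ are vertex-disjoint for $i \ne j$. The three non-$v$ vertices lie outside $V(M')$ by goodness, and $v$ itself lies outside $V(M')$ by construction, so the new $P_3$ is vertex-disjoint from $M'$ and we obtain a $P_3$-matching of size $k$ in $G \setminus e$. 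The main thing to be careful about is the counting, which is tight: the bound $n \ge 4k + 2$ is chosen precisely so that after discarding the $4k-4$ matching edges meeting $V(M')$ and the at-most-one edge where $a_i = u$, at least two good indices still remain.
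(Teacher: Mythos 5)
Your proof is correct and takes essentially the same approach as the paper: since the solution covers at most $4k$ vertices, some two of the $\geq 4k+2$ disjoint matching edges are spare, and the path through $e$ can be rerouted as $b_i a_i v a_j$. The only difference is bookkeeping—you discard the path through $e$ before counting and explicitly exclude $a_i=u$, which gives a little more slack than the paper's tally, but the argument is the same.
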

\begin{proof}
	Suppose that there is a solution containing a path going
	through~$e$.
	The paths in the solution cover $4k$ vertices, thus without loss of
	generality, we can assume that $a_1$, $b_1$, $a_2$, $b_2$ are not
	used.
	We replace the path containing $e$ with the path $b_1a_1va_2$
	to obtain a solution of $G\setminus e$.
\end{proof}
Let us greedily find a maximal matching $a_1b_1,\dots,a_nb_n$ in
$G\setminus v$ with the requirement that every~$a_i$ is a neighbor
of~$v$.
If $n\ge 4k+2$, we can safely remove an arbitrary edge incident
to $v$ by Lemma~\ref{lem:many2path} and then proceed inductively.
Otherwise, let $M=\{a_1,b_1,\dots,a_n,b_n\}$ be the set of at most
$8k+2$ vertices that are covered by this matching.
Let $X\coloneqq N(v)\setminus M$.
Now every neighbor~$y$ of a vertex $x\in X$ is in $M\cup\{v\}$ since
the matching~$M$ could otherwise have been extended by the edge~$xy$.
In particular, $X$ induces an independent set.
It holds that $|X|\ge 100k$ since otherwise the degree of~$v$ is
smaller than~$\Delta$.

The following technical definition is crucial in our kernelization
algorithm.
\begin{definition}
	Let $u$ be a vertex of $M$ and let $X_u=N(u)\cap X$ be the
	neighborhood of~$u$ in~$X$.
	We call $u$ {\em good} if every set $S\subseteq M$ satisfies
	the following property:
	If there is a matching between $S$ and~$X_u$ of size $|X_u|-1$, then
	$S$ has more than $4k$ neighbors in $X$.
\end{definition}

It is not obvious how to decide in polynomial time whether a vertex is good.
Nevertheless, if certain vertices of~$M$ are known to be good, we can make progress by deleting an edge.
\begin{lemma}\label{lem:move}
	If $x \in X$ has only good neighbors in $M$, then the edge $vx$ can
	be safely removed.
\end{lemma}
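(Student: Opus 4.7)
The plan is to show that every $P_3$-matching $M^*$ of $G$ of size $k$ that uses the edge $vx$ can be transformed into a $P_3$-matching of $G\setminus vx$ of the same size; the converse direction is trivial. Let $P^*\in M^*$ be the path containing $vx$ and set $U\coloneqq V(M^*\setminus P^*)$, so $|U|=4(k-1)$. The degenerate case $N(x)\cap M=\emptyset$ (in which $x$ has only $v$ as a neighbor) is handled directly by exploiting the high degree of $v$, so assume henceforth that $N(x)\cap M\neq\emptyset$ and fix some $u\in N(x)\cap M$. By hypothesis $u$ is good, and setting $S=\emptyset$ in the definition of goodness immediately forces $|X_u|\ge 2$.

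If $|X_u\setminus U|\ge 2$, pick distinct $y_1,y_2\in X_u\setminus U$ with $y_2\neq x$ (possible since at most one of them equals $x$). The path $P'\coloneqq y_1-u-y_2-v$ lies in $G$ since $y_1,y_2\in N(u)\cap X$ and $y_2\in X\subseteq N(v)$; it avoids $vx$ because $y_2\neq x$, and it is vertex-disjoint from $M^*\setminus P^*$ because $y_1,y_2\notin U$ while $u,v\in V(P^*)$. Replacing $P^*$ by $P'$ in $M^*$ yields the required matching of $G\setminus vx$.

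Otherwise $X_u\setminus U=\{x\}$, i.e., each $y\in X_u\setminus\{x\}$ lies in some path $Q_{i(y)}\in M^*\setminus P^*$. Since $N(y)\subseteq M\cup\{v\}$ and $v\in V(P^*)$, the vertex $y$ has a neighbor $m(y)\in M$ inside $Q_{i(y)}$. Because $X$ is independent, no path of $M^*$ contains three $X$-vertices, and a short case analysis on the three possible positions of two $X$-vertices inside a $P_3$ shows that the $m(y)$'s can be chosen pairwise distinct. Letting $S\coloneqq\{m(y):y\in X_u\setminus\{x\}\}\subseteq M$, the edges $\{m(y)y\}$ form a matching of size $|X_u|-1$ between $S$ and $X_u$, so by the goodness of $u$, $|N(S)\cap X|>4k$. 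Since $|U|=4k-4$, this leaves at least four vertices of $N(S)\cap X$ outside $U\cup\{x\}$; pick such a $z$, an adjacent $m\in S$, and $y_0\in X_u\setminus\{x\}$ with $m=m(y_0)$ inside the path $Q_0\in M^*\setminus P^*$.

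We replace $P^*$ and $Q_0$ in $M^*$ with two new $P_3$'s. The first is $P'\coloneqq v-y_0-u-x$, whose edges $vy_0$, $y_0u$, and $ux$ all lie in $G$ and none of which equals $vx$. The second, $Q_0'$, is obtained from $Q_0$ by substituting $z$ for $y_0$: when $y_0$ is an endpoint of $Q_0$, we simply replace $y_0$ by $z$, giving a valid $P_3$ because $z$ is adjacent to the unique neighbor of $y_0$ in $Q_0$. When $y_0$ is an inner vertex of $Q_0$, a short case analysis exploiting the two possible choices of $m(y_0)$ (the two $M$-neighbors of $y_0$ inside $Q_0$) together with the availability of several candidate $z$'s produces an analogous $P_3$. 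Either way, $P'$ and $Q_0'$ are vertex-disjoint from each other and from $M^*\setminus\{P^*,Q_0\}$, since the only exchanged vertex is $y_0\leftrightarrow z$ and the remaining vertices of $P'$ and $Q_0'$ come from $V(P^*)\cup V(Q_0)$. Thus $(M^*\setminus\{P^*,Q_0\})\cup\{P',Q_0'\}$ is the desired $P_3$-matching of $G\setminus vx$ of size $k$. The main technical obstacle is the inner subcase of Case~2, which is resolved by a careful case analysis using the ingredients already outlined but no new ideas.
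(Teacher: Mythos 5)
Your overall plan matches the paper's — argue by exchange that any solution using $vx$ can be rerouted — but two of your key steps have genuine gaps.

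\textbf{Choice of $u$.} You fix an \emph{arbitrary} $u\in N(x)\cap M$, and both of your replacement paths ($y_1\,u\,y_2\,v$ in Case 1, $v\,y_0\,u\,x$ in Case 2) use $u$ as an internal vertex; you then justify disjointness from the rest of $M^*$ by asserting ``$u,v\in V(P^*)$''. But nothing forces $u\in V(P^*)$: it was an arbitrary $M$-neighbor of $x$ and may well lie on some other path $Q\in M^*\setminus P^*$, in which case $P'$ and $Q$ collide at $u$ and you do not obtain a matching. The paper sidesteps this by splitting on whether $vx$ is the \emph{middle} edge of its path: if it is, then $P^*=avxu$ and the fourth vertex $u$ is automatically in $V(P^*)$ and in $M$ (since $N(x)\subseteq M\cup\{v\}$); if it is not, $x$ is an endpoint and the high-degree argument for $v$ alone suffices. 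Your split on whether $N(x)\cap M=\emptyset$ does not recover this: even when $N(x)\cap M\neq\emptyset$, $vx$ can be an endpoint edge (e.g.\ $P^*=x v b c$), and then no $M$-neighbor of $x$ need be on $P^*$.

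\textbf{The inner-vertex subcase.} In Case 2 you handle $y_0$ inner in $Q_0$ by ``a short case analysis'' producing a local substitution of $z$ for $y_0$ in $Q_0$, but this does not work: if $Q_0=a\,y_0\,c\,d$, the candidate $Q_0'=a\,z\,c\,d$ requires $z$ to be adjacent to \emph{both} $a$ and $c$, whereas $z$ is only guaranteed adjacent to $m(y_0)\in\{a,c\}$; the other edge is not available. The paper's resolution is not a local substitution but a cross-path reallocation: with the mate $w$ defined specifically as the \emph{endpoint} of $Q_0$ adjacent to $y_0$, it replaces $P=avxu$ and $Q=w\,y_0\,c\,d$ by $u\,y_0\,c\,d$ and $a\,v\,y\,w$. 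This is exactly the step your sketch elides and it depends on the particular choice of mate; letting $m(y_0)$ be ``either $M$-neighbor'' as you do forfeits the structure that makes the exchange valid.
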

\begin{proof}
	We argue that if there is a solution then there is also a solution
	that does not use $vx$.
	If~$vx$ is used as the first or the third edge of a path, the high
	degree of $v$ makes sure that there is a neighbor $y$ of $v$ not used by the
	solution, and we can replace $vx$ by $vy$.
	Now consider a solution that contains a path $P=avxu$ using $vx$ as
	its middle edge; by assumption, $u\in M$ is good.

	By definition, the set $X_u$ contains all vertices $x'$ of $X$ that
	are common neighbors of~$u$ and~$v$.
	Hence, if some vertex $x'\in X_u\setminus x$ is not used by the
	solution, then we can replace~$P$ by $avx'u$.
	Now assume that every $x'\in X_u\setminus x$ is part of some path.
	None of these paths contain $v$.
	If $x'$ is the endpoint of a path, then the {\em mate} of $x'$ is
	its unique neighbor in the path; if~$x'$ is in the middle of a path,
	then the mate of $x'$ is the endpoint that is adjacent to $x'$ in
	the path.
	Recall that every neighbor of $x'\in X$ is in $M\cup\{v\}$, so the
	mate of any~$x'\in X_u\setminus x$ is contained in~$M$.
	The vertices in~$X_u\setminus x$ have distinct mates
	even if two vertices of $X_u\setminus x$ are on the same path.
	This gives rise to a matching between~$X_u\setminus x$ and the
	set~$S\subseteq M$ of all mates of vertices in $X_u\setminus x$.
	Since this matching has size $|X_u|-1$ and $u$ is good, $S$ has at
	least $4k+1$ neighbors in~$X$.
	Thus, some neighbor $y\in X$ of~$S$ is not used by the solution.

	Let $x'\in X_u\setminus x$ be a vertex whose mate $w\in S$ is
	adjacent to a vertex $y\in X$ that is not used by the solution.
	Since~$w$ is the mate of~$x'$, the edge~$wx'$ occurs in a path~$Q$
	of the solution.
	We distinguish two cases.
	If~$x'$ is an endpoint of~$Q$, then we replace the paths~$P=avxu$
	and~$Q=x'wcd$ by the two new paths~$avx'u$ and~$ywcd$.
	If~$x'$ is not an endpoint of~$Q$, then we replace~$P=avxu$
	and~$Q=wx'cd$ by~$ux'cd$ and~$avyw$.
	These are paths since~$x'$ is a common neighbor of~$v$ and~$u$,
	and~$y$ is a common neighbor of~$v$ and~$w$.
	In all cases we found solutions that do not use~$vx$, so~$vx$ can be
	safely removed.
\end{proof}

Next we show that, if $v$ has sufficiently large degree, then we can find a vertex~$x\in X$ for which the the reduction rule in Lemma~\ref{lem:move} is applicable.
\begin{lemma}\label{lem:findgood}
	There is a polynomial-time algorithm that, given a vertex $v$ of
	degree larger than $\Delta=C\cdot k^{1.5}$ for some universal constant $C$, finds a vertex $x\in X$ that has only
	good neighbors in $M$.
\end{lemma}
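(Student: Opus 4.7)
My plan is to identify the bad vertices $B \subseteq M$ and show that, when $\deg(v) > Ck^{1.5}$ for a sufficiently large universal constant $C$, strictly fewer than $|X|$ vertices of $X$ have a neighbor in $B$, so that the desired $x$ exists and can be returned by direct inspection. On the algorithmic side, testing whether a fixed $u \in M$ is bad amounts to minimizing $|N(S) \cap X|$ over all $S \subseteq M$ that admit a matching of size $|X_u|-1$ into $X_u$; this can be encoded as a min-cut/max-flow computation on a suitable bipartite auxiliary network and hence decided in polynomial time.

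The starting point of the existence argument is the structural bound $|X_u| \leq 4k+1$ for every bad $u$: if $S$ witnesses $u$'s badness, then the $|X_u|-1$ vertices of $X_u$ matched into $S$ all lie in $N(S) \cap X$, which has size at most $4k$. The naive pigeonhole then yields $\sum_{u \in B} |X_u| \leq |M|(4k+1) = O(k^2)$ vertices of $X$ with a bad neighbor, which suffices only when $\deg(v) = \omega(k^2)$. To close the gap down to $k^{1.5}$, I would split $B$ by the size of $X_u$: bad vertices $u$ with $|X_u| \leq \sqrt{k}$ contribute at most $|M|\sqrt{k} = O(k^{1.5})$ edges to $X$ in total. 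For bad vertices with $|X_u| > \sqrt{k}$, the witness $S_u$ satisfies $|S_u| \geq \sqrt{k}-1$ and $|N(S_u)\cap X| \leq 4k$, and I would argue that at most $O(\sqrt{k})$ such vertices can coexist via a K\"onig/Hall-type argument on the bipartite graph between $M$ and $X$: too many large but small-neighborhood witness sets packed inside the $O(k)$-sized $M$ force overlapping structure that contradicts the bound $|N(S_u)\cap X| \leq 4k$. Multiplying by the per-vertex contribution $|X_u|\leq 4k+1$ yields another $O(k^{1.5})$, so choosing $C$ large makes the total count strictly less than $|X| \geq Ck^{1.5} - O(k)$.

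I expect the main obstacle to be establishing the refined bound $|\{u \in B : |X_u| > \sqrt{k}\}| = O(\sqrt{k})$. The direct estimate $|B| \leq |M|$ is too crude, and the tight version requires a careful structural analysis of how small-neighborhood subsets of $M$ interact with the high-degree side of the bipartite $(M,X)$ graph. I would approach it via LP duality applied to this bipartite graph, combined with an exchange argument that rules out too many pairwise-distinct dense witness sets; if this route proves too rigid, a second-moment calculation on the edge distribution from $M$ to $X$ may achieve the same bound with less combinatorial overhead. Either way, once this bound is in place, the proof is finished by the pigeonhole described above and by reading off $x$ from any vertex of $X$ that avoids the identified set $B$.
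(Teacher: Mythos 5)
Your proposal takes a genuinely different route from the paper and, unfortunately, has a gap at precisely the step you flag as the ``main obstacle.'' The paper does \emph{not} try to bound the number of bad vertices at all. Instead, it iteratively builds an auxiliary bipartite graph $H$ between $X$ and $M$ in which every vertex of $X$ has degree at most one: for each group $X^i\subseteq X$ (vertices whose distinguished not-yet-processed $M$-neighbor is $m_i$), it computes a \emph{minimum-weight} matching of cardinality $|X^i|-1$ into $M$, where the weight of a candidate edge $xy$ is the \emph{current} degree of $y$ in $H$. If this weighted matching has weight $>4k$, then $m_i$ is certified good (a one-sided test that is easy to perform); if the process never certifies a new good vertex, then the total number of length-two paths in $H$ is $O(k^2)$ by the weight budget, while convexity (AM--QM applied to the $\deg_H$ distribution on $M$) forces the same count to be $\Omega(|X|^2/k)$, giving $|X|=O(k^{1.5})$ outright. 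The twist --- charging by \emph{current} $H$-degree rather than by actual neighborhood size in $G$ --- is exactly what makes the accounting close.

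Your plan depends on the claim that the set $B$ of bad vertices with $|X_u|>\sqrt{k}$ has size $O(\sqrt{k})$, and you offer only speculative routes (``LP duality,'' an ``exchange argument,'' or a ``second-moment calculation'') with no concrete proof. This is not a cosmetic omission: it is unclear that the bound is even true. The witness sets $S_u$ for different bad vertices $u$ may reuse the same vertices of $M$ heavily, so the naive packing intuition (``too many large-witness sets cannot fit'') does not immediately give a contradiction, and I could not see how to rule out $\Theta(k)$ bad vertices of large $X_u$ from first principles. The paper sidesteps exactly this difficulty by never bounding $|B|$. Separately, your claim that goodness is decidable in polynomial time by a min-cut on a bipartite auxiliary network is also doubtful. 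The objective $S\mapsto |N(S)\cap X|$ is submodular, but you are minimizing it subject to the constraint that $S$ admits a matching of size $|X_u|-1$ into $X_u$, and this is not a standard flow/cut formulation. The paper explicitly remarks that ``it is not obvious how to decide in polynomial time whether a vertex is good'' and therefore settles for a one-sided certificate of goodness inside the algorithm. To repair your proposal you would need to either (a) prove the structural bound on large-$X_u$ bad vertices, and (b) replace the min-cut claim with a genuine poly-time procedure or a one-sided test in the style of the paper; both of these are substantial missing pieces.
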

\begin{proof}
	We maintain an increasing set $M'\subseteq M$ of vertices satisfying the
	invariant that all vertices in $M'$ are good.
	Initially we set $M'=\emptyset$.
	We repeat a procedure that either outputs $x$ as required or adds a
	new good vertex to $M'$.
	If some $x\in X$ does not have neighbors in $M\setminus M'$, then by
	the invariant all neighbors of $x$ in $M$ are good and we can
	output~$x$.
	Otherwise, with $M\setminus M'=\{m_1,\dots,m_t\}$, we can define a
	partition $X^1,\dots,X^t$ of $X$ such that every vertex of $X^i$ is
	adjacent to $m_i$.
	Some of the $X^i$ can be empty.

	We construct a bipartite graph~$H$ that is a subgraph of the
	bipartite graph between~$X$ and~$M$.
	Initially, $H$ has the vertex set $X\cup M$ and no edges.
	We preserve the invariant that every vertex of~$X$ has degree at
	most one in~$H$.

	For every $1\leq i\leq t$ with $|X^i|>1$,
	we add edges to $H$ in the following way.
	For every edge $xy$ of~$G$ with $x\in X^i$ and $y\in M$, let the
	\emph{weight} of $xy$ be the degree $\deg_H(y)$ of $y$ currently in $H$.
	In this weighted graph~$G$, we now compute a matching between~$X^i$
	and~$M$ that has cardinality exactly $|X^i|-1$ and weight at
	most~$4k$.
	This can be done in polynomial time using standard algorithms.
	If there is such a matching, we add all edges of the matching to~$H$
	and continue with the next~$i$.
	This preserves the invariant that every vertex of $X$ has degree at
	most one in~$H$ since the sets $X^i$ are disjoint.
	If there is no such matching, then we claim that $m_i$ is good.
	Assume for contradiction that there is a matching of cardinality
	$|X_{m_i}|-1$ between $X_{m_i}=N(m_i)\cap X$ and a subset
	$S\subseteq M$ that has at most $4k$ neighbors in $X$. Note that this matching matches at least $|X^i|-1$ elements of~$X^i$ to vertices in $S$.
	As~$H$ is a subgraph of $G$, it follows that $S$ has at most $4k$
	neighbors in~$H$.
	This implies that $\sum_{y\in S}\deg_H(y)\le 4k$
	since every vertex of $X$ has degree at most one in~$H$, so the sum
	of the degrees of vertices in $S$ is exactly the size of the
	neighborhood of~$S$ in~$H$.
	This contradicts with the fact that we did not find a suitable
	matching of weight at most~$4k$.
	Thus~$m_i$ is good and can be added to~$M'$.

	We show that unless $|X|=O(k^{1.5})$, the above process finds a good
	vertex in $M$.
	Suppose that the process terminates without finding a good vertex.
	Let $N$ be the number of paths of length two in the final graph $H$
	we obtained.
	As the degree of every vertex of $X$ is at most one in $H$, every
	path of length two is of the form $abc$ with $a\in X^i$, $b\in M$,
	and $c\in X^j$ for some $1\le i , j \le t$.
	Furthermore, we have $i\neq j$: the edges incident to $X^i$ form a
	matching in $H$.
	For some~$i$, let us count the number of paths with $a\in X^i$ and
	$c\in X^j$ for $j<i$.
	Consider a vertex $a\in X^i$ that is not isolated in $H$; it has a
	unique neighbor $b$ in the graph~$H$.
	Consider the graph $H'$ at the step of the algorithm before finding
	the matching incident to $X^i$, and let~$d$ be the degree of $b$
	in~$H'$.
	Then it is clear that $H$ contains exactly $d$ paths of length two
	connecting~$a$ to a vertex of $X^j$ with $j<i$:
	the vertex $b$ has exactly $d$ neighbors in
	$X^1\cup \dots \cup X^{i-1}$.
	Thus if~$S_i$ is the set of vertices that $X^i$ is matched to, then
	the total number of paths between $X^i$ and $\bigcup_{j=1}^{i-1}X^j$
	is exactly the total degree of $S_i$  in $H'$, which is at most $4k$
	by the selection of the matching.
	Thus the total number~$N$ of paths can be bounded by
	$t\cdot 4k\le (8k+2)\cdot 4k=O(k^2)$.

	On the other hand, the number of paths of length two containing
	$m\in M$ as their middle vertex is exactly
	\begin{equation*}
		\binom{\deg_H(m)}{2}\ge \deg_H(m)^2/4-1 .
	\end{equation*}
	Note that $\sum_{m\in M}\deg_H(m)\ge|X|-|M|$: in every nonempty~$X^i$,
	there is exactly one vertex that is isolated in~$H$, and every other
	vertex has degree one.
	Thus the total number of paths is exactly
	\begin{align*}
		\sum_{m\in M} \binom{\deg_H(m)}{2}
		 & \ge \frac{1}{4}\sum_{m\in M}\deg_H(m)^2 -|M|
		\\
		 & \ge \frac{1}{4|M|}\left(\sum_{m\in M}\deg_H(m)\right)^2- |M|
		\\
		 & \ge \frac{1}{4|M|}(|X|-|M|)^2-|M|=\Omega(|X|^2/k)
	\end{align*}
	where we used the relationship between arithmetic and quadratic mean
	in the second inequality, and the facts $|M|\le 8k+2$, $|X|>100k$ in
	the last step.
	Putting together the upper bound $N=O(k^2)$ obtained earlier on the number of path of length 2 in $H$ and the lower bound $N=\Omega(|X|^2/k)$ we have just proved, it follows that
	$|X|=O(k^{1.5})$.

	Thus, we can choose $\Delta=C\cdot k^{1.5}$ for some large enough
	constant $C>0$ so that the above procedure is guaranteed to find a
	vertex $x\in X$ that contains only good neighbors in~$M$.
\end{proof}


\subsection{Packing Paths of Length \texorpdfstring{\boldmath$d$}{d}}
\label{sec:packing-paths-length}

In this section, we study the kernelization of \pp{$P_d$-Matching} which
is given an instance~$(G,k)$ to decide
whether there are at least~$k$ vertex-disjoint paths with~$d$ edges
each.  For $d=1$, this is equal to the problem of finding a standard
matching of size at least~$k$ and therefore computable in polynomial
time.  The results of \textcite{KirkpatrickHell} imply that the
problem is \NP-complete for any constant $d\geq 2$.
Theorem~\ref{thm:setmatching-kernels} implies that the problem has
kernels of size~$O(k^d)$ for every constant~$d$.  Furthermore, we know
kernels of size $O(k^2)$ for $d=2$ (Proposition~\ref{prop:star}) and
$O(k^{2.5})$ for $d=3$ (Theorem~\ref{th:3path}).  The best lower bound
we have is $O(k^{2-\epsilon})$ for every $d\ge 2$.  Surprisingly, we
prove that \pp{$P_d$-Matching} has kernels whose size is bounded by a
polynomial whose degree is independent from~$d$.  We first state a
result about the existence of an annotated kernel of type B2 (in the
notation of Section~\ref{sec:mult-kern}) for \pp{$P_d$-Matching}. With
some additional arguments, it will give a proof of Theorem~\ref{th:uniformpath}.

\begin{theorem}\label{thm:pathpacking}
	For every $d\in\N$, \pp{$P_d$-Matching} has annotated kernels with
	$O\paren[\big]{ d^{d^2} d^7 k^3 }$ vertices.
	The kernelization algorithm runs in time~$\poly(d^{d^2} n)$.
\end{theorem}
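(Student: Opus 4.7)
The plan is to follow a ``maximal packing plus representative marking'' paradigm, with the technical work absorbed into a classification of how potential solution paths can meet a small structural witness set, so that the exponent of $k$ remains independent of $d$ while all the $d$-dependence is packed into a combinatorial enumeration factor. First we greedily compute a maximal $P_d$-matching $M$ in $G$. If $|M|\ge k$, the kernel is a trivial YES-instance; otherwise $|M|<k$, so $S\coloneqq V(M)$ satisfies $|S|\le(d+1)k$, and by maximality of $M$ every $P_d$ subgraph of $G$ meets $S$. Equivalently, $G-S$ is $P_d$-free, so each of its connected components has diameter at most $d-1$; this is the structural fact that we will exploit.

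The next step is to classify each copy of $P_d$ in $G$ by a combinatorial \emph{type}: the ordered pattern recording which of its $d+1$ positions lie in $S$, together with a description of how each maximal $(G-S)$-segment of the path embeds inside its diameter-bounded component of $G-S$, plus a constant-size tuple of ``anchor slots'' from $S$. Because each non-$S$ segment has fewer than $d$ vertices inside a $P_d$-free graph, a direct counting argument bounds the number of inequivalent types by $d^{O(d^2)}$, which absorbs the $d^{d^2}$ factor. We then apply a marking procedure: for every (anchor-tuple, type) pair, we mark a pool of $\Theta(d^{O(1)}\,k)$ representative vertices in $G-S$ that are capable of realizing the non-$S$ portion of some $P_d$ of that type, and delete everything in $G-S$ that remains unmarked, recording forbidden/required status via annotations. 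A straightforward count gives total size
\[
  |S|^{O(1)} \cdot d^{O(d^2)} \cdot d^{O(1)} \cdot k \;=\; O\paren[\big]{d^{d^2}\,d^{7}\,k^3},
\]
and since enumerating types is the dominant cost, the running time is $\poly(d^{d^2}\,n)$.

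The final step is safeness: any $P_d$-matching of size $k$ in $G$ must be realizable inside the kernel. For each path in a solution, identify its type and anchor tuple; since we retained $\Omega(k)$ candidate representatives per (tuple, type) pair and the solution has only $k$ paths, an auxiliary bipartite graph between solution paths and representative pools satisfies Hall's condition, yielding a pairwise disjoint reassignment of each path to marked vertices. \textbf{The main obstacle} is calibrating the type classification so that Hall's condition is provable simultaneously across all $k$ paths: distinct solution paths may share $S$-vertices and compete for representatives across several pools at once, so the type must encode enough local information to avoid such conflicts while still being enumerable in $d^{O(d^2)}$ possibilities. Balancing this enumeration against the cubic $k$-exponent is where the interplay between the $P_d$-free structure of $G-S$ and the anchor accounting becomes technical, and it is exactly this balance that justifies why the exponent of $k$ can be held at $3$ even as $d$ grows.
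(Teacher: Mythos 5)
Your plan correctly identifies the high-level skeleton — greedy maximal packing, a $d^{d^2}$ factor from enumerating path-segment types, and a marking/representative step — but it leaves the core of the proof unresolved, and the specific route you gesture at (Hall's condition) is not what the paper uses and does not obviously work.

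\textbf{The main gap is the safeness argument.} You propose to prove safeness by building an auxiliary bipartite graph between solution paths and representative pools and invoking Hall's condition, and you yourself flag calibrating this as ``the main obstacle.'' This calibration is exactly where the difficulty lies. Distinct solution paths can route several segments through the same component of $G-S$, and the segments from different paths can overlap one another in arbitrary ways inside that component; a bipartite matching between paths and pools of vertices does not capture the requirement that the chosen replacement \emph{path segments} (not just vertices) be pairwise disjoint and compatible with the rest of the solution. The paper avoids Hall entirely: it chooses a solution that minimizes the number of unmarked vertices it uses, and shows that if an unmarked vertex is used, a branching tree of depth $d^2$ built inside that component provides a replacement segment avoiding the $\le d^2$ vertices of the component already committed to the solution. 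The robustness-against-forbidden-sets property that the branching gives you is what makes the replacement argument go through; a pool of $\Theta(k)$ arbitrary representatives does not have this property.

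\textbf{Two further missing pieces.} First, the size calculation. You write $|S|^{O(1)}\cdot d^{O(d^2)}\cdot d^{O(1)}\cdot k = O(d^{d^2} d^7 k^3)$, but this is fitted to the target rather than derived: since $|S|=O(dk)$, the expression ``$|S|^{O(1)}$'' conceals the exponent of $k$. In the paper, the cubic exponent arises from a concrete intermediate set $N'$ of size $O(d^5 k^3)$ built from a DFS-forest of $G-S$: for each of the $O(d^3 k^2)$ requests over $S$, the set $N_{f,i}$ of DFS-subtree roots whose subtree satisfies the request has at most $dk$ leaves after the resolution step, and depth $\le d-1$, giving $O(d^2 k)$ vertices per request. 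This DFS-forest layer — which also bounds the number of $N'$-neighbors of each remaining component by $d-1$ — is what enables the branching depth $d^2$ and hence the $d^{d^2}$ factor. Your proposal has no analogue of this layer, and without it you cannot control how many components a solution path can enter or how many forbidden vertices a replacement must dodge.

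Second, the annotations are unspecified. The paper works with a weighted version of the problem and ``resolves'' a request $(f,i)$ by inserting an edge $f$ of weight $i$ (including self-loops used as dangling path ends). This is essential: once you delete components of $G-S$, the surviving graph must still remember that a deleted component could have supplied a weight-$i$ segment between two anchor vertices, and a weighted edge is exactly that memory. ``Recording forbidden/required status via annotations'' is too vague to carry the safeness argument, because you need the annotation to be something that a path in the kernel can actually \emph{use} in place of the deleted segment.

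In short: the scaffolding (maximal packing $+$ $d^{d^2}$ type enumeration $+$ marking) matches the paper, but the three load-bearing pieces — the DFS-forest decomposition yielding $N'$ and the $k^3$ bound, the weighted-edge annotation that makes deletion safe, and the branching-tree representative sets replacing your Hall's-condition idea — are absent, and the Hall route you propose in their place is not shown to work and likely does not.
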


\begin{proof}
	If $d < 2$, the problem can be computed in polynomial time, so let $d\ge 2$.

	We describe the kernelization algorithm~A step by step, interleaved with a
	discussion.
	The algorithm is not quite a kernelization algorithm for \pp{$P_d$-Matching},
	but it is one for the following weighted version of the problem:
	The input is a simple graph $G$ with possible self-loops and
	weights~$w_e\in[d]$ on the edges, and a number~$k$.
	We call a self-loop $\{v\}$ a \emph{dangling edge} as this makes more sense in
	the context of paths: We allow paths $e_1,\dots,e_\ell$ to start and end with
	dangling edges, but all other edges have to be regular edges; in the
	degenerate case that the path involves only one vertex, its dangling edge can
	be used only once.
	The goal of the weighted \pp{$P_d$-Matching} problem is to find~$k$
	vertex-disjoint paths such that the weight of each path is at least~$d$.
	The \emph{weight} of a path~$e_1,\dots,e_\ell$ is the sum~$\sum_{i\in[\ell]}
		w_{e_i}$ of its edge weights.
	\pp{$P_d$-Matching} is the restriction of this problem to graphs without
	dangling edges and with weight~$1$ on every edge.

	Let $(G,k)$ be the input for the algorithm.
	\begin{enumerate}[label=(A\arabic*)]
		\item\label{greedy pathpacking}
		      Greedily compute a maximal set of disjoint paths in $G$, each of which has
		      length at most~$d$ and weight at least~$d$.
		      If the path matching contains at least $k$ paths, we have found a solution
		      and are done with the kernelization.
		      Otherwise, let $M\subseteq V(G)$ be the set of vertices used by the paths.
		      We have $|M| \le (d+1) (k-1)$, and the graph $G' \coloneqq G-M$ does not contain
		      any path of weight~$\geq d$.
	\end{enumerate}
	We can find a path of weight~$\ge d$ deterministically in
	time~$\exp\paren[\big]{O(d)} \poly(n)$, so \ref{greedy pathpacking} runs in
	polynomial time.
	Let $G'_1,\dots,G'_t$ be a list of all connected components of $G'$, and
	select arbitrary vertices $r_i \in V(G'_i)$ for each $i\in[t]$.
	We call these vertices \emph{roots}.
	\begin{enumerate}[label=(A\arabic*),resume]
		\item
		      For each $i\in[t]$, compute a depth-first search tree~$T_i$ in $G'_i$
		      starting at the root $r_i$.
		      Let $F\coloneqq \dot\bigcup_{i\in[t]} T_i$ be the corresponding DFS-forest.
	\end{enumerate}
	We have $V(F)=V(G')$.
	The depth of each tree in~$F$ is at most~$d-1$ as otherwise~$M$ would not have
	been maximal.
	For every~$v\in V(F)$, we let $T_v$ be the subtree that consists of~$v$ and
	all of its successors in~$F$.
	Let $u,v\in V(F)$ so that neither $u$ is an ancestor of~$v$ nor vice-versa.
	Then $V(T_u)$ and $V(T_v)$ are disjoint; moreover, if $u$ and $v$ are in the same component of $G'$, then every path in~$G'$ from
	$V(T_u)$ to $V(T_v)$ leads through a common ancestor of~$u$ and~$v$.

	A \emph{request} is a tuple~$(f,i)$ where $f\subseteq V(G)$ with
	$\abs{f}\in\set{1,2}$ prescribes one or two vertices of~$G$, and $i\in[d]$ is
	a weight.
	For a set $S\subseteq V(G)$, we let $\paths^S(f,i)$ be the set of all paths
	in~$G[S\cup f]$ of weight~$\ge i$ such that:
	If $f=\{x,y\}$ for $x\ne y$, then the endpoints of the path are $x$ and $y$,
	and if $f=\{x\}$, then the path starts at~$x$.
	We say that $S$ \emph{satisfies} the request~$(f,i)$ if
	$\paths^S(f,i)\ne\emptyset$.
	Note that $f$ is an edge of~$G$ with~$w_f \ge i$ if and only if the empty set
	(and hence every set~$S$) satisfies~$(f,i)$;
	in this case, we say that the request~$(f,i)$ is \emph{resolved}.

	The idea of the kernelization algorithm is to resolve as many requests as we
	can possibly afford, meaning that we want to increase the weight of edges
	in~$G$ as much as possible.
	Doing so can never destroy existing solutions for the instance~$(G,k)$, but we
	need to carefully avoid introducing solutions in case~$(G,k)$ is a
	no-instance.

	We begin by resolving requests~$(f,i)$ with $f\subseteq M$.
	We let $N_{f,i}$ be the set of all vertices~$v\in V(F)$ such that $V(T_v)$
	satisfies the request $(f,i)$.
	If $u$ is an ancestor of $v$ in~$T$, then $V(T_u)\supseteq V(T_v)$, so all
	requests satisfied by a tree~$T_v$ are also satisfied by the trees~$T_u$ of
	its ancestors.
	Hence $F[N_{f,i}]$ is a subforest of~$F$ such that, for all $j\in[t]$, either
	$V(T_j)$ is disjoint from $N_{f,i}$ or the root $r_j$ of $T_j$ is contained in
	$N_{f,i}$.
	In a rooted tree, a \emph{leaf} is a vertex that does not have any children;
	in particular, the root is a leaf if and only if it is the only node.

	\begin{enumerate}[label=(A\arabic*),resume]
		\item
		      \label{marking}
		      For each unresolved request~$(f,i)$ with $f\subseteq M$:\\
		      If $F[N_{f,i}]$ has more than $dk$ leaves, we set $w_f \coloneqq i$.
	\end{enumerate}

	Here we use the convention that~$w_f=0$ if and only if~$f$ is not an edge, and
	so if $f$ was not an edge before, we add it and set its weight to~$i$.

	To see that \ref{marking} is safe, let $(f,i)$ be an unresolved request such
	that $F[N_{f,i}]$ has more than $dk$ leaves.
	Let~$G$ be the graph before changing the weight of~$w_f$,
	and let~$G^+$ be the graph with the new weight~$w'_f = i$.
	Since $(f,i)$ is unresolved in $G$, we have~${w'_f > w_f}$.
	We need to show that $G$ has $k$ disjoint paths of weight~$\ge d$ if and only
	if $G^+$ does.
	The only if direction is trivial since we only increase weights.
	For the if direction, let $P_1,\dots,P_k$ be disjoint paths of weight~$\ge d$
	in~$G^+$.
	If none of the paths use $f$, the path matching also exists in~$G$.
	Otherwise, suppose without loss of generality that $f$ appears in~$P_1$.
	The set $P= V(P_1)\cup\dots\cup V(P_k)$ has size at most~$dk$.
	Moreover, the sets $V(T_v)$ are mutually disjoint for all leaves~$v$ of
	$F[N_{f,i}]$ since $F$ is a forest.
	Hence the intersection of $P$ and~$V(T_v)$ is empty for at least one leaf~$v$
	of $F[N_{f,i}]$.
	Since $v\in N_{f,i}$, the set $V(T_v)$ satisfies the request~$(f,i)$.
	Hence we can replace $f$ in $P_1$ with a path in~$G$ from
	$\paths^{V(T_v)}(f,i)$, and we obtain a set of $k$ disjoint paths of
	weight~$\ge d$ in the graph~$G$.

	After \ref{marking} has been applied, every request~$(f,i)$ that is still
	unresolved has a corresponding forest~$F[N_{f,i}]$ with at most $dk$ leaves. The number of vertices of the forest~$F[N_{f,i}]$ is at most the number of its leaves times its depth plus 1.
	Since the forest~$F$ does not contain a path of length~$d$, the depth of each
	tree is at most~$d-1$, and the total number of vertices in $F[N_{f,i}]$ is at
	most $d^2k$.
	Let $N'$ be the union of all $N_{f,i}$ for unresolved requests~$(f,i)$
	with~$f\subseteq M$.
	That is,  $N'$ is the set of all vertices $v\in V(F)$ such that $T_v$
	satisfies an unresolved request $(f,i)$ over~$M$.
	There are at most~$\abs{M}^2 \cdot d \le d^3 k^2$ requests over~$M$,
	and~$N_{f,i}$ has size at most~$d^2k$ for each of them, hence the size of~$N'$ is at
	most~$d^5k^3$.
	The algorithm keeps all vertices from $M\cup N'$ in the kernel, so it remains
	to select $\poly(dk)$ vertices from the graph~$G'' \coloneqq G' - N'$.

	Let $C\subseteq V(G'')$ be any connected component of~$G''$.
	We claim that $C$ satisfies the following properties:
	\begin{enumerate}
		\item
		      $C$ does not satisfy any unresolved request over~$M$.
		\item
		      $|N(C)\cap N'| \leq d-1$.
	\end{enumerate}

	For these claims, first note that $C\subseteq V(F)$ and that $G[C]$
	is connected. It is easy to see that $C=V(T_v)$ for some vertex
	$v\in C$. Indeed, let $v$ be a vertex of $C$ whose depth in the DFS
	forest is minimum possible.  The fact that $v\in C$ implies that
	$v\not\in N'$, and it follows from the definition of $N'$ that $T_v$
	is also disjoint from $N'$. Every neighbor of $T_v$ in $G'$ is an
	ancestor $v$, and the ancestors of $v$ are not in $C$ by the minimal
	choice of $v$. Thus $C$ is exactly $V(T_v)$.  To see the first claim, recall that
	the definition of~$v\not\in N'$ is that $V(T_v)$ does not satisfy
	any unresolved request~$(f,i)$ over~$M$.

	For the second claim, since~$F$ is a DFS-forest, the only $G'$-neighbors that
	$V[T_v]$ can have lie on the unique path in~$F$ that leads from a root to $v$.
	In particular, since~${N'\subseteq V(F)}$, all neighbors of $V(T_v)$ that are
	contained in $N'$ must lie on this root-to-$v$ path.
	Finally, any root-to-$v$ path has at most~$d$ vertices including~$v$, which
	implies that~${N(C)\cap N'}$ has size at most $d-1$.

	The neighborhood of each component $C$ is in $M\cup N'$.
	In the rest of the proof, we analyze how a component $C$ can satisfy requests $(f,i)$ with $f\subseteq M\cup N'$ but $f\cap N'\neq\emptyset$.
	Let us give an intuitive reason why we consider only $f\cap N'\neq \emptyset$, that is, ignore requests~$(f,i)$ with $f\subseteq M$.
	A solution to the generalized $P_d$-matching
	instance~$(G,k)$ can use vertices from~$C$ in some number of segments of the paths.
	If such a segment is a path from $\paths^C(f,i)$ for $f\subseteq M$,
	then this segment must correspond to a resolved request~$(f,i)$ since $C$ does
	not satisfy any unresolved requests over~$M$.
	Since the request is resolved, we can replace this segment through~$C$ with
	the edge~$f$ of weight~$\ge i$.
	Thus as we shall see, it can be assumed by a minimality argument that each path segment that
	the solution uses in~$C$ enters through a vertex of~$N'$; they may then end
	within~$C$, exit back out to~$N'$, or exit out to~$M$. Thus it is indeed sufficient to analyse how $C$ can satisfy requests with at least one endpoint in $N'$.
	Furthermore, since there are at most $d-1$ vertices in $N(C)\cap N'$, the solution can
	route at most $d-1$ path segments through~$C$ in total.
	Moreover, all unresolved requests~$(g,i)$ with $g\subseteq M\cup N'$ that $C$
	can satisfy must have $g\subseteq \paren*{N(C)\cap N'} \cup M$ and $g\cap N'
		\ne \emptyset$.
	Therefore, we obtain the following property:
	\begin{enumerate}[resume]
		\item
		      The number of unresolved requests over $M\cup N'$ that~$C$ satisfies is at
		      most $\abs{N(C) \cap N'}\cdot \abs{N(C) \cap (N'\cup M)} \cdot d \le
			      O(d^3k)$.
	\end{enumerate}

	Before reducing the size of each connected component of~$G''$, the algorithm
	first reduces their number by resolving requests~$(g,i)$ over~$M\cup N'$ that
	could be satisfied by~$C$.

	\begin{enumerate}[label=(A\arabic*),resume,start=4]
		\item\label{Nprime marks}
		      For all unresolved requests $(g,i)$ over~$M\cup N'$:
		      If $G''$ has more than $dk$ connected components~$C$ that satisfy the
		      request~$(g,i)$, we set $w_g \coloneqq \max\set{w_g,i}$.
	\end{enumerate}
	The rationale for why these request can be resolved is as in~\ref{marking}:
	Let $P\subseteq V(G)$ be any set of size at most~$dk$ and let $(g,i)$ be a
	newly resolved request.
	Then~$P$ has an empty intersection with at least one component~$C$ that
	satisfies the request~$(g,i)$.
	This means any size-$k$ solution using the edge~$g$ with its new edge weight
	can be rerouted through an available component~$C$ that satisfies the
	request~$(g,i)$.

	We call a component~$C$ \emph{useful} if there is an unresolved request
	$(g,i)$ over $M\cup N'$ that is satisfied by~$C$.
	There are at most $O(d^3k)$ such request.
	Due to \ref{Nprime marks}, each such unresolved request gives rise to at
	most~$dk$ useful components.
	Hence the number of useful components is at most $O\paren*{d^4 k^2}$.
	Let~$N''$ be the set of all vertices that belong to a useful connected
	component of~$G''$.

	\begin{enumerate}[label=(A\arabic*),resume]
		\item
		      Set $G \coloneqq G[M\cup N' \cup N'']$.
	\end{enumerate}

	In other words, this step deletes all components that are not useful.
	To see that doing so is safe, let $G$ be the current graph and let $C$ be a
	component that is not useful.
	The claim is that $G$ has a generalized $P_d$-matching if and only if $G-C$
	does.
	The if direction is trivial since any generalized $P_d$-matching in $G-C$ also
	exists in $G$.
	For the only if direction, let $P_1,\dots,P_k$ be paths in $G$ of weight at
	least~$d$ each, and let them be chosen in such a way that the number of
	vertices of~$C$ that they use is minimized.
	We claim that they actually don't use any vertex of~$C$.

	Assume for contradiction that the solution had a path segment that uses
	vertices of~$C$.
	The path segment cannot have weight~$\ge d$ since $C$ does not contain
	any paths of weight~$\ge d$.
	Hence the path segment has some weight~$i<d$, and it must enter~$C$ from a
	vertex $x$ outside of~$C$, and it may leave to a vertex~$y$ outside of~$C$; in
	particular,~$C$ satisfies the request~$(g,i)$ for some $f\subseteq M\cup N'$.
	The request $(g,i)$ is not resolved, for if it was, we could replace the path
	segment through~$C$ by the edge~$g$ with $w_g\ge i$, which would
	make use of strictly fewer vertices of~$C$.
	Since $g\subseteq N(C)\subseteq M\cup N'$ holds, $(g,i)$ is unresolved, and $(g,i)$
	is satisfied by~$C$, we arrive at a contradiction with the fact that~$C$ is
	not useful.
	Overall, if there is a solution, then there is a solution that does not use
	any vertex of~$C$, and so it is safe to delete~$C$.

	At this point, we are keeping the $\poly(dk)$ vertices of~$M\cup N'$ in the
	kernel as well as $\poly(dk)$ useful components.
	The only issue left to resolve is that useful components may still contain too
	many vertices.
	Recall that at most~$d-1$ segments of paths from a solution can intersect~$C$.
	Since every path has length at most~$d$, every solution~$P$ uses at
	most~$(d-1)(d+1)\le d^2$ vertices of~$C$.
	We will replace~$C$ with a graph~$C'$ of size at most~$f(d)$ such that any set
	of at most $d-1$ requests can be satisfied by~$C$ if and only if it can be
	satisfied by~$C'$.
	We construct~$C'$ as an induced subgraph of~$C$ by using representative sets.

	\begin{enumerate}[label=(A\arabic*),resume]
		\item\label{branching}
		      For each unresolved request~$(g,i)$ over $M\cup N'$ and
		      for each useful connected component~$C$:
		      \begin{enumerate}
			      \item
			            Set $S\coloneqq\emptyset$.
			      \item
			            While $|S|\le d^2$:
			            \begin{enumerate}
				            \item
				                  Find an arbitrary path~$p''$ from $\paths^{C-S}(g,i)$ and mark the
				                  vertices $V(p'')\setminus g$ as useful.
				                  If no such path exists, break the while loop.
				            \item
				                  Branch on $v\in V(p'') \setminus g$ and add $v$ to $S$.
			            \end{enumerate}
		      \end{enumerate}
		\item\label{delete useless vertices}
		      Delete all vertices of useful components that have never been marked as
		      useful in any branch for any request.
	\end{enumerate}

	Before we prove the correctness of \ref{delete useless vertices}, let us first
	analyze the running time and kernel size.  For given $C$ and $(g,i)$,
	\ref{branching} produces a branching tree of depth at most~$d^2$ since~$|S|$
	increases by one in every branching step.
	The fan-out of the branching tree is equal to the size of $V(p)\setminus g$,
	which is at most~$i$ and thus at most~$d$.
	Therefore, the branching tree has at most $O(d^{d^2})$ vertices, and at most
	$O(d^{d^2})$ vertices of~$C$ get marked as useful; this is the number of
	vertices we keep for the component~$C$ and the unresolved request~$(g,i)$.
	Since the number of unresolved requests that~$C$ satisfies is at most $O(d^3
		k)$ and since the number of useful components is at most $O(d^4k^2)$, the
	total number of vertices that remain after~\ref{delete useless vertices} is
	$O(d^{d^2} d^{7} k^3)$.

	For the correctness, let $G$ be the graph before applying~\ref{delete useless
		vertices} and let $G^-$ be the graph after.
	The claim is that $G$ has a solution if and only if $G^-$ does.
	The if direction is trivial since $G^-$ is a subgraph of~$G$.
	For the only if direction, let $P_1,\dots,P_k$ be disjoint paths of~$G$, each
	of weight~$\ge d$, and let this solution be chosen as to minimize the number $u$ of useless vertices being used and, subject to that, to minimize the number $o$ of other vertices of~$G''$ that are being used by the solution.
	If $u=0$ zero, then $P_1,\dots,P_k$ is also a solution of~$G^-$, what we wanted to show.
	Assume for contradiction that $u>0$ and the solution uses a useless vertex~$v$, and let
	$C$ be the useful connected component of~$G''$ that contains~$v$.

	Let $(g,i)$ for $g\subseteq M\cup N'$ be the request to~$C$ that the solution
	$P_1,\dots,P_k$ satisfies with the path segment $p=e_1,\dots,e_\ell$ that
	uses~$v\in C$.
	That is, $p$ is a connected subgraph of some~$P_j$ and satisfies the following
	properties:
	\begin{enumerate}
		\item
		      $p\in\paths^C(g,i)$ and
		\item
		      if $p$ contains an endpoint of~$P_j$, then $\abs{g}=1$; otherwise
		      $\abs{g}=2$.
	\end{enumerate}
	We remark that a path segment~$p$ cannot contain both endpoints of $P_j$
	since~$C$ does not contain paths of weight~$\ge d$.
	If~$(g,i)$ was resolved, we could replace the corresponding path segment by
	the edge $g$ of weight~$w_g\geq i$, which would lead to a solution with fewer
	useless vertices or fewer useful vertices of~$G''$.
	Since we minimized both these numbers, the request~$(g,i)$ is unresolved and
	thus considered in \ref{branching}.

	We now show that~$p$ can be replaced by another path~$p' \in \paths^C(g,i)$ so
	that $p'$ only uses vertices of~$C$ that were marked as useful; this then
	contradicts the fact that the solution minimized the number of useless
	vertices.
	Thus the solution uses only useful vertices.
	Let $T\subseteq C$ be the set of all vertices of~$C$ that are used by the
	solution, but not including the vertices of the segment~$p$.
	That is, $T \coloneqq \paren[\Big]{ \paren*{V(P_1)\cup\dots\cup V(P_k)} \setminus
			V(p) } \cap C$.
	By previous considerations, the size of $T$ is at most~$d^2$.
	Clearly, $\paths^{C\setminus T}(g,i)$ contains~$p$, and we prove that it also
	contains a path~$p'$ that only uses useful vertices.
	Consider the branching tree that \ref{branching} builds for the
	request~$(g,i)$ and the component~$C$.
	We start at the root of the branching tree and we follow a single branch in
	our argument; we only follow a branch if it satisfies~$S\subseteq T$.
	In step b) ii) of such a branch, the set $\paths^{C\setminus S}(g,i)$ is not
	empty, for it contains the path~$p$.
	Let $p''$ be the path that ends up being selected and whose vertex
	set~$V(p'')$ gets marked as useful.
	We either have $V(p'')\cap T=\emptyset$, and so we have found our
	path~$p'=p''$, or we have $V(p'') \cap T \neq \emptyset$.
	In the latter case, we follow a branch in which~$v\in V(p'')\cap T$ gets added
	to~$S$; this clearly preserved the invariant~$S\subseteq T$.
	The process stops at the latest when $S=T$ holds, in which case $V(p'')\cap
		T=\emptyset$ is guaranteed and $p'=p''$ is the path we want to find.
	Overall, we replaced the segment~$p$ with a segment~$p'$ that uses only useful
	vertices, a contradiction.
	Hence the solution $P_1,\dots,P_k$ only uses useful vertices, which means it's
	also a solution of~$G^-$.
\end{proof}
Note that \pp{$P_d$-Matching} is \NP-hard, and in particular, the reduction takes
time $f(d) \cdot \poly(n)$.
Thus we can compose the compression above with the \NP-hardness reduction to
obtain kernels of size $f(d)\cdot \poly(k)$ for \pp{$P_d$-Matching}.
We can also obtain a kernel more directly by modifying the argument above as
follows: Every time we increase the weight of an edge~$g$ in step (A3) or (A4), we store a set $V_g$
of at most $d^2k$ vertices of~$G'$ so that $\paths^{V_g}(g,w_g)$ contains at
least $dk+1$ paths that are internally disjoint.
These vertices witness that we were allowed to increase the weight to~$w_g$.
In the end, we would also keep the vertex sets~$V_g$ in the kernel.
This does not change the asymptotics for the number of vertices in the kernel.

Following the notation introduced in Section~\ref{sec:mult-kern}, Theorem~\ref{thm:pathpacking} and the arguments in the previous paragraph show that \pp{$P_d$-Matching} has uniform kernels of type B2. By Lemma~\ref{lem:B2toA2B1}, it follows that \pp{$P_d$-Matching} has kernels of type A2 and B1 as well, proving Theorem~\ref{th:uniformpath} stated in the the introduction.

As \pp{$P_d$-Matching} has kernels of type A2 and B1, it is natural to wonder whether
it has kernels of type A1.
Restricting the problem to $k=1$ yields the $d$-Path problem.
An A1-kernelization for \pp{$P_d$-Matching} would imply a polynomial kernel for
$d$-Path, which it does not have unless \hypofails.

\section*{Acknowledgements}
We thank Martin Grohe and Dieter van Melkebeek for
valuable comments on previous versions of this paper.

\printbibliography

\appendix
\section{Sunflower Kernelization for Set Matching}
\label{appendix sunflower setmatching}

We sketch a modern proof of Theorem~\ref{thm:setmatching-kernels},
that  \setmatching{d} has kernels with $O(k^d)$ hyperedges.
\begin{proof}[Sketch]
	A \emph{sunflower} with $p$ petals is a set of $p$ hyperedges whose
	pairwise intersections are equal.
	By the sunflower lemma, any $d$-uniform hypergraph~$G$ with more than
	$d!\cdot r^d$ edges has a sunflower with~$r+1$ petals~\parencite{ER60sunflower} and in fact such a sunflower can be found in linear time.
	We set $r=dk$ and observe that, in any sunflower with $r+1$ petals, we
	can arbitrarily choose an edge $e$ of the sunflower and remove it from
	the graph.
	To see this, assume we have a matching $M$ of $G$ with~$k$ edges.
	If~$M$ does not contain $e$, then $M$ is still a matching of size~$k$
	in $G-e$.
	On the other hand, if~$M$ contains $e$, there must be a petal that
	does not intersect~$M$ since we have $dk+1$ petals but~$M$ involves
	only~$dk$ vertices.
	Thus we can replace $e$ in the matching by the edge that corresponds
	to that petal, and we obtain a matching of $G'$ that consists of~$k$
	hyperedges.
	This establishes the completeness of the reduction.
	The soundness is clear since any matching of $G'$ is a matching
	of~$G$.
\end{proof}

\section{Multicolored Biclique}
\label{appendix:multicolor-biclique}

\begin{lemma}
	\pp{Multicolored Biclique} is \NP-complete.
\end{lemma}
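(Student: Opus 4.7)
The plan is to establish \NP-membership by guessing a $2k$-tuple of vertices, one from each $U_i$ and each $W_j$, and verifying in polynomial time that the induced subgraph is $K_{k,k}$. The interesting direction is \NP-hardness, for which I would reduce from \pp{Multicolored Clique}: given a graph $G$ with a vertex partition $V(G) = V_1 \dotcup \dots \dotcup V_k$, decide whether $G$ contains a $k$-clique with one vertex from each $V_i$. This problem is classically \NP-complete. Furthermore, I would assume without loss of generality that $G$ contains no edges inside any single $V_i$, since such edges can never participate in a multicolored clique and may be discarded in a preprocessing step.

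The reduction then builds a bipartite graph $B$ on $U \dotcup W$ with $U = \{u_v : v \in V(G)\}$ and $W = \{w_v : v \in V(G)\}$, equipped with blocks $U_i = \{u_v : v \in V_i\}$ and $W_i = \{w_v : v \in V_i\}$. As edges of $B$, I would include an \emph{identity edge} $\{u_v, w_v\}$ for every $v \in V(G)$, together with both cross edges $\{u_v, w_{v'}\}$ and $\{u_{v'}, w_v\}$ for every edge $\{v, v'\}$ of $G$ (which by the assumption always joins different color classes). The algorithm outputs $B$ with parameter $k$ together with the partitions $U_1, \dots, U_k$ and $W_1, \dots, W_k$.

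For correctness, the forward direction is easy: a multicolored $k$-clique $\{v_1, \dots, v_k\}$ of $G$ with $v_i \in V_i$ yields the biclique on $\{u_{v_i}\}_{i \in [k]} \cup \{w_{v_j}\}_{j \in [k]}$, with identity edges handling the diagonal $i = j$ and the copies of clique edges handling the off-diagonal pairs. The main obstacle, such as it is, sits in the reverse direction: from a biclique $K_{k,k}$ using $u_{v_i} \in U_i$ and $w_{v'_i} \in W_i$, I have to deduce that $v_i = v'_i$, so that a single vertex of $V_i$ can be extracted for the clique. This is exactly why I would strip intra-color edges from $G$: the biclique demands an edge $\{u_{v_i}, w_{v'_i}\}$ between two vertices in $V_i$, and after the preprocessing the only such edges in $B$ are the identity edges, forcing $v_i = v'_i$. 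Once this identification is in place, the off-diagonal biclique edges correspond bijectively to edges of $G$ among $\{v_1, \dots, v_k\}$, so this set is the desired multicolored clique. The whole construction is polynomial-time, completing the reduction.
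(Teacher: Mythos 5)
Your proposal is correct, and it uses the same core gadget as the paper: a bipartite graph on a left copy and a right copy of the vertex set, with ``diagonal'' edges identifying the two copies of each vertex and ``off-diagonal'' edges mirroring edges of the source graph, so that a correctly colored $K_{k,k}$ forces matched pairs on the diagonal and hence a clique in the original graph. The difference is the starting point. The paper reduces directly from \pp{Clique}: it builds the multicoloring itself, taking $k$ copies $u_{i,j},w_{i,j}$ for $i\in[k]$, $j\in[n]$, and uses the condition $i\neq i'$ in place of your ``remove intra-class edges'' preprocessing to kill same-class adjacencies. You instead reduce from \pp{Multicolored Clique}, which lets you work with one copy of the vertex set and keep the construction smaller, at the cost of invoking the \NP-hardness of \pp{Multicolored Clique} as a known fact; the standard proof of that fact is essentially the ``make $k$ copies'' step that the paper performs inline. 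Both are fine, and the correctness arguments are equivalent once one notes, as you do, that disjointness of the $V_i$ forces the chosen $v_i$ to be distinct, and that removing intra-class edges makes identity edges the only same-class edges in $B$.
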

\begin{proof}
	Let graph $G$ and integer $k$ be an instance of \pp{Clique}. Let
	$\{v_i\mid 1\le i \le n\}$ be the vertex set of $G$. We construct
	a bipartite graph $B$ on vertex set $\{u_{i,j},w_{i,j}\mid 1\le i \le
		k, 1\le j \le n\}$. We make vertices $u_{i,j}$ and $v_{i',j'}$ adjacent if and only if
	\begin{itemize}
		\item either $(i,j)=(i',j')$ or
		\item  $i\neq i'$ and vertices $v_j$ and $v_{j'}$ are adjacent.
	\end{itemize}

	Consider the partitions $U=U_1\cup\dots\cup U_k$ and
	$W=W_1\cup \dots\cup W_k$,
	where $U_i=\{u_{i,j}\mid 1\le j \le n\}$
	and $W_i=\{w_{i,j}\mid 1\le j \le n\}$.
	We claim that $B$  contains a biclique $K_{n,n}$
	respecting these partitions if and only if $G$ contains a $k$-clique.
	It is easy
	to see that if $\{v_{a_1}, \dots, v_{a_k}\}$ is a clique in~$G$, then
	$\{u_{1,a_1},\dots,u_{k,a_k},w_{1,a_1},\dots,w_{k,a_k}\}$ is a biclique of
	the required form in $B$. On the other hand, if
	$\{u_{1,a_1},\dots,u_{k,a_k},w_{1,b_1},\dots,w_{k,b_k}\}$ is such a
	biclique, then
  $a_i=b_i$ for every $1\le i\le k$; otherwise~$u_{i,a_i}$ and~$w_{i,b_i}$
  are not adjacent.
	It follows that $\{v_{a_1},\dots,v_{a_k}\}$ is a clique in $G$:
	if~$v_{a_i}$ and~$v_{a_{i'}}$ are not adjacent in $G$ (including the
	possibility that $a_i=a_{i'}$), then $u_{i,a_i}$ and
	$w_{i',b_{i'}}=w_{i',a_{i'}}$ are not adjacent in $B$.
\end{proof}

\section{Lower Bounds for Vertex Cover in Hypergraphs}
\label{appendix:dVC}
We present an elementary reduction from $\OR(\pp{$3$-Sat})$ to
\pp{$d$-Vertex Cover}, i.e., the vertex cover problem in $d$-uniform
hypergraphs.
The $d$-partiteness flavor is crucial in the reduction, but it is not
necessary to explicitly spell out the $d$-partite problem $L$ as we did with \pp{Multicolored Biclique} before.

Recall that a subset $S$ of vertices in a $d$-partite hypergraph is a clique if all of the~$\binom{|S|}{d}$ size-$d$ subsets of $S$ are edges of the hypergraph. Similarly, $S$ is an independent set if none of these $\binom{|S|}{d}$ sets is an edge of the hypergraph. The set $S$ is a vertex cover if every edge of the hypergraph has a non-empty intersection with~$S$. The complement $\overline{H}$ of a $d$-uniform hypergraph contains edge $e$ if and only if $H$ {\em does not} contain $e$. Analogously to graphs, $H$ has an independent set of size $k$ if and only if $\overline{H}$ has a clique of size $k$. Moreover, an $n$-vertex hypergraph has an independent set of size $k$ if and only if it has a vertex over of size $n-k$.

\begin{theorem}[\cite{DellVanMelkebeek}]
  \label{thm:dvertexcover}
	Let $d\geq 2$ be an integer.
  If $\hypo$ holds, then \pp{$d$-Vertex Cover} does not have
  kernels of size~$O(k^{d-\epsilon})$.
\end{theorem}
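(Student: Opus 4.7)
The plan is to mimic the $d=2$ argument of Theorem~\ref{theorem:vertexcover}, replacing \pp{Multicolored Biclique} with a $d$-partite $d$-uniform hyperclique analog. Let $L_d$ denote the problem: given a $d$-partite $d$-uniform hypergraph with parts $U_1,\dots,U_d$, each further partitioned into $k$ blocks of common size $n$, decide whether one can pick one vertex from each of the $dk$ blocks such that all $k^d$ cross-class hyperedges (one block per color class) are present. A straightforward adaptation of the argument in Appendix~\ref{appendix:multicolor-biclique} (reducing from hypergraph clique) shows that $L_d$ is \NP-complete for every $d\geq 2$, and I will apply Lemma~\ref{lemma:kernel-proxy} with $L=L_d$ and $\Pi=\pp{$d$-Vertex Cover}$.

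Given $t$ instances $H_1,\dots,H_t$ of $L_d$ on a common block structure, I set $m:=t^{1/d}$ after standard padding and introduce $dm$ disjoint \emph{panels} $U^{j,a}$ for $j\in[d]$ and $a\in[m]$, each a copy of $U_j$ carrying $k$ blocks of $n$ vertices. For every index vector $i=(i_1,\dots,i_d)\in[m]^d$ I embed $H_i$ into the $d$-tuple of panels $(U^{1,i_1},\dots,U^{d,i_d})$ via the natural block-respecting identification. The $d$-uniform hypergraph $G$ is obtained by taking all instance hyperedges coming from the embedded $H_i$ together with an \emph{auxiliary hyperedge} on every $d$-set $\{v_1,\dots,v_d\}$ satisfying three conditions: the $v_i$ lie in $d$ pairwise distinct blocks; any two $v_i$ sharing a color class lie in the same panel; and the $d$ vertices together touch strictly fewer than $d$ color classes. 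The output of the reduction is the complement hypergraph $\overline{G}$ together with the parameter $k':=|V(G)|-dk$.

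The core of the argument will be to prove that $G$ contains a clique on $dk$ vertices (a $dk$-set whose every $d$-subset is a hyperedge) if and only if some $H_i$ is a YES-instance. For the forward direction, any such clique $S$ must be contained in a single $d$-tuple of panels $U^{1,a_1}\cup\dots\cup U^{d,a_d}$: two vertices of $S$ in the same color class but different panels would, together with any $d-2$ other vertices of $S$, form a $d$-subset which by construction is not a hyperedge. The block-distinctness condition in the auxiliary hyperedges then forces $|S\cap U^{j,a_j}|\le k$ with pairwise distinct blocks, and $|S|=dk$ pushes equality and one-vertex-per-block. Finally, the $k^d$ cross-class $d$-subsets of $S$ must all be instance hyperedges of $H_{(a_1,\dots,a_d)}$, which is exactly the defining condition of $L_d$; the converse direction is routine. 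Complementation then yields that $\overline{G}$ has a vertex cover of size $k'$ if and only if some $H_i$ is a YES-instance, and since $|V(G)|=dmkn=t^{1/d}\cdot\poly(s)$ we have $k'\le t^{1/d}\cdot\poly(s)$, so Lemma~\ref{lemma:kernel-proxy} delivers the claimed lower bound.

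The main obstacle will be calibrating the auxiliary hyperedges: they must be generous enough that the canonical $dk$-subset arising from any YES-certificate of some $H_i$ automatically completes to a clique in $G$, yet restrictive enough that no $dk$-clique can straddle two panels of the same color class or violate block-distinctness. The three conditions above are chosen precisely to enforce this balance, and once these two forcing properties are verified, the remainder is a routine vertex count combined with the independent-set/vertex-cover complementation trick used in the proof of Theorem~\ref{theorem:vertexcover}.
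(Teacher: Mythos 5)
Your proposal is correct, and it takes a genuinely different route from the paper's Appendix~\ref{appendix:dVC}. The paper deliberately avoids formulating the $d$-dimensional analog of \pp{Multicolored Biclique}: it instead reduces directly from $\OR(\pp{$3$-Sat})$ by building a ``consistency graph'' $P$ on partial assignments of exactly three variables, putting copies of $V(P)$ in color class~1 and a single vertex in each group of classes $2,\dots,d$, and encoding clause-satisfaction into the hyperedges. (The paper even remarks, just before the appendix, that ``it is not necessary to explicitly spell out the $d$-partite problem $L$ as we did with \pp{Multicolored Biclique}.'') You do the opposite: you spell out the $d$-partite analog $L_d$ -- a multicolored $d$-dimensional hyperclique problem -- prove its \NP-hardness by adapting Appendix~\ref{appendix:multicolor-biclique}, and then run the panel-packing construction of Theorem~\ref{theorem:vertexcover} verbatim for general $d$, with your three carefully calibrated auxiliary-hyperedge conditions playing the role of the complete multipartite edges added inside $U_{(i,j)}$ and $W_{(i,j)}$ in the $d=2$ case. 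Both routes are sound; yours has the advantage of being the literal generalization of Theorem~\ref{theorem:vertexcover}, making the uniformity of the technique transparent, while the paper's version avoids introducing and proving hardness for a new auxiliary problem at the cost of a slightly more bespoke (SAT-specific) hyperedge definition. One place where your write-up is brief is the \NP-completeness of $L_d$; the adaptation is indeed straightforward (reduce from $d$-uniform hypergraph clique, declaring a tuple $(u^{(1)}_{i_1,\ell_1},\dots,u^{(d)}_{i_d,\ell_d})$ a hyperedge iff coinciding block indices force coinciding vertex labels and all-distinct block indices force the corresponding vertex set to be a hyperedge of the source hypergraph), but since the entire reduction hinges on it, you should spell it out rather than invoke it by analogy. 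Your auxiliary-hyperedge conditions and the forcing argument (a $dk$-clique must sit in one $d$-tuple of panels, one vertex per block) check out, and the vertex count $dmkn=t^{1/d}\cdot\poly(s)$ gives the required bound on the vertex-cover parameter after complementation.
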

\begin{proof}
	Let $\varphi_1,\dots,\varphi_t$ be $t$ instances of \pp{$3$-Sat},
	each of size $s$.
	Without loss of generality, assume that the set of variables
	occurring in the formulas is a subset of $[s]$.
	Let~$P$ be the consistency graph on partial assignments that assign
	exactly three variables of~$[s]$.
	More precisely, the vertex set of $P$ is the set of functions
	$\sigma:S\to\{0,1\}$ for sets $S\in\binom{[s]}{3}$,
	and two partial assignment $\sigma,\sigma' \in V(P)$ are adjacent
	in~$P$ if and only if $\sigma$ and $\sigma'$ are consistent, i.e.,
	they agree on the intersection of their domains.
	Now the cliques of size~$\binom{s}{3}$ in $P$ are exactly the
	cliques that are obtained from full assignments $[s]\to\{0,1\}$ by
	restriction to their three-variable sub-assignments.
	Furthermore, $P$ has no clique of size larger than~$\binom{s}{3}$.

	We construct a $d$-uniform hypergraph $G$ on $n=t^{1/d+o(1)}\poly(s)$ vertices that has a clique with some number $k$ of vertices if and only if
	some~$\varphi_i$ is satisfiable. We known that this is further equivalent to $\overline{G}$ having an independent set of size $k$, or to $\overline{G}$ having a vertex cover of size $n-k=t^{1/d+o(1)}\poly(s)$.  Thus  such a reduction is sufficient to apply  Lemma~\ref{lemma:kernel-proxy} and to prove the lower bound for  \pp{$d$-Vertex Cover}.

	We use a suitable bijection between $[t]$ and $[t^{1/d}]^d$, and we
	write the $\varphi_i$'s as $\varphi_{b_1,\dots,b_d}$ for
	$(b_1,\dots,b_d)\in[t^{1/d}]^d$.
	The vertex set of $G$ consists of $d\cdot t^{1/d}$ groups of
	vertices $V_{a,b}$ for $a\in[d]$ and $b\in [t^{1/d}]$.
	We consider each set $V_{1,b}$ as a copy of the vertex set of $P$,
	and for $a>1$, we let $|V_{a,b}|=1$ for all $b$.
	A subset $e$ of $d$ elements of $V(G)$ is a hyperedge in~$G$ if and
	only if the following properties hold:
	\begin{enumerate}
		\item each $a\in [d]$ has at most one $b=b_a\in [t^{1/d}]$ for
		      which $e\cap V_{a,b} \neq \emptyset$,
		\item $e\cap V_{1,b_1}$ corresponds to a clique in~$P$, and
		\item if $e\cap V_{a,b_a}\neq\emptyset$ for all $a$, then the
		      (unique) partial assignment $\sigma\in e\cap V_1$ does not set any clause
		      of $\varphi_{b_1,\dots,b_d}$ to false.
	\end{enumerate}
	Edges with $|e\cap V_{1,b_1}|>1$ play the role of checking the
	consistency of partial assignments, and edges with
	$|e\cap V_{a,b_a}|=1$ for all $a$ select an instance~$\varphi_b$ and
	check whether that instance is satisfiable.

	We set $k=\binom{s}{3}+d-1$.
	For the completeness of the reduction, let $\sigma:[s]\to\{0,1\}$ be
	a satisfying assignment of $\varphi_{b_1,\dots,b_d}$.
	Let $C$ be the set of all three-variable sub-assignments of~$\sigma$
	in the set $V_{1.b_1}$, and we also add the $d-1$ vertices of
	$V_{2,b_2}\cup\dots\cup V_{d,b_d}$ to $C$.
	We claim that~$C$ induces a clique in $G$.
	Let $e$ be a $d$-element subset of $C$, we show that it is a
	hyperedge in $G$ since it satisfies the three conditions above.
	Clearly, $e\subseteq C$ is fully contained in
	$V_{1,b_1}\cup\dots\cup V_{d,b_d}$ and satisfies the first
	condition.
	The second condition is satisfied since $e\cap V_{1,b_1}$ contains
	only sub-assignments of the full assignment $\sigma$.
	The third condition holds since $\sigma$ is a satisfying
	assignment and therefore none of its sub-assignments sets any clause
	to false.

	For the soundness, let $C$ be a clique of size $k$ in $G$.
	By the first property, $C$ intersects at most one set $V_{a,b_a}$
	for all $a$.
	Also, the intersection $C\cap V_{1,b_1}$ induces a clique in $G$ and
	therefore corresponds to a clique of $P$.
	By the properties of $P$, this intersection can have size at most
	$\binom{s}{3}$, and the only other vertices $C$ can contain are the
	$d-1$ vertices of $V_{2,b_2}\cup\dots\cup V_{d,b_d}$.
	Thus, we indeed have $|C\cap V_{1,b_1}|=\binom{s}{3}$ and
	$|C\cap V_{2,b_2}|=\dots=|C\cap V_{d,b_d}|=1$.
	The properties of $P$ imply that the first intersection corresponds
	to some full assignment $\sigma:[s]\to\{0,1\}$.
	By the third property, no three-variable sub-assignment sets any
	clause of $\varphi_{b_1,\dots,b_d}$ to false, so $\sigma$
	satisfies the formula.

	Thus, $(G,k) \in \pp{$d$-Clique}$ if and only if
	$(\varphi_1,\dots,\varphi_t) \in \OR(\pp{$3$-Sat})$. Since $G$
	and $k$ are computable in time polynomial in the bitlength of
	$(\varphi_1,\dots,\varphi_t)$ and
	$n\coloneqq |V(G)|\leq t^{1/d} \cdot \poly(s)$,
	we have established the polynomial-time mapping reductions that are required to apply
  Lemma~\ref{lemma:kernel-proxy}.
  Thus, if $\hypo$ holds, \pp{$d$-Clique} does not have kernels of size~$O(n^{d-\epsilon})$.
Due the bijection mentioned in the paragraph before the statement of Theorem~\ref{thm:dvertexcover}, between $k$-cliques in~$G$, $k$-independent sets in the complement hypergraph~$\overline{G}$, and $(n-k)$-vertex-covers in~$\overline{G}$, the claim follows, namely that \pp{$d$-Vertex-Cover} does not have kernels of size~$O(k^{d-\epsilon})\le O(n^{d-\epsilon})$ if \hypo{}.
\end{proof}
\end{document}